\definecolor{gray2}{HTML}{ededed}
\definecolor{gray3}{HTML}{F5F5F5}
\definecolor{RoyalAzure}{rgb}{0.0, 0.22, 0.66}
\definecolor{lightgray}{gray}{0.9}
\definecolor{lightgray}{gray}{0.9}
\definecolor{lightgreen}{rgb}{0.88, 1, 0.88}
\definecolor{lightred}{rgb}{1, 0.88, 0.88}
\definecolor{lightblue}{rgb}{0.88, 0.94, 1}
\newcommand\norm[1]{\lVert#1\rVert}
\newtheorem{proposition}{Proposition}
\newtheorem{example}{Example}[section]
\newcommand{\cmark}{\textcolor{OliveGreen}{\ding{51}}}  
\title{Feature-Based Belief Aggregation for Partially Observable Markov Decision Problems\footnote{
This work was carried out at the Fulton School of Computing and Augmented Intelligence, Arizona State University, Tempe, AZ. It was also supported by the Swedish Research Council under contract 2024-06436.}}
\author{Yuchao Li, Kim Hammar, and Dimitri Bertsekas}
\begin{document}
\maketitle
\begin{abstract}
We consider a finite-state partially observable Markov decision problem (POMDP) with an infinite horizon and a discounted cost, and we propose a new method for computing a cost function approximation that is based on features and aggregation. In particular, using the classical belief-space formulation, we construct a related Markov decision problem (MDP) by first aggregating the unobservable states into feature states, and then introducing representative beliefs over these feature states. This two-stage aggregation approach facilitates the use of dynamic programming methods for solving the aggregate problem and provides additional design flexibility. The optimal cost function of the aggregate problem can in turn be used within an on-line approximation in value space scheme for the original POMDP. We derive a new bound on the approximation error of our scheme. In addition, we establish conditions under which the cost function approximation provides a lower bound for the optimal cost. Finally, we present a biased aggregation approach, which leverages an optimal cost function estimate to improve the quality of the approximation error of the aggregate problem.
\end{abstract}

\section{Introduction}\label{sec:introduction}
We consider an infinite horizon partially observable Markov decision problem (POMDP) with a discounted cost criterion. In particular, we assume that the state space $X$ consists of $n$ states, denoted by $1,\dots,n$. The control space $U$ and observation space $Z$ are finite, with elements denoted by $u$ and $z$, respectively. Transitions from state $i$ to state $j$ under control $u$ occur at discrete times $k$ according to transition probabilities $p_{ij}(u)$. Each transition produces an observation $z$ with probability $p(z \,|\, j,u)$ and  incurs a  cost of $g(i,u,j)$. At each stage $k$, we select a control $u_k \in U$ based on the history of observations $z_1, \dots, z_{k-1}$ and controls $u_0, \dots, u_{k-1}$. The objective is to minimize the expected value of the total cost, discounted with a discount factor $\alpha \in (0,1)$.

It is well known that a POMDP admits an equivalent perfect-state information formulation, leading to a Markov decision problem (MDP); see e.g.,  \cite{ASTROM1965174}. This formulation involves the belief state $b=\big(b(1),\dots,b(n)\big)$, where each component $b(i)$ represents the conditional probability that the system is in state $i$, given the history of controls and observations. The belief state is updated recursively through a belief estimator: 
\begin{equation}
\label{eq:belief_estimator}
b^{\prime}= F(b, u, z),
\end{equation}
where $F$ is a given function, and $b'$ denotes the updated belief. The set of belief states is called the belief space and is denoted by $B$.

We consider policies $\mu$ that map the belief space $B$ to the control space $U$. The cost function of a policy $\mu$, denoted by $J_\mu$, maps $B$ to the real line $\Re$, and is defined at any initial belief state $b_0\in B$ as
\begin{equation}
    \label{eq:j_mu_def}
    J_\mu(b_0)=\lim_{N\to\infty}\mathop{E}_{\substack{z_k \\ k=1,\dots,N-1}}\Bigg\{\sum_{k=0}^{N-1}\alpha^k\hat{g}\big(b_k,\mu(b_k)\big)\Bigg\},
\end{equation}
where $E\{\cdot\}$ denotes expected value, and the stage cost $\hat{g}(b,u)$ and the conditional distribution $\hat{p}(z\,|\,b,u)$ of $z$ given $b$ and $u$ are defined by
\begin{equation}
\label{eq:g_hat_p_hat}
    \begin{aligned}
    \hat{g}(b,u)=&\sum_{i=1}^nb(i)\sum_{j=1}^np_{ij}(u)g(i,u,j),\\
    \hat{p}(z\,|\,b,u)=&\sum_{i=1}^nb(i)\sum_{j=1}^np_{ij}(u)p(z\,|\,j,u).
\end{aligned}
\end{equation}
The optimal cost function $J^*$, derived by optimizing over all possible policies $\mu$, uniquely satisfies the Bellman equation
\begin{align}
J^*(b)=\min_{u\in U}\bigg[\hat{g}(b,u)+\alpha\sum_{z\in Z}\hat{p}(z\,|\,b,u)J^*\big(F(b,u,z)\big)\bigg],\qquad \hbox{for all }b\in B, \label{eq:optimal_cost}
\end{align}
see, e.g., \cite[Section 3.5.2]{bertsekas2017dynamic} and \cite[Section~5.6]{bertsekas2012dynamic}, or \cite[Thm. 7.6.2]{krishnamurthy_2016}. Once the optimal cost function $J^*$ is computed, an optimal policy $\mu^*$ that satisfies $J_{\mu^*}(b)=J^*(b)$ for all $b\in B$ can be determined by minimizing the right-hand side of Eq.~\eqref{eq:optimal_cost}, i.e.,
$$\mu^*(b)\in \arg\min_{u\in U}\bigg[\hat{g}(b,u)+\alpha\sum_{z\in Z}\hat{p}(z\,|\,b,u)J^*\big(F(b,u,z)\big)\bigg],\qquad \hbox{for all }b\in B;$$
see, e.g., \cite[Section~1.2]{bertsekas2012dynamic}.

For many POMDP of practical interest, neither the optimal cost function $J^*$ nor the optimal policy $\mu^*$ can be computed exactly. Consequently, there has been extensive research on approximation schemes. Within the literature on POMDP approximations, we note three principal approaches: value-based approximation methods \cite{pbvi,sarsop,pomdp_lovejoy,littman1995learning,Zhou2001AnIG,NIPS2002_14ea0d5b,aggregation_yu_dp,epca,saldi2016}, heuristic search \cite{hsvi,pomcp,despot,Sunberg2018,NEURIPS2021_ef41d488}, and policy-based approximation methods \cite{ppo,ppg,pmlr-v162-ni22apa}. The first two approaches approximate the optimal cost function $J^*$ by using off-line and/or on-line computations, whereas policy-based methods parameterize the policy and optimize it via gradient-based algorithms.

In this paper, we focus on the aggregation approach, a value-based approximation methodology for MDP with large state spaces. The conceptual aggregation framework of the present paper was first formulated in \cite[Section 6.5]{bertsekas2012dynamic}, and its application to POMDP was given in \cite[Example 6.5.5]{bertsekas2012dynamic}. In this framework, we approximate a given MDP with a reduced-space MDP (called \emph{aggregate MDP}), which can be designed by using domain-specific knowledge. The aggregate MDP can be solved by using dynamic programming (DP) methods, among other techniques.  

 The most straightforward approach to  apply aggregation to the belief-space formulation of a POMDP, is to select a finite set of representative beliefs within the belief space and form an MDP with a finite state space. However, this approach faces practical challenges when dealing with a large number of unobservable states, i.e., when $n$ is large. In particular, without favorable structural properties, solving the aggregated MDP involves updating belief states according to Eq.~\eqref{eq:belief_estimator}. This computation can become prohibitively expensive for large values of $n$, even if the transition probabilities $p_{ij}(u)$ are sparse; i.e., for every $i$ and $u$, the number of $j$ such that $p_{ij}(u)>0$ is small relative to $n$. Furthermore, the effective selection of representative beliefs from the high-dimensional belief space $B$ is challenging, and the quality of the selection significantly impacts the accuracy of the resulting cost function approximation.

To address these difficulties, we introduce a new belief aggregation method, which is better suited to the POMDP structure. Rather than directly aggregating beliefs, \emph{we aggregate the unobservable states into suitably designed feature states and then define representative beliefs over these feature states}. The choice of feature states can leverage problem-specific insights, providing an additional layer of flexibility, and leading to a more informed and effective selection of representative beliefs for constructing the associated MDP. Moreover, because the number of feature states is typically much smaller than $n$, updating beliefs over these feature states becomes computationally tractable when the transition probabilities $p_{ij}(u)$ are sparse.

We note that feature-based aggregation techniques have been proposed for MDP with perfect-state information; see, e.g., Tsitsiklis and Van Roy \cite{tsitsiklis1996}, Bertsekas and Tsitsiklis \cite[Section~3.1.2]{bertsekas1996neuro}, and Bertsekas \cite[Section~6.5]{bertsekas2012dynamic}, \cite{bertsekas2018feature}. In these methods, features of the (observed) states are used to guide the design of the aggregated MDP. When applied to POMDP, this means constructing features of the \emph{beliefs over the unobservable states}, while the dimension of the belief vector remains $n$. In contrast, our method constructs \emph{features of the unobservable states}, resulting in belief vectors defined over the feature space, whose dimension can be significantly smaller than $n$. 

As part of our analysis, we provide a bound for the error between the cost of the aggregate problem and the optimal cost. Our result improves the existing error bound of Tsitsiklis and Van Roy \cite{tsitsiklis1996}, as we relax certain restrictions on how the aggregated MDP is constructed. Actually, our error bound analysis does not rely on properties specific to POMDP. As a result, it can be applied to aggregation methods for MDP with only minor modifications. A generalized version of this analysis, tailored specifically to MDPs, is presented in \cite{li2025bound}.

Apart from introducing the new aggregation framework and its theoretical analysis, we also provide an extension of our method that leverages an existing cost function approximation $V$. This extension is based on the \emph{biased aggregation} framework introduced in \cite{bertsekas2019biased}; see also \cite[Section~6.5]{bertsekas2019reinforcement}. In particular, given an MDP and the function $V$, biased aggregation formulates a modified MDP, whereby the stage cost of the original MDP is altered using $V$. We demonstrate that our feature-based belief aggregation framework for POMDP can incorporate a bias function, with significant computational savings.

In summary, our contributions are the following:
\begin{itemize}
    \item[(i)] We introduce a feature-based aggregation scheme that generalizes existing aggregation methods for POMDP and facilitates DP calculations for solving the resulting aggregated problems.
    \item[(ii)] We establish a new bound on the approximation error of the cost function obtained through aggregation. This bound extends existing results by relaxing certain restrictions on how the aggregated MDP are constructed.
    \item[(iii)] We present an extension of our method that uses an existing cost function approximation as a bias function. This extension offers enhanced flexibility and substantial computational savings, while preserving theoretical guarantees. 
\end{itemize}

To facilitate our discussion of the aggregation method and illustrate the theoretical results, we introduce the following POMDP, adapted from \cite[Example~1.3.1]{bertsekas2012dynamic}. This POMDP will serve as a running example in the next two sections. Owing to its favorable structure, both the optimal cost function $J^*$ and the optimal policies can be computed conveniently. This allows us to compare the cost function approximations obtained via aggregation with the optimal costs.

\begin{example}[Treasure Hunting]\label{ex:treasure}
Consider a sequential search problem involving $N$ sites. Each site $\ell$ may contain a treasure with value $v_\ell$. Searching site $\ell$ incurs a cost $c_\ell$ and reveals the treasure with probability $\beta_\ell$ (assuming a treasure is present). At each stage, we may choose a site to search or terminate the search. The objective is to design a search policy that minimizes the expected infinite-horizon search cost, discounted by a given factor $\alpha \in (0,1)$.

We model this problem as a POMDP. The unobservable state space $X$ consists of $2^N + 1$ states, representing all possible combinations of treasure presence across the $N$ sites, along with a termination state. In particular, a non-terminated state $i$ can be represented as $i=(i^1,i^2,\dots,i^N)$, with $i^\ell\in \{0,1\}$, $\ell=1,2,\dots,N$. Here $i^\ell=1$ means that the $\ell$th site contains a treasure and $i^\ell=0$ means it does not. As a result, the state space $X$ can be defined as
$$X=\{t\}\cup\big\{(i^1,i^2,\dots,i^N)\,|\,i^\ell\in \{0,1\},\,\ell=1,2,\dots,N\big\},$$
where $t$ represents the termination state. The control space $U$ contains $N + 1$ options: searching one of the $N$ sites or terminating the search. At a given time period, if a site $\ell$ that contains a treasure is searched, the search is successful with probability $\beta_\ell$. In this case,  the cost of the search is $c_\ell - v_\ell$, and the state transitions to the state where the treasure is removed from site $\ell$, while the status of the other sites remains unchanged. The search is unsuccessful with the complementary probability $1 - \beta_\ell$. In this case, the cost of the search is $c_\ell$, and the state remains unchanged. At the end of a search, the belief state is suitably updated (using Bayes' rule in the case of an unsuccessful search; see \cite[Example~1.3.1]{bertsekas2012dynamic}).

Here the belief state $b$ is a $(2^N + 1)$-dimensional vector, and we assume that the initial belief state is given. The belief update function $F$ from Eq.~\eqref{eq:belief_estimator}, the stage cost $\hat g$ and the observation probability $\hat p$ in Eq.~\eqref{eq:g_hat_p_hat} can be defined accordingly.\footnote{Due to the favorable structure of the problem, the belief vector can be compactly represented with only $N + 1$ dimensions. The optimal policy can be derived using this compact representation; further details can be found in \cite[Section~1.3]{bertsekas2012dynamic}. } In the following sections, we use this example to illustrate our aggregation method and to illustrate the approximation error of our method.

\end{example}

The paper is organized as follows. In Section~\ref{sec:method}, we introduce the feature-based belief aggregation method, including the dynamic system constructed through aggregation and the resulting optimal control problem. In Section~\ref{sec:analysis}, we analyze the error of the cost function approximation obtained via aggregation, relative to the optimal cost. In Section~\ref{sec:experiment}, we present computational results that support our analysis. In Section~\ref{sec:bias}, we describe the biased aggregation approach and we provide favorable computational results.

\section{Feature-Based Belief Aggregation for POMDP}\label{sec:method}
In this section, we present our aggregation method. We introduce a dynamic belief system and we formulate the aggregate problem, following the framework of \cite[Section 6.5]{bertsekas2012dynamic}. This problem takes the form of an MDP whose state space consists of selected representative beliefs over the feature states. Finally, we describe computational approaches for solving the aggregate problem and discuss the  computational savings of our  methodology.

\subsection{Dynamic Belief System}

The first step of our aggregation method is constructing a finite set $\mathcal{F}$, which we call \textit{feature space}. Its elements, generally denoted by $x$ and $y$, are called \textit{feature states}. We connect feature states with the original (unobservable) system states as follows:
\begin{itemize}
    \item[(a)] With every feature state $x$, we associate a subset $I_x\subset \{1,2,\dots,n\}$. We require that the sets $I_x$, $x\in\mathcal{F}$, are disjoint.
    \item[(b)] With every feature state $x$, we associate its \emph{disaggregation probability distribution} $\{d_{xi}\,|\,i=1,2,\dots,n\}$. We require that
    \begin{equation}
        \label{eq:disagg}
        d_{xi}=0,\qquad \hbox{for all }i\not\in I_x.
    \end{equation}
    \item[(c)] With every state $j=1,2,\dots,n$, we associate its \emph{aggregation probability distribution} $\{\phi_{jy}\,|\,y\in \mathcal{F}\}$. We require that
    \begin{equation}
        \label{eq:agg}
        \phi_{jy}=1,\qquad \hbox{for all }j\in I_y,\,y\in \mathcal{F}.
    \end{equation}
  \end{itemize}
The disaggregation and aggregation probability distributions of Eqs.~\eqref{eq:disagg} and \eqref{eq:agg} specify a controlled dynamic system involving both the original (unobservable) system states $i,j$ and the (unobservable) feature states $x,y$. The transition diagram of this dynamic system is given as Fig.~\ref{fig:aggregation_8}.

\begin{figure}[ht!]
  \centering
  \scalebox{1.25}{
    \begin{tikzpicture}

\node[scale=0.9](states) at (4,-0.045) {
\begin{tikzpicture}
\node[draw,circle, scale=1.8](b1) at (0,1) {};
\node[draw,circle, scale=1.8](b2) at (5,1) {};
\draw[fill=Red, scale=0.5] (0,-3) ellipse (1cm and 0.5cm) node (bb1) {};
\draw[fill=Red, scale=0.5] (10,-3) ellipse (1cm and 0.5cm) node (bb2) {};

\draw[-{Latex[length=1.85mm]}, black, line width=0.25mm] (b1) to (b2);
\draw[-{Latex[length=1.85mm]}, black, line width=0.25mm, dashed, color=Red] (0.5,-1.5) to (4.5,-1.5);
\draw[-{Latex[length=1.85mm]}, black, line width=0.25mm] (0, -1.25) to (b1);
\draw[-{Latex[length=1.85mm]}, black, line width=0.25mm] (b2) to (5, -1.25);

\node[inner sep=0pt,align=center, scale=0.9, rotate=0, opacity=1] (obs) at (0.03,1)
{
$i$
};
\node[inner sep=0pt,align=center, scale=0.9, rotate=0, opacity=1] (obs) at (5.03,1)
{
$j$
};

\node[inner sep=0pt,align=center, scale=0.9, rotate=0, opacity=1] (obs) at (0,-1.5)
{
$x$
};
\node[inner sep=0pt,align=center, scale=0.9, rotate=0, opacity=1] (obs) at (5,-1.5)
{
$y$
};

\node[inner sep=0pt,align=center, scale=0.72, rotate=0, opacity=1, color=Red] (obs) at (2.5,-1.7)
{
Unobservable feature states
};
\node[inner sep=0pt,align=center, scale=0.72, rotate=0, opacity=1, color=Red] (obs) at (0,-1.95)
{
$\mathcal{F}$
};
\node[inner sep=0pt,align=center, scale=0.72, rotate=0, opacity=1, color=Red] (obs) at (5,-1.95)
{
$\mathcal{F}$
};
\node[inner sep=0pt,align=center, scale=0.72, rotate=0, opacity=1] (obs) at (0,1.48)
{
$I_x$
};
\node[inner sep=0pt,align=center, scale=0.72, rotate=0, opacity=1] (obs) at (5,1.48)
{
$X$
};
\node[inner sep=0pt,align=center, scale=0.72, rotate=0, opacity=1] (obs) at (2.5,1.18)
{
Unobservable states
};

\node[inner sep=0pt,align=center, scale=0.72, rotate=0, opacity=1] (obs) at (2.5,0.78)
{
$p_{ij}(u),\quad g(i,u,j)$
};

\node[inner sep=0pt,align=left, scale=0.75, rotate=0, opacity=1, color=blue] (obs) at (1.1,-0.25)
{
Disaggregation\\
probabilities $d_{xi}$
};

\node[inner sep=0pt,align=left, scale=0.75, rotate=0, opacity=1, color=blue] (obs) at (4,-0.25)
{
Aggregation\\
probabilities $\phi_{jy}$
};
\end{tikzpicture}
};

\end{tikzpicture}
  }
  \caption{The controlled dynamic feature-state system constructed through the disaggregation and aggregation probability distributions in Eqs.~\eqref{eq:disagg} and \eqref{eq:agg}.}\label{fig:aggregation_8}
\end{figure}
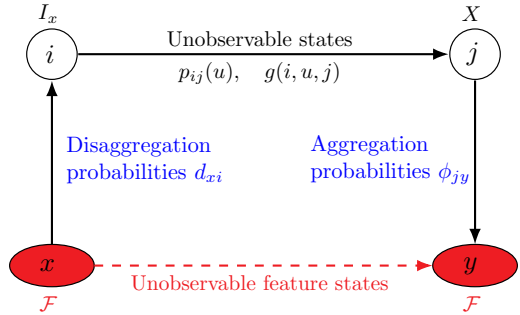

We illustrate the process of designing feature states using our running example, as described below.

\begin{example}[Features for Treasure Hunting]\label{ex:feature_treasure}
Recall that in Example~\ref{ex:treasure}, each nonterminated state $i$ can be represented by an $N$-dimensional vector $(i^1,i^2,\dots,i^N)$ with $i^\ell\in \{0,1\}$, $\ell=1,2,\dots,N$, indicating the presence of a treasure in site $\ell$. We may define as a feature state $x$ the index of the site with the largest undiscovered value, with ties broken consistently using a fixed rule. In particular, for every state $i=(i^1,i^2,\dots,i^N)$ such that $i^\ell=1$ for some $\ell$, we define its feature state $x$ as
\begin{equation}
    \label{eq:feature_eg}
    x\in \arg\max_{\substack{\ell=1,\dots,N;\, i^\ell=1}}v_\ell.
\end{equation}
In addition, we assign feature state $x=0$ to the states $i=(0,\dots,0)$ (nonterminated state with no remaining treasures) and $t$ (terminated state). Thus the feature space is $\mathcal{F}=\{0,1,\dots,N\}$. The sets $I_x$, $x=0,1,\dots,N$, form a partition of the state space $X$. The disaggregation probabilities $d_{xi}$ could be a uniform distribution over all states $i$ in $I_x$, while the aggregation probabilities $\phi_{ix}$ map each state $i$ deterministically to the corresponding feature state $x$ via Eq.~\eqref{eq:feature_eg}.

For an illustration, let us assume that there are two sites with the treasure of the first site larger than the one of the second site, i.e., $v_1>v_2$. The state and feature spaces are
$$X=\{(0,0),(1,0),(0,1),(1,1),t\},\qquad \mathcal{F}=\{0,1,2\}.$$
The sets $I_x$, $x=0,1,2$, are given by
$$I_0=\{(0,0),t\},\qquad I_1=\{(1,0),(1,1)\},\qquad I_2=\{(0,1)\}.$$
Because the treasure value at the first site is larger than that of the second site, the feature state associated with the state $(1,1)$ is $1$. As indicated above, this choice of feature states leads to a feature space of size $N+1$, which is much smaller than $2^N+1$ when $N$ is large. We will use this definition of $X$ and $\mathcal{F}$  in our subsequent discussion of the running example.

Alternatively, we can define the feature space $\mathcal{F}$ by grouping several search sites as one. For instance, we can partition the $N$ sites into $L$ contiguous groups of equal size and define a feature state as $x=(x^1,x^2,\hdots,x^L)$, where each component $x^l$ is $1$ if any site within the $l$th group contains a treasure, and $0$ otherwise. This type of aggregation leads to a feature space of size $2^L+1$. 
\end{example}
  
Having defined connections between feature states and states, we now introduce the associated beliefs. We denote by $Q$ the belief space over $\mathcal{F}$, i.e., the set of probability distributions over the features in $\mathcal{F}$. Elements of $Q$ are referred to as \emph{feature beliefs} and are denoted by $q=\{q(x)\,|\,x\in \mathcal{F}\}$. We aggregate this set into a finite subset $\Tilde{Q}$ that consists of $m$ elements of $Q$, which we refer to as \emph{representative feature beliefs} $\Tilde{q}=\{\tilde q(x)\,|\,x\in \mathcal{F}\}$; see Fig.~\ref{fig:aggregation_4}. These representative feature beliefs are associated with feature beliefs as follows.
\begin{itemize}
\item[(d)] For each feature belief $q\in Q$, we define a probability distribution over $\Tilde{Q}$, denoted by $\{\psi_{q\Tilde{q}}\,|\,\Tilde{q}\in \Tilde{Q}\}$. We refer to this distribution as the \emph{belief aggregation probability of $q$}. We require that
    \begin{equation}
        \label{eq:blf_agg}
        \psi_{\tilde q\Tilde{q}}=1,\qquad\hbox{for all }\tilde q\in \tilde Q.
    \end{equation}
  \end{itemize}
\begin{figure}[H]
  \centering
  \scalebox{0.7}{
        \begin{tikzpicture}

\node[scale=0.8] (kth_cr) at (-6.1,0)
{
  \begin{tikzpicture}
\draw[-, black, thick, line width=0.6mm] (11,1) to (17,1) to (17,4) to (11,4) to (11,1);
  \end{tikzpicture}
};

\node[scale=0.5] (kth_cr) at (-7.1,-0.51)
{
  \begin{tikzpicture}
\filldraw[fill=Red!20, thick] 
  (0,0) .. controls (1,1.5) and (2,1.2) .. (2.5,0.5) 
        .. controls (3,-0.5) and (1,-1.5) .. (0,-1)
        .. controls (-1,-0.5) and (-0.5,0.5) .. (0,0);
  \end{tikzpicture}
};

\node[scale=0.3] (kth_cr) at (-5.2,-0.3)
{
  \begin{tikzpicture}
\filldraw[fill=gray!50, thick] 
  (0,0) .. controls (1,2.2) and (3,1.8) .. (2.5,0)
        .. controls (2.7,-1.5) and (1,-1) .. (0,-1)
        .. controls (-0.5,-0.5) and (-0.5,0.5) .. (0,0);
  \end{tikzpicture}
};

\node[scale=0.3] (kth_cr) at (-4.5,0.7)
{
  \begin{tikzpicture}
\filldraw[fill=Blue!20, thick] 
  (0,0) .. controls (0.8,1.2) and (1.5,1.2) .. (2,0.6)
        .. controls (2.2,0.2) and (1.5,-0.2) .. (1,-0.4)
        .. controls (0.5,-0.6) and (1.5,-1.6) .. (0,-1.2)
        .. controls (-0.7,-0.5) and (-0.4,0.5) .. (0,0);
  \end{tikzpicture}
};

\node[scale=0.3] (kth_cr) at (-7.9,0.5)
{
  \begin{tikzpicture}
\filldraw[fill=OliveGreen!20, thick] 
  (0,0) -- (1.5,2) .. controls (2.5,1.5) and (2.8,0.5) .. (2,0)
        .. controls (1.8,-0.2) and (1.8,-0.8) .. (2.5,-1.5)
        .. controls (1.5,-1.2) and (0,-0.5) .. (0,0);
  \end{tikzpicture}
};

\node[scale=0.8] (kth_cr) at (0,0)
{
  \begin{tikzpicture}
\draw[-, black, thick, line width=0.6mm] (11,1) to (15,1) to (15,4) to (11,4) to (11,1);
\draw[-, black, thick, line width=0.6mm, fill=Red!20] (11,4) to (13,4) to (13,2.5) to (11,2.5);
\draw[-, black, thick, line width=0.6mm] (11,2.5) to (15,2.5);
\draw[-, black, thick, line width=0.6mm] (13,4) to (13,1);

\draw[-, black, thick, line width=0.6mm, fill=Blue!20] (13,4) to (15,4) to (15,2.5) to (13,2.5);
\draw[-, black, thick, line width=0.6mm, fill=OliveGreen!20] (13,2.5) to (15,2.5) to (15,1) to (13,1);
\draw[-, black, thick, line width=0.6mm, fill=Black!20] (11,2.5) to (13,2.5) to (13,1) to (11,1);
  \end{tikzpicture}
};

\node[scale=0.94] (kth_cr) at (4.8,0.05)
{
  \begin{tikzpicture}
\node[scale=1] (kth_cr) at (-0.8,4)
{
\begin{tikzpicture}[tdplot_main_coords, scale=3]
  \coordinate (A) at (0,0,0);
  \coordinate (B) at (1,0,0);
  \coordinate (C) at (0.5,{sqrt(3)/2},0);
  \coordinate (D) at (0.5,{sqrt(3)/6},{sqrt(6)/3});

  \draw[line width=0.37mm] (A) -- (B) -- (C) -- cycle;
  \draw[line width=0.37mm] (A) -- (D);
  \draw[line width=0.37mm] (B) -- (D);
  \draw[line width=0.37mm] (C) -- (D);

  \fill[white, opacity=0.3] (A) -- (B) -- (C) -- cycle;
  \fill[white, opacity=0.3] (A) -- (B) -- (D) -- cycle;
  \fill[white, opacity=0.3] (A) -- (C) -- (D) -- cycle;
  \fill[white, opacity=0.3] (B) -- (C) -- (D) -- cycle;

  \newcommand{\BaryPoint}[4]{%
    \coordinate (temp) at
      ($#1*(A) + #2*(B) + #3*(C) + #4*(D)$);
    \node[draw,circle,fill=black,scale=0.4] at (temp) {};
  }

  \node[draw,circle, fill=black, scale=0.4] at (A) {};
  \node[draw,circle, fill=black, scale=0.4] at (B) {};
  \node[draw,circle, fill=black, scale=0.4] at (C) {};
  \node[draw,circle, fill=black, scale=0.4] at (D) {};

  \node[draw,circle, fill=black, scale=0.4] at (0.5, 0) {};
  \node[draw,circle, fill=black, scale=0.4] at (-0.5, 0) {};
  \node[draw,circle, fill=black, scale=0.4] at (-0.5, -0.15) {};
  \node[draw,circle, fill=black, scale=0.4] at (-1.1, -0.33) {};

\end{tikzpicture}
};
\node[draw,circle, fill=black, scale=0.4] at (-1.1, 3.15) {};
  \node[draw,circle, fill=black, scale=0.4] at (-1.5, 2.72) {};
  \node[draw,circle, fill=black, scale=0.4] at (-0.7, 2.77) {};
  \node[draw,circle, fill=black, scale=0.4] at (0, 2.8) {};
  \node[draw,circle, fill=black, scale=0.4] at (0.25,3.5) {};
  \node[draw,circle, fill=black, scale=0.4] at (-0.15,4.1) {};
  \node[draw,circle, fill=black, scale=0.4] at (-0.6,4.8) {};
  \node[draw,circle, fill=black, scale=0.4] at (-1.25,4.8) {};
  \node[draw,circle, fill=black, scale=0.4] at (-1.6,4.1) {};
  \node[draw,circle, fill=black, scale=0.4] at (-1.9,3.5) {};
  \node[draw,circle, fill=black, scale=0.4] at (-0.8,3.45) {};
  \node[draw,circle, fill=black, scale=0.4] at (0,3.12) {};

\draw[-, black, line width=0.25mm, opacity=0.4] (0.5, 4.3) to (-1.1, 3.15);
\draw[-, black, line width=0.25mm, opacity=0.4] (0.5, 4.3) to (-1.5, 2.72);
\draw[-, black, line width=0.25mm, opacity=0.4] (0.5, 4.3) to (-0.7, 2.77);
\draw[-, black, line width=0.25mm, opacity=0.4] (0.5, 4.3) to (0, 2.8);
\draw[-, black, line width=0.25mm, opacity=0.4] (0.5, 4.3) to (0.25,3.5);
\draw[-, black, line width=0.25mm, opacity=0.4] (0.5, 4.3) to (-0.15,4.1);
\draw[-, black, line width=0.25mm, opacity=0.4] (0.5, 4.3) to (-0.6,4.8);
\draw[-, black, line width=0.25mm, opacity=0.4] (0.5, 4.3) to (-1.25,4.8);
\draw[-, black, line width=0.25mm, opacity=0.4] (0.5, 4.3) to (-1.6,4.1);
\draw[-, black, line width=0.25mm, opacity=0.4] (0.5, 4.3) to (-1.9,3.5);
\draw[-, black, line width=0.25mm, opacity=0.4] (0.5, 4.3) to (-0.8,3.45);
\draw[-, black, line width=0.25mm, opacity=0.4] (0.5, 4.3) to (0,3.12);
  \end{tikzpicture}
};

\node[inner sep=0pt,align=center, scale=1.2, rotate=0, opacity=1] (obs) at (4.8,-1.6)
{
Feature belief space $Q$
};
\node[inner sep=0pt,align=center, scale=1.2, rotate=0, opacity=1] (obs) at (0.05,-1.6)
{
Feature space $\mathcal{F}$
};

\node[inner sep=0pt,align=left, scale=1.2, rotate=0, opacity=1] (obs) at (7.5,0.45)
{
Representative\\
feature \\
beliefs $\tilde{q}\in\Tilde{Q}$
};

\node[inner sep=0pt,align=center, scale=1.2, rotate=0, opacity=1] (obs) at (-6,-1.6)
{
State space $X$
};

\draw[-{Latex[length=3mm]}, black, bend left=20, line width=0.5mm] (-7.1, -0.35) to (-1.1,0.6);
\draw[-{Latex[length=3mm]}, black, bend left=20, line width=0.5mm] (-0.9, 0.6) to (4.73,0.3);

\node[inner sep=0pt,align=center, scale=1.2, rotate=0, opacity=1] (obs) at (-0.67,0.3)
{
$x$
};

\node[inner sep=0pt,align=center, scale=1.2, rotate=0, opacity=1] (obs) at (-7.1,-0.52)
{
$i$
};
\node[inner sep=0pt,align=center, scale=1.2, rotate=0, opacity=1] (obs) at (-6.9,0.28)
{
$I_x$
};

\end{tikzpicture}
  }
  \caption{Our feature-based belief aggregation method: (i) we map the unobservable states $i$ into the feature states $x$; (ii) we aggregate beliefs over the feature states $q$ into a finite set of representative feature beliefs $\tilde{q}$. In this illustration, we aggregate some of the $n$ states $i$ into $4$ different feature states $x$; i.e., $\mathcal{F}$ contains $4$ elements. The set $I_x$ and its corresponding feature $x$ are shown in the same color. Then we choose some representative feature beliefs from the feature belief space $Q$ (i.e., the $3$-dimensional unit-simplex).}
  \label{fig:aggregation_4}
\end{figure}

The feature space $\mathcal{F}$ and the aggregation probabilities can be designed based on engineering intuition or learned from data \cite{bertsekas2018feature}. As illustrated in Fig.~\ref{fig:aggregation_4}, these features often lead to ``irregular shapes" of the sets $I_x$. This is a generic characteristic of feature-based aggregation schemes, which helps to capture the ``nonlinearity" of the optimal cost function \cite{bertsekas2018feature}. 

In what follows, we use our running example to demonstrate one special choice of the set $\tilde Q$ and its associated belief aggregation probabilities. 
\begin{example}\label{ex:running_aggregate_belief}
A general way to construct the set of representative feature beliefs $\Tilde{Q}$ is through a uniform discretization of $Q$. To this end, we define the set $\Tilde{Q}$ as
\begin{align}
\Tilde{Q} = \Big\{\tilde{q} \,\big|\, \tilde{q} \in Q, \tilde{q}(x)=\delta_x/\rho,\, \sum_{x\in \mathcal{F}}\delta_x = \rho,\delta_x \in \{0,\dots,\rho\}\Big\},\label{eq:aggregate_belief_space}
\end{align}
where $\rho$ is a design parameter, which can be interpreted as the \emph{discretization resolution}. For our two-site search problem, cf.\ Example \ref{ex:feature_treasure}, the set $Q$ is a $2$-dimensional simplex and $\tilde Q$ is a finite subset of it.

For every feature belief $q\in Q$, we define its belief aggregation probability as
\begin{align}
\psi_{q\tilde{q}} = 1 \text{ if and only if } \tilde q \in \arg\min_{\tilde{q}' \in \tilde{Q}}\norm{q-\tilde{q}'}, \label{eq:nearest_neighbor}
\end{align}
where ties in the $\arg\min$ are broken according to some rule, and $\Vert\cdot\Vert$ denotes the maximum norm. In the literature, aggregation schemes where for every $q\in Q$, $\psi_{q\tilde q}=1$ for some $\tilde q\in \tilde Q$ are referred to as \emph{hard aggregation}. In the subsequent discussion, we will use hard aggregation based on the representative feature beliefs $\tilde Q$ and belief aggregation probabilities $\psi_{q\tilde q}$ defined in Eqs.~\eqref{eq:aggregate_belief_space} and \eqref{eq:nearest_neighbor}, respectively, to demonstrate the cost function approximations computed via our scheme.

\end{example}

Consistent with the standard aggregation framework (see \cite[Section~6.5]{bertsekas2012dynamic} or \cite[Section~6.2]{bertsekas2019reinforcement}), the disaggregation and aggregation probabilities specify a controlled dynamic system involving beliefs $b$, feature beliefs $q$, and representative feature beliefs $\tilde q$. This system includes four types of transitions, as described below.
\begin{itemize}[labelwidth=0.5cm, labelsep=0.5em, align=left]
    \item[(i)] From a representative belief $\tilde q$, we apply the disaggregation probabilities $d_{xi}$ and construct a belief state $b$ according to
    \begin{equation}
        \label{eq:rep_b}
        b(i)=\sum_{x\in\mathcal{F}}\tilde q(x)d_{xi},\qquad i=1,2,\dots,n.
    \end{equation}
    \item[(ii)] From the belief $b$, we apply a control $u\in U$, which generates an observation $z$ according to $\hat{p}(z\,|\,b,u)$ and incurs the cost $\hat{g}(b,u)$. We then update the belief as $b'=F(b,u,z)$.
    \item[(iii)] From the updated belief $b^{\prime}$, we generate the feature belief
    \begin{equation}
        \label{eq:agg_blf}
        q'(y)=\sum_{j=1}^nb'(j)\phi_{jy},\qquad \hbox{for all }y\in\mathcal{F}.
    \end{equation}
  \item[(iv)] From the feature belief $q'$, we generate a new representative feature belief $\tilde q'$ according to the belief aggregation probability $\psi_{q'\tilde q'}$.
\end{itemize}

To express these transitions succinctly, we define a mapping $D:\tilde Q\mapsto B$ by
\begin{equation}
\big(b(1),b(2),\dots,b(n)\big)=D(\tilde q),\qquad b(i)\hbox{ satisfy Eq.~\eqref{eq:rep_b} for all }i. \label{eq:D_map}
\end{equation}
We denote the image of this mapping as $\tilde B$, i.e.,\footnote{We will show in Section~\ref{sec:analysis} that the representative feature beliefs $\tilde q$ acts as ``doubles" of the beliefs in $\tilde B$.} 
\begin{equation}
\tilde B=\big\{b\in B\,|\,b=D(\tilde q)\hbox{ for some }\tilde q\in \tilde Q\big\}. \label{eq:btilde}
\end{equation}
In addition, we define a mapping $\Phi:B\mapsto Q$ by
\begin{equation}
q'=\Phi(b'),\qquad q'(x)\hbox{ satisfies Eq.~\eqref{eq:agg_blf} for all }x\in \mathcal{F}. \label{eq:phi}
\end{equation}
The resulting aggregate controlled dynamic belief system is illustrated in Fig.~\ref{fig:aggregation_7}.

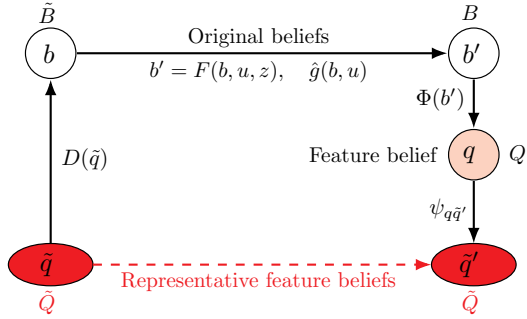
\begin{figure}[ht!]
  \centering
  \scalebox{1.25}{
    \begin{tikzpicture}

\node[scale=0.9](states) at (4,-0.045) {
\begin{tikzpicture}
\node[draw,circle, scale=1.8](b1) at (0,1) {};
\node[draw,circle, scale=1.8](b2) at (5,1) {};
\node[draw,circle, scale=1.8, color=black, fill=Red!20](q) at (5,-0.2) {};
\draw[fill=Red, scale=0.5] (0,-3) ellipse (1cm and 0.5cm) node (bb1) {};
\draw[fill=Red, scale=0.5] (10,-3) ellipse (1cm and 0.5cm) node (bb2) {};

\draw[-{Latex[length=1.85mm]}, black, line width=0.25mm] (b1) to (b2);
\draw[-{Latex[length=1.85mm]}, black, line width=0.25mm, dashed, color=Red] (0.5,-1.5) to (4.5,-1.5);
\draw[-{Latex[length=1.85mm]}, black, line width=0.25mm] (0, -1.25) to (b1);
\draw[-{Latex[length=1.85mm]}, black, line width=0.25mm] (q) to (5, -1.25);
\draw[-{Latex[length=1.85mm]}, black, line width=0.25mm] (b2) to (q);

\node[inner sep=0pt,align=center, scale=0.9, rotate=0, opacity=1] (obs) at (0.03,1)
{
$b$
};
\node[inner sep=0pt,align=center, scale=0.9, rotate=0, opacity=1] (obs) at (5.03,1)
{
$b^{\prime}$
};

\node[inner sep=0pt,align=center, scale=0.9, rotate=0, opacity=1] (obs) at (5,-0.2)
{
$q$
};

\node[inner sep=0pt,align=center, scale=0.72, rotate=0, opacity=1] (obs) at (5.56,-0.2)
{
$Q$
};
\node[inner sep=0pt,align=center, scale=0.72, rotate=0, opacity=1] (obs) at (3.85,-0.2)
{
Feature belief
};

\node[inner sep=0pt,align=center, scale=0.9, rotate=0, opacity=1] (obs) at (0,-1.5)
{
$\tilde{q}$
};
\node[inner sep=0pt,align=center, scale=0.9, rotate=0, opacity=1] (obs) at (5,-1.5)
{
$\tilde{q}^{\prime}$
};

\node[inner sep=0pt,align=center, scale=0.72, rotate=0, opacity=1, color=Red] (obs) at (2.5,-1.7)
{
Representative feature beliefs
};
\node[inner sep=0pt,align=center, scale=0.72, rotate=0, opacity=1, color=Red] (obs) at (0,-1.95)
{
$\tilde{Q}$
};
\node[inner sep=0pt,align=center, scale=0.72, rotate=0, opacity=1, color=Red] (obs) at (5,-1.95)
{
$\tilde{Q}$
};
\node[inner sep=0pt,align=center, scale=0.72, rotate=0, opacity=1] (obs) at (0,1.48)
{
$\tilde{B}$
};
\node[inner sep=0pt,align=center, scale=0.72, rotate=0, opacity=1] (obs) at (5,1.48)
{
$B$
};
\node[inner sep=0pt,align=center, scale=0.72, rotate=0, opacity=1] (obs) at (2.5,1.18)
{
Original beliefs
};

\node[inner sep=0pt,align=center, scale=0.72, rotate=0, opacity=1] (obs) at (2.5,0.78)
{
$b^{\prime}=F(b,u,z),\quad \hat{g}(b,u)$
};

\node[inner sep=0pt,align=center, scale=0.75, rotate=0, opacity=1] (obs) at (0.45,-0.25)
{
$D(\tilde{q})$
};

\node[inner sep=0pt,align=center, scale=0.75, rotate=0, opacity=1] (obs) at (4.65,0.45)
{
$\Phi(b^{\prime})$
};
\node[inner sep=0pt,align=center, scale=0.75, rotate=0, opacity=1] (obs) at (4.75,-0.85)
{
$\psi_{q\tilde{q}^{\prime}}$
};
\end{tikzpicture}
};

\end{tikzpicture}
  }
  \caption{The controlled dynamic belief system constructed through our feature-based belief aggregation method.}\label{fig:aggregation_7}
\end{figure}

\subsection{The Aggregate Problem}\label{sec:aggregate_control}
Having defined the dynamic belief system through the procedure illustrated in Fig.~\ref{fig:aggregation_7}, we obtain an optimal control problem with state space $\tilde Q$ and control space $U$. A cost function for this problem is a vector in $\Re^m$, where $m$ is the number of elements in $\Tilde{Q}$. We denote such a vector by $r$, where $r_{\tilde q}$ is the component corresponding to $\tilde q\in \Tilde{Q}$. The associated Bellman equation is characterized by an operator $H$, which maps $r\in \Re^m$ to $Hr$ with components $(Hr)(\tilde q)$ given by
\begin{equation}
    \label{eq:h_oprator}
    (Hr)(\tilde q)=\min_{u\in U}\bigg[\hat{g}\big(D(\tilde q),u\big)+\alpha\sum_{z\in Z}\hat{p}\big(z\,|\,D(\tilde q),u\big)\sum_{\tilde q'\in \tilde Q}\psi_{G(\tilde q,u,z)\tilde q'}r_{\tilde q'}\bigg],\quad \tilde q\in \Tilde{Q},
\end{equation}
where the mapping $G:\tilde Q\times U\times Z\mapsto Q$ is defined as
\begin{equation}
    \label{eq:g_def}
    G(\tilde q,u,z)=\Phi\Big(F\big(D(\tilde q),u,z\big)\Big).
\end{equation}

The function $G$ compactly represents a series of operations. Given a representative belief $\tilde q\in \Tilde{Q}$, we obtain its corresponding belief state $D(\tilde q)$ through disaggregation. After applying a control $u$ and receiving an observation $z$, we obtain an updated belief state $F\big(D(\tilde q),u,z\big)$.  The updated belief state is mapped to a feature belief state through a two-step aggregation process (state belief $b'$ to feature belief $q$ to representative belief $\tilde q'$); cf.\ Fig.~\ref{fig:aggregation_7}. 

For the operator $H$, we have the following result.
\begin{proposition}\label{prop:h_op}
Let $r$ and $r'$ be vectors in $\Re^m$ and $\Vert\cdot\Vert$ be the maximum norm.
\begin{itemize}[labelwidth=0cm, labelsep=0.5em, align=left]
    \item[(a)] If $r\leq r'$, then $Hr\leq Hr'$.
    \item[(b)] We have $\Vert Hr-Hr'\Vert\leq \alpha\Vert r-r'\Vert$.
\end{itemize}
\end{proposition}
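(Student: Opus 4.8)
The plan is to recognize that $H$ is a standard discounted-cost Bellman operator and to derive both claims from the single structural fact that, for each fixed control, the coefficients multiplying $r$ form a probability distribution over $\tilde Q$. Concretely, for each $u\in U$ I would write the bracketed quantity in Eq.~\eqref{eq:h_oprator} as the affine map $r\mapsto c_u+\alpha W_u r$, where $c_u(\tilde q)=\hat g\big(D(\tilde q),u\big)$ does not depend on $r$, and $W_u$ is the $m\times m$ matrix with entries
$$
W_u(\tilde q,\tilde q')=\sum_{z\in Z}\hat p\big(z\,|\,D(\tilde q),u\big)\,\psi_{G(\tilde q,u,z)\tilde q'}.
$$
The first step is to verify that each $W_u$ is a stochastic matrix. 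Nonnegativity of the entries is immediate, since $\hat p(\cdot)\ge 0$ and the belief aggregation probabilities $\psi$ are nonnegative. For the row sums, I would use that $\{\psi_{q\tilde q'}\}$ sums to one over $\tilde q'$ by item (d), together with $\sum_{z\in Z}\hat p(z\,|\,b,u)=1$; the latter follows from the definition in Eq.~\eqref{eq:g_hat_p_hat} by summing successively over $z$, then $j$, then $i$, using $\sum_z p(z\,|\,j,u)=1$, $\sum_j p_{ij}(u)=1$, and $\sum_i b(i)=1$.

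Given this, part (a) reduces to a direct monotonicity argument. If $r\le r'$ componentwise, then nonnegativity of the entries of $W_u$ yields $W_u r\le W_u r'$, hence $c_u+\alpha W_u r\le c_u+\alpha W_u r'$ for every $u$. Taking the minimum over $u$ on both sides, and using that the pointwise minimum of order-preserving maps is order-preserving, gives $Hr\le Hr'$.

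For part (b) I would invoke the elementary inequality $\min_u f(u)-\min_u g(u)\le\max_u\big(f(u)-g(u)\big)$. Fix $\tilde q$ and let $u^\star$ attain the minimum defining $(Hr')(\tilde q)$. Then
$$
(Hr)(\tilde q)-(Hr')(\tilde q)\le \big(c_{u^\star}+\alpha W_{u^\star}r\big)(\tilde q)-\big(c_{u^\star}+\alpha W_{u^\star}r'\big)(\tilde q)=\alpha\big(W_{u^\star}(r-r')\big)(\tilde q).
$$
Because $W_{u^\star}$ is stochastic, the right-hand side is $\alpha$ times a weighted average of the components of $r-r'$, and is therefore bounded by $\alpha\Vert r-r'\Vert$. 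Interchanging the roles of $r$ and $r'$ gives the matching lower bound, so $\vert(Hr)(\tilde q)-(Hr')(\tilde q)\vert\le\alpha\Vert r-r'\Vert$ for every $\tilde q$, and taking the maximum over $\tilde q$ completes the proof.

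The argument is essentially routine and mirrors the textbook contraction-mapping analysis of discounted dynamic programming. The only step that requires genuine care is establishing $\sum_{z\in Z}\hat p(z\,|\,b,u)=1$: this normalization, rather than anything specific to the aggregation construction, is precisely what makes $W_u$ stochastic and hence supplies both the monotonicity of part (a) and the contraction factor $\alpha$ of part (b).
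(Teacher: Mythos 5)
Your proof is correct and is precisely the standard argument the paper invokes without writing out: the paper omits the proof, citing "straightforward modifications of standard arguments" that rely on the fact that both $\hat p$ and $\psi$ are probability distributions, which is exactly the stochasticity of your matrices $W_u$. Your elaboration via $r\mapsto c_u+\alpha W_u r$ and the minimizer-swapping bound is a faithful and complete filling-in of that same route.
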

The proof of this proposition can be obtained by straightforward modifications of standard arguments; see, e.g., \cite[Section~6.2]{bertsekas2019reinforcement}. It relies on the fact that both $\hat{p}$ and $\psi$ are probability distributions. In what follows, we will refer to the properties of $H$ stated in parts (a) and (b) of Prop.~\ref{prop:h_op} as \emph{monotonicity} and \emph{contractivity}, respectively.

A direct consequence of Prop.~\ref{prop:h_op} is that the operator $H$ has a fixed point $r^*\in \Re^m$, which satisfies uniquely the equation $r^*=Hr^*$ and is the optimal cost vector of the aggregate problem. This vector allows us to approximate the optimal cost function of the original POMDP [cf.~Eq.~\eqref{eq:optimal_cost}] via the following interpolation formula:
\begin{equation}
\label{eq:j_tilde}
\tilde J(b)=\sum_{\tilde q\in\tilde Q}\psi_{\Phi(b) \tilde q}r^*_{\tilde q};
\end{equation}
see Fig.~\ref{fig:interpolation} for an illustration.

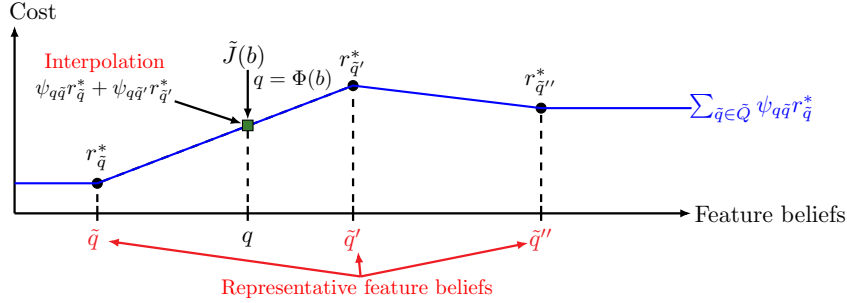
\begin{figure}[ht!]
  \centering
  \scalebox{1}{
        \begin{tikzpicture}

\node[scale=1] (kth_cr) at (9.7,0)
{
  \begin{tikzpicture}
    \draw[-{Latex[length=1.5mm]}, black, thick, line width=0.3mm] (0,0) to (9,0);
    \draw[-{Latex[length=1.5mm]}, black, thick, line width=0.3mm] (0,0) to (0,2.5);
\node[draw,circle, fill=black, scale=0.4](point1) at (1.1,0.4) {};
\draw[-, black, thick, line width=0.3mm] (1.1,-0.1) to (1.1,0.1);
\draw[-, black, thick, line width=0.3mm, dashed] (1.1,0.15) to (point1);

\node[draw,circle, fill=black, scale=0.4](point2) at (4.5,1.7) {};
\draw[-, black, thick, line width=0.3mm] (4.5,-0.1) to (4.5,0.1);
\draw[-, black, thick, line width=0.3mm, dashed] (4.5,0.15) to (point2);

\node[draw,circle, fill=black, scale=0.4](point3) at (7,1.4) {};
\draw[-, black, thick, line width=0.3mm] (7,-0.1) to (7,0.1);
\draw[-, black, thick, line width=0.3mm, dashed] (7,0.15) to (point3);

\draw[-, black, thick, line width=0.3mm, dashed] (point1) to (point2);

\node[inner sep=0pt,align=center, scale=0.9, rotate=0, opacity=1] (obs) at (3.15,-0.35)
{
$q$
};

\node[inner sep=0pt,align=center, scale=0.8, rotate=0, opacity=1] (labelfive) at (1.2,1.8)
{
{\color{red}Interpolation}\\
$\psi_{q\tilde{q}}r^{*}_{\tilde{q}} + \psi_{q\tilde{q}^{\prime}}r^{*}_{\tilde{q}^{\prime}}$
};

\node[inner sep=0pt,align=center, scale=0.9, rotate=0, opacity=1, color=Red] (qtildeone) at (1.1,-0.35)
{
$\tilde{q}$
};
\node[inner sep=0pt,align=center, scale=0.9, rotate=0, opacity=1] (obs) at (1.15,0.75)
{
$r^{*}_{\tilde{q}}$
};
\node[inner sep=0pt,align=center, scale=0.9, rotate=0, opacity=1, color=Red] (qtildetwo) at (4.55,-0.35)
{
$\tilde{q}^{\prime}$
};
\node[inner sep=0pt,align=center, scale=0.9, rotate=0, opacity=1] (obs) at (4.55,2)
{
$r^{*}_{\tilde{q}^{\prime}}$
};

\node[inner sep=0pt,align=center, scale=0.9, rotate=0, opacity=1, color=Red] (qtildethree) at (7.05,-0.35)
{
$\tilde{q}^{\prime\prime}$
};
\node[inner sep=0pt,align=center, scale=0.9, rotate=0, opacity=1] (obs) at (7.05,1.75)
{
$r^{*}_{\tilde{q}^{\prime\prime}}$
};

\node[inner sep=0pt,align=center, scale=0.9, rotate=0, opacity=1] (obs) at (10.1,0)
{
Feature beliefs
};
\node[inner sep=0pt,align=center, scale=0.9, rotate=0, opacity=1] (obs) at (0.3,2.7)
{
Cost
};

\node[inner sep=0pt,align=center, scale=0.9, rotate=0, opacity=1] (obs) at (9.85,1.4)
{
\textcolor{blue}{$\sum_{\tilde{q} \in \tilde{Q}}\psi_{q\tilde{q}}r^{*}_{\tilde{q}}$}
};

\node[inner sep=0pt,align=center, scale=0.9, rotate=0, opacity=1] (jtilde) at (3.1,2.1)
{
$\tilde{J}(b)$
};

\node[inner sep=0pt,align=center, scale=0.8, rotate=0, opacity=1] (obs) at (3.75,1.78)
{
$q=\Phi(b)$
};

\draw[blue, thick]
  plot coordinates {
    (0,0.4) (1.1,0.4) (4.5,1.7) (7, 1.4) (9, 1.4)} ;

\node[inner sep=0pt,align=center, scale=0.8, rotate=0, opacity=1] (obs) at (4.6,-1)
{
{\color{red}Representative feature beliefs}
};
\draw[-{Latex[length=1.5mm]}, Red, thick, line width=0.3mm] (4.6, -0.84) to (qtildeone);
\draw[-{Latex[length=1.5mm]}, Red, thick, line width=0.3mm] (4.6, -0.84) to (qtildetwo); 
\draw[-{Latex[length=1.5mm]}, Red, thick, line width=0.3mm] (4.6, -0.84) to (qtildethree); 

\node[draw,rectangle, fill=OliveGreen, scale=0.6](point4) at (3.1,1.17) {};
\draw[-, black, thick, line width=0.3mm] (3.1,-0.1) to (3.1,0.1);
\draw[-, black, thick, line width=0.3mm, dashed] (3.1,0.15) to (point4);
\draw[-{Latex[length=1.5mm]}, black, thick, line width=0.3mm] (labelfive) to (point4);

\draw[-{Latex[length=1.5mm]}, thick, line width=0.3mm] (jtilde) to (point4); 
  \end{tikzpicture}
};

\end{tikzpicture}
  }
  \caption{Illustration of the interpolation formula given by Eq.~\eqref{eq:j_tilde}. It is used to obtain the cost function approximation $\tilde{J}$. The illustration is based on an approximation with three representative feature beliefs: $\tilde{Q}=\{\tilde{q}, \tilde{q}^{\prime}, \tilde{q}^{\prime\prime}\}$.}\label{fig:interpolation}
\end{figure}

Given this cost function approximation, we can compute a suboptimal policy via one-step lookahead minimization
\begin{equation}
\tilde \mu(b) \in \arg\min_{u \in U}\left[\hat{g}(b,u) + \alpha\sum_{z \in Z}\hat{p}(z \mid b,u)\tilde{J}(F(b,u,z))\right].\label{eq:approximate_policy}
\end{equation}
The quality of this policy depends on the difference between the cost function approximation $\tilde{J}$ [cf.~Eq.~\eqref{eq:j_tilde}] and the optimal cost function $J^{*}$ [cf.~Eq.~\eqref{eq:optimal_cost}], which we refer to as the \textit{approximation error}, and analyze in Section~\ref{sec:analysis}.

\subsection{Methods for Solving the Aggregate  Problem}\label{sec:computation}
The aggregate  problem is an MDP with a finite state space (the set $\tilde Q$ of representative feature beliefs) and the finite control space  $U$. This MDP can be solved with standard DP methods. In particular, we can use the exact value iteration (VI) algorithm to compute the optimal solution $r^*$. This is the fixed-point iteration
\begin{equation}
    \label{eq:vi_h}
    r^{k+1}=Hr^k,
\end{equation}
that starts from some initial guess $r^0$, where $H$ is the contraction mapping of Eq.~\eqref{eq:h_oprator}. 

An asynchronous version of the algorithm can also be applied to solve this problem. In such an algorithm, at each iteration $k$, we apply $H$ to a single component of the current representative belief vector $\tilde q_k$, and leave all other components unchanged. In particular, for a sequence of representative feature beliefs $\{\tilde q_0,\tilde q_1,\dots\}$, given $r^k$ and $\tilde q_k$, we first compute the vector $b_k$ by 
\begin{equation}
    \label{eq:asy_vi_b}
    b_k(i)=\sum_{x\in\mathcal{F}}\tilde q_k(x)d_{xi},\qquad i=1,2,\dots,n;
\end{equation}
cf. Eq.~\eqref{eq:rep_b}.  We then update only the component $r_{\tilde q_k}^{k+1}$ according to 
\begin{equation}
    \label{eq:asy_vi_r}
    r_{\tilde q_k}^{k+1}=\min_{u\in U}\bigg[\hat{g}\big(b_k,u\big)+\alpha\sum_{z\in Z}\hat{p}(z\,|\,b_k,u)\sum_{\tilde q'\in \tilde Q}\psi_{\Phi\big((F(b_k,u,z)\big)\tilde q'}r_{\tilde q'}\bigg],
\end{equation}
while keeping all other components unchanged:
\begin{equation}
\label{eq:asy_vi_r_other}
    r_{\tilde q}^{k+1}=r_{\tilde q}^{k},\qquad \hbox{if }\tilde q\neq \tilde q_k.
\end{equation}
The justification and convergence properties of this method parallel those of the standard asynchronous VI algorithm; see \cite{bertsekas1982distributed,bertsekas1983distributed}.

For both the exact and asynchronous VI algorithms, the vector $b_k$ and the functions $\hat g$, $\hat{p}$, and $F$ can be written explicitly in closed form. Their values may also be estimated through simulation methods; e.g., employing particle filtering to approximate the belief estimator $F$. In either case, regardless of the VI version or computation method, our aggregation scheme results in significant computational benefits. We illustrate these benefits by considering specifically the asynchronous VI given in Eqs.~\eqref{eq:asy_vi_b}–\eqref{eq:asy_vi_r_other}, under the assumption that $b_k$ and the functions $\hat g$, $\hat{p}$, and $F$ are evaluated explicitly in closed form.

The first benefit is that storing and retrieving low-dimensional representative feature beliefs $\tilde q$ is simpler than handling full belief states $b$. The second benefit is that the computational effort required by Eqs.~\eqref{eq:asy_vi_b}–\eqref{eq:asy_vi_r_other} is greatly reduced when the set $\cup_{x\in \mathcal{F}}I_{x}$ contains far fewer elements than $n$, and/or when the transition probabilities $p_{ij}(u)$ are sparse for all $i\in \cup_{x\in \mathcal{F}}I_{x}$ and $u\in U$. Indeed, computing $b_k$ via Eq.~\eqref{eq:asy_vi_b} involves significantly fewer unobservable states $i$ when the set $\cup_{x\in \mathcal{F}}I_{x}$ is small. Moreover, sparse transition probabilities $p_{ij}(u)$, together with the small number of nonzero components in $b_k$, substantially decrease the number of unobservable states $i,j$ needed to evaluate $\hat g(b_k,u)$ and $\hat{p}(z\,|\,b_k,u)$; cf. Eq.~\eqref{eq:g_hat_p_hat}. The belief update calculation $F(b_k,z,u)$ similarly benefits from the small number of nonzero components of $b_k$ and the sparsity of the transition probabilities. In particular, the belief estimator $F$ is defined through Bayes' rule, i.e., the $j$th component of $F(b, u, z)$, denoted by $F(b, u, z)(j)$, is given as
\begin{equation}
    \label{eq:b_estimator_def}
    F(b, u, z)(j)= \frac{p(z \,|\, j, u)\sum^{n}_{i=1}b(i)p_{ij}(u)}{\sum_{i=1}^{n}\sum^{n}_{j'=1}p(z \,|\, j', u)p_{ij'}(u)b(i)}.
\end{equation}
For the same reasons, the number of states $j$ where $F(b_k,u,z)(j)>0$ is much smaller than $n$. In addition, the numbers of states $i$ and $j'$ to enumerate in Eq.~\eqref{eq:b_estimator_def} are also much smaller than $n$.

From the above discussion, we can see that the unobservable states $i$ involved in our algorithm form a set $\hat X$, which is defined as
$$\hat X=\{i\,|\,i\in \cup_{x\in \mathcal{F}}I_{x}\hbox{ or }p_{ji}(u)>0\hbox{ for some }j\in \cup_{x\in \mathcal{F}}I_{x}\hbox{ and some }u\in U\}.$$
In fact, one may interpret our algorithm as using the POMDP with the unobservable state space $\hat X$ to approximate the original POMDP.

We demonstrate the computational savings from our scheme by using our running example. 

\begin{example}\label{ex:running_computation}
Let us illustrate the cost function approximation $\tilde J$ computed via our aggregation method when the set $\tilde Q$ and the probabilities $\psi_{q\tilde q}$ for all $q\in Q$ and $\tilde q\in \tilde Q$ are defined according to Eqs.~\eqref{eq:aggregate_belief_space} and \eqref{eq:nearest_neighbor}, respectively; cf. Example~\ref{ex:running_aggregate_belief}. For the computational results to be presented shortly, the problem data and experimental setup are summarized in Appendix~\ref{appendix:setup}. When applying our feature-based belief aggregation, 
we use discretization resolution $\rho=10$ [cf.~Eq.~\eqref{eq:aggregate_belief_space}], and define as a feature state the index of the site with the biggest undiscovered value; cf. Example~\ref{ex:feature_treasure}. Fig.~\ref{fig:treasure_7} shows a comparison between our feature-based aggregation method and standard aggregation (i.e., non-feature-based aggregation, which is the special case of our method where $\mathcal F=X$). We note that without feature-based aggregation, the cost function approximation in Eq.~\eqref{eq:j_tilde} becomes computationally impractical when the number of search sites exceeds $N=4$. By contrast, our feature-based belief aggregation can scale to problems with many more search sites. We also find that feature-based belief aggregation leads to similar approximation error compared to standard aggregation. 

\begin{figure}[ht!]
  \centering
  \scalebox{0.8}{
    \begin{tikzpicture}

\pgfplotstableread{
1 0.54
2 0.88
3 0.93
4 1.19
}\approxL

\pgfplotstableread{
1 0.004
2 0.02
3 4.22
4 925
}\computeL

\pgfplotstableread{
1 0.004
2 0.005
3 0.15
4 0.3
5 0.6
6 0.7
7 0.99
8 2.59
9 6.10
10 11.21
}\computeone

\pgfplotstableread{
1 0.54
2 0.91
3 0.99
4 1.33
5 1.61
6 1.79
7 2.02
8 2.27
9 2.67
10 3.42
}\approxone

\pgfplotsset{/dummy/workaround/.style={/pgfplots/axis on top}}

\node[scale=1] (kth_cr) at (2.5,0.23)
{
\begin{tikzpicture}
  \begin{axis}
[
        xmin=1,
        xmax=11,
        width=8cm,
        height=3.9cm,
        axis y line=center,
        axis x line=bottom,
        scaled y ticks=false,
        yticklabel style={
        /pgf/number format/fixed,
        /pgf/number format/precision=5
      },
        xlabel style={below right},
        ylabel style={above left},
        axis line style={-{Latex[length=2mm]}},
        legend style={at={(1.6,1.4)}},
        legend columns=4,
        legend style={
          draw=none,
            /tikz/column 2/.style={
                column sep=5pt,
              }
              }
              ]
            \addplot[Red,name path=l1, thick, mark=diamond, mark repeat=1] table [x index=0, y index=1, domain=0:1] {\approxL};
            \addplot[RoyalAzure,name path=l1, thick, mark=x, mark repeat=1] table [x index=0, y index=1, domain=0:1] {\approxone};
              \legend{aggregation, feature-based aggregation}
            \end{axis}
\node[inner sep=0pt,align=center, scale=1, rotate=0, opacity=1] (obs) at (2.45,2.55)
{
  $\norm{J^{*}-\Tilde{J}}$ (approximation error)
};
\node[inner sep=0pt,align=center, scale=1, rotate=0, opacity=1] (obs) at (3.45,-0.7)
{
  $N$ (number of sites to search)
};
\end{tikzpicture}
};

\node[scale=1] (kth_cr) at (8,0)
{
\begin{tikzpicture}
  \begin{axis}
[
        xmin=1,
        xmax=11,
        ymax=6000,
        ymode=log,
        width=8cm,
        height=3.9cm,
        axis y line=center,
        axis x line=bottom,
        scaled y ticks=false,
        yticklabel style={
        /pgf/number format/fixed,
        /pgf/number format/precision=5
      },
        xlabel style={below right},
        ylabel style={above left},
        axis line style={-{Latex[length=2mm]}},
        legend style={at={(0.95,0.35)}},
        legend columns=3,
        legend style={
          draw=none,
            /tikz/column 2/.style={
                column sep=5pt,
              }
              }
              ]
            \addplot[Red,name path=l1, thick, mark=diamond, mark repeat=1] table [x index=0, y index=1, domain=0:1] {\computeL};
            \addplot[RoyalAzure,name path=l1, thick, mark=x, mark repeat=1] table [x index=0, y index=1, domain=0:1] {\computeone};
            \end{axis}
\node[inner sep=0pt,align=center, scale=1, rotate=0, opacity=1] (obs) at (1.65,2.5)
{
  compute time (min)
};
\node[inner sep=0pt,align=center, scale=1, rotate=0, opacity=1] (obs) at (3.45,-0.7)
{
  $N$ (number of sites to search)
};
\end{tikzpicture}
};

\end{tikzpicture}
  }
\caption{Approximation error and computation time of the cost function approximation $\Tilde{J}$ as defined by Eq.~\eqref{eq:j_tilde} for the running example when aggregating the original belief space (red curves) and the feature belief space (blue curves). The horizontal axes indicate the number of search sites $N$, which determines the computational complexity of the problem. For problems with $N > 4$, the computation of the standard aggregation scheme could not be completed within $2$ hours, and the corresponding results are not recorded in the figure.}\label{fig:treasure_7}
\end{figure}
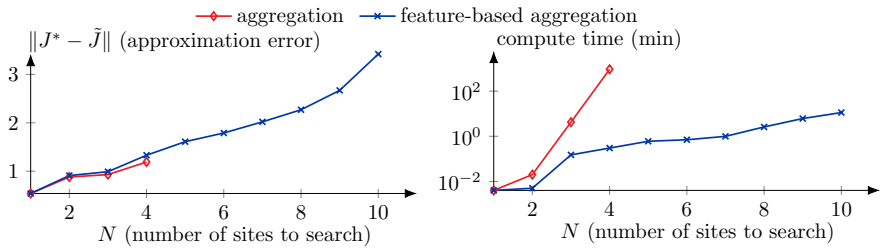
\end{example}

\section{Analysis of the Approximation Error}\label{sec:analysis}
In this section, we study the difference between the cost function approximation $\tilde J$ of Eq.~\eqref{eq:j_tilde} relative to the optimal cost function $J^*$. We first consider the values $\Tilde{J}(b)$ for beliefs in $\tilde B$; cf.~Eq.~\eqref{eq:btilde}. Based on the analysis of these values, we quantify the differences between $\tilde J$ and the optimal cost $J^*$. Moreover, when the set $\tilde B$, the mapping $\Phi$ [cf.~Eq.~\eqref{eq:phi}], and the belief aggregation probabilities $\psi$ [cf.~Eq.~\eqref{eq:blf_agg}] satisfy certain conditions, we show that $\tilde J$ is a lower bound of $J^*$ within a certain error.

\subsection{Cost Function Approximation for Beliefs in $\tilde B$}
Starting from some representative feature belief $\Tilde{q}$, our method generates deterministically a belief $b\in \tilde B$ according to $b=D(\tilde q)$, where the mapping $D:\tilde Q\mapsto B$ is defined in Eq.~\eqref{eq:D_map}, and $\tilde B$ is the image of the mapping $D$, i.e.,
$$\tilde B=\big\{b\in B\,|\,b=D(\tilde q)\hbox{ for some }\tilde q\in \tilde Q\big\},$$
cf. Eq.~\eqref{eq:btilde}. To represent the belief space $B$ sufficiently well by the representative feature belief space $\tilde Q$, it is desirable for different representative feature beliefs $\tilde q$ and $\tilde q'$ to map to different belief states $\tilde b$ and $\tilde b'$. In other words, the mapping $D$, when viewed as a mapping from $\tilde Q$ to $\tilde B$, should be one-to-one. Indeed this is the case, thanks to the assumption that the nonempty sets $I_x$ are disjoint, as we show next.

\begin{proposition}\label{prop:unique_beliefs}
Let $\tilde q$ and $\tilde q'$ be two representative feature beliefs that are different, i.e., $\tilde q,\tilde q'\in \tilde Q$ and $\tilde q\neq \tilde q'$. Then the belief states $b=D(\tilde q)$ and $b'=D(\tilde q')$ are also different.
\end{proposition}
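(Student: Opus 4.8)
The plan is to exploit the disjointness of the sets $I_x$ to show that the defining sum for $D$ in Eq.~\eqref{eq:rep_b} decouples across feature states, so that each component $b(i)$ is controlled by a single feature belief weight. First I would observe that, for any fixed state $i\in\{1,\dots,n\}$, disjointness of the sets $I_x$ implies that $i$ belongs to at most one set $I_x$. Combined with the disaggregation requirement $d_{xi}=0$ for $i\notin I_x$ in Eq.~\eqref{eq:disagg}, this means that at most one term in the sum $b(i)=\sum_{x\in\mathcal{F}}\tilde q(x)d_{xi}$ can be nonzero. Hence, if $x_i$ denotes the unique feature state with $i\in I_{x_i}$ (when it exists), we have the simplified expression
$$
b(i)=\tilde q(x_i)\,d_{x_i i},
$$
and $b(i)=0$ when $i$ lies in no set $I_x$. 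The same identity holds for $b'(i)$ with $\tilde q$ replaced by $\tilde q'$.

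Next I would localize the difference between $\tilde q$ and $\tilde q'$. Since $\tilde q\neq\tilde q'$, there is a feature state $\bar x\in\mathcal{F}$ with $\tilde q(\bar x)\neq\tilde q'(\bar x)$. Because $\{d_{\bar x i}\,|\,i=1,\dots,n\}$ is a probability distribution supported on $I_{\bar x}$ (it sums to $1$ and vanishes outside $I_{\bar x}$), there must exist a state $\bar i\in I_{\bar x}$ with $d_{\bar x\bar i}>0$; in particular $I_{\bar x}$ is nonempty. For this index $\bar i$, the simplified formula gives $b(\bar i)=\tilde q(\bar x)\,d_{\bar x\bar i}$ and $b'(\bar i)=\tilde q'(\bar x)\,d_{\bar x\bar i}$. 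Since $d_{\bar x\bar i}>0$ while $\tilde q(\bar x)\neq\tilde q'(\bar x)$, these two values differ, so $b(\bar i)\neq b'(\bar i)$ and therefore $b\neq b'$. This establishes that $D$, viewed as a map from $\tilde Q$ onto $\tilde B$, is one-to-one.

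I do not expect any serious obstacle here; the argument is essentially a one-line consequence of disjointness. The only point that requires a moment of care is guaranteeing the existence of a state $\bar i\in I_{\bar x}$ with strictly positive disaggregation probability, which I would handle by invoking the fact that each $\{d_{\bar x i}\}$ is a genuine probability distribution (so it cannot be identically zero), rather than assuming nonemptiness of $I_{\bar x}$ separately. It is worth emphasizing in the write-up that disjointness is exactly what makes the contributions of distinct feature states to the belief vector occupy disjoint coordinate blocks, which is the structural feature driving the injectivity.
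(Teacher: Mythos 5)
Your proposal is correct and follows essentially the same route as the paper's proof: both use disjointness of the sets $I_x$ together with Eq.~\eqref{eq:disagg} to collapse the sum $\sum_{x\in\mathcal{F}}\tilde q(x)d_{xi}$ to a single term, and then evaluate at a state $\bar i\in I_{\bar x}$ with $d_{\bar x\bar i}>0$ where $\tilde q(\bar x)\neq\tilde q'(\bar x)$. Your explicit justification that such a state $\bar i$ exists (because $\{d_{\bar x i}\}$ is a probability distribution supported on $I_{\bar x}$ and hence cannot vanish identically) is a small but welcome clarification of a point the paper leaves implicit.
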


\begin{proof}
    For every $x\in \mathcal{F}$, let us select some $i_x\in \{1,2,\dots,n\}$ such that $i_x\in I_x$ and $d_{xi_x}>0$. Since the nonempty sets $I_x$ are disjoint, we have that for every $i_x$,
    \begin{equation}
        \label{eq:ix_disjoint}
        i_x\not\in I_{x'},\qquad \hbox{for all }x'\neq x
    \end{equation}
    Recall that for every feature state $x$, its disaggregation probabilities satisfy $d_{xi}>0$ only if $i\in I_x$; cf. Eq.~\eqref{eq:disagg}. Therefore, Eq.~\eqref{eq:ix_disjoint} implies that for every $i_x$,
    \begin{equation}
        \label{eq:disagg_ix}
        d_{x'i_x}=0,\qquad \hbox{for all }x'\neq x.
    \end{equation}
    Since $\tilde q\neq\tilde q'$, there exists some $\tilde x\in \mathcal{F}$ such that $\tilde q(\tilde x)\neq \tilde q'(\tilde x)$.
    We will show that $b(i_{\tilde x})\neq b'(i_{\tilde x})$, which implies that $b\neq b'$. 

    Indeed, by definition, we have
$$b(i)=\sum_{x\in\mathcal{F}}\tilde q(x)d_{xi},\quad b'(i)=\sum_{x\in\mathcal{F}}\tilde q'(x)d_{xi},\qquad i=1,2,\dots,n.$$
Let us focus on the component $b(i_{\tilde x})$ corresponding to $i_{\tilde x}\in I_{\tilde x}$. It is given by
    $$b(i_{\tilde x})=\sum_{x\in\mathcal{F}}\tilde q(x)d_{xi_{\tilde x}}=\tilde q(\tilde x)d_{\tilde xi_{\tilde x}},$$
    where the last equality follows from Eq.~\eqref{eq:disagg_ix}. Similarly, $b'(i_{\tilde x})=\tilde q'(\tilde x)d_{\tilde xi_{\tilde x}}$. Since $\tilde q(\tilde x)\neq \tilde q'(\tilde x)$, we obtain $b(i_{\tilde x})\neq b'(i_{\tilde x})$.
\end{proof}

Note that without the sets $I_x$ being disjoint, different $\tilde q$ and $\tilde q'$ may be mapped to the same $b$ via the mapping $D$. Let us provide such an example. 

\begin{example}[Pathology without $I_x$ being disjoint]
Suppose that representative feature beliefs $\tilde q$ and $\tilde q'$ satisfy
\begin{equation}
    \label{eq:ix_disjoint_p1}
    \tilde q(\tilde x)\neq\tilde q'(\tilde x),\quad \tilde q(x')\neq\tilde q'(x'),\qquad  \hbox{for some }\tilde x,\,x',
\end{equation}
and 
\begin{equation}
    \label{eq:ix_disjoint_p2}
    \tilde q(x)=\tilde q'(x), \qquad \hbox{for all }x\neq \tilde x\hbox{ and }x\neq x'.
\end{equation}
Suppose further that the sets $I_{\tilde x}$ and $I_{x'}$ and some state $\tilde i$ satisfy 
\begin{equation}
    \label{eq:tilde_i}
    \{\tilde i\}= I_{\tilde x}= I_{x'},\qquad \tilde i\not\in I_{x},\quad \hbox{for all }x\neq \tilde x\hbox{ and }x''\neq x'.
\end{equation}
We claim that $D(\tilde q)=D(\tilde q')$. Indeed, from Eq.~\eqref{eq:ix_disjoint_p2}, we have
\begin{equation}
    \label{eq:tilde_q_tilde_q_p}
    \tilde q(\tilde x)+\tilde q(x')=\tilde q'(\tilde x)+\tilde q'(x').
\end{equation}
Moreover, Eq.~\eqref{eq:tilde_i} implies that 
\begin{equation}
    \label{eq:d_xi_d_x_p_i}
    d_{\tilde x\tilde i}=d_{x'\tilde i}=1,\qquad d_{x\tilde i}=0,\quad \hbox{for all }x\neq \tilde x\hbox{ and }x'.
\end{equation} 
Consider the component of $D(\tilde q)$ that corresponds to the state $\tilde i$. It is given by
\begin{align*}
    \sum_{x\in\mathcal{F}}\tilde q(x)d_{x\tilde i}=&\tilde q(\tilde x)d_{\tilde x\tilde i}+\tilde q(x')d_{x'\tilde i}=\tilde q(\tilde x)+\tilde q(x')\\
    =&\tilde q'(\tilde x)+\tilde q'(x')=\tilde q'(\tilde x)d_{\tilde x\tilde i}+\tilde q'(x')d_{x'\tilde i}\\
    =&\sum_{x\in\mathcal{F}}\tilde q'(x)d_{x\tilde i},
\end{align*}
where the first and the second equalities follow from Eq.~\eqref{eq:d_xi_d_x_p_i}, the third equality is Eq.~\eqref{eq:tilde_q_tilde_q_p}, and the fourth and fifth equalities follow from Eq.~\eqref{eq:d_xi_d_x_p_i}. The last term $\sum_{x\in\mathcal{F}}\tilde q'(x)d_{x\tilde i}$ is the component of $D(\tilde q')$ that corresponds to the state $\tilde i$.

As for every other component of $D(\tilde q)$ that corresponds to some state $i\neq \tilde i$, we have
\begin{align*}
    \sum_{x\in\mathcal{F}}\tilde q(x)d_{xi}=&\sum_{x\in\mathcal{F},x\neq \tilde x,x\neq x'}\tilde q(x)d_{xi}\\
    =&\sum_{x\in\mathcal{F},x\neq \tilde x,x\neq x'}\tilde q'(x)d_{xi}=\sum_{x\in\mathcal{F}}\tilde q'(x)d_{xi},
\end{align*}
where the first and the last equality follow from the condition $\{\tilde i\}= I_{\tilde x}= I_{x'}$ and Eq.~\eqref{eq:ix_disjoint_p2}.
Clearly, the last term is the component of $D(\tilde q')$ that corresponds to $i$. Therefore, we have $D(\tilde q)=D(\tilde q')$.
\end{example}

The preceding proposition states that through the disaggregation process, each representative feature belief $\tilde q$ maps to a unique belief state $b$ that belongs to the set $ \tilde B$. In other words, the mapping $D$, when viewed as a mapping from $\tilde Q$ to $\tilde B$, is one-to-one. As our next result, we show that the aggregation process defines the inverse of such a mapping.   

\begin{proposition}\label{prop:consistency}
Let $b$ be a belief state belonging to the set $\tilde B$ and $\tilde q$ be a representative belief. Then $b=D(\tilde q)$ if and only if  $\tilde q=\Phi(b)$, i.e.,
$$b(i)=\sum_{x\in\mathcal{F}}\tilde q(x)d_{xi},\qquad i=1,2,\dots,n,$$
if and only if
$$\tilde q(y)=\sum_{j=1}^nb(j)\phi_{jy},\qquad \hbox{for all }y\in\mathcal{F}.$$
\end{proposition}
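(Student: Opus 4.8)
The plan is to prove the biconditional by establishing both directions, exploiting the disjointness of the sets $I_x$ together with the defining properties of the disaggregation probabilities (Eq.~\eqref{eq:disagg}) and the aggregation probabilities (Eq.~\eqref{eq:agg}). The forward direction, showing that $b=D(\tilde q)$ implies $\tilde q=\Phi(b)$, is the substantive part. The reverse direction should then follow quickly, either by a symmetric argument or by appealing to Prop.~\ref{prop:unique_beliefs}, since once $D$ is known to be one-to-one from $\tilde Q$ onto $\tilde B$ and we have exhibited a left inverse, uniqueness pins down the correspondence.

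For the forward direction, I would assume $b=D(\tilde q)$, so that $b(i)=\sum_{x\in\mathcal{F}}\tilde q(x)d_{xi}$ for all $i$, and aim to verify the aggregation identity $\sum_{j=1}^n b(j)\phi_{jy}=\tilde q(y)$ for each fixed $y\in\mathcal{F}$. The key step is to substitute the expression for $b(j)$ into the left-hand side, giving a double sum $\sum_{j=1}^n\sum_{x\in\mathcal{F}}\tilde q(x)d_{xj}\phi_{jy}$. I would then interchange the order of summation and isolate, for each $x$, the inner sum $\sum_{j=1}^n d_{xj}\phi_{jy}$. Here the two structural assumptions do the work: by Eq.~\eqref{eq:disagg}, $d_{xj}=0$ unless $j\in I_x$, so the inner sum restricts to $j\in I_x$; and by Eq.~\eqref{eq:agg}, $\phi_{jy}=1$ whenever $j\in I_y$. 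The disjointness of the sets $I_x$ means that for $x\neq y$ the states $j\in I_x$ lie outside $I_y$, forcing $\phi_{jy}=0$ on those terms (since $\phi_{jy}$ is a probability distribution over $\mathcal{F}$ that puts unit mass on the feature containing $j$). Hence only the term $x=y$ survives, contributing $\tilde q(y)\sum_{j\in I_y}d_{yj}=\tilde q(y)$, where the final equality uses that $\{d_{yj}\}$ is a probability distribution supported on $I_y$.

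The main obstacle, and the point requiring care, is justifying that $\phi_{jy}=0$ for $j\in I_x$ with $x\neq y$. Equation~\eqref{eq:agg} only asserts $\phi_{jy}=1$ for $j\in I_y$; it does not by itself state that $\phi_{jy}=0$ for $j$ in other sets. The resolution is that $\{\phi_{jy}\mid y\in\mathcal{F}\}$ is a probability distribution, so $\sum_{y}\phi_{jy}=1$; combined with $\phi_{jy'}=1$ for the unique $y'$ with $j\in I_{y'}$ (unique by disjointness), this forces $\phi_{jy}=0$ for all $y\neq y'$. I would state this normalization observation explicitly before the main computation, as it is the hinge on which the interchange-of-summation argument turns.

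For the reverse direction, suppose $\tilde q=\Phi(b)$ with $b\in\tilde B$. Since $b\in\tilde B$, there is some $\tilde q''\in\tilde Q$ with $b=D(\tilde q'')$; applying the forward direction already proved gives $\tilde q''=\Phi(b)=\tilde q$, whence $b=D(\tilde q'')=D(\tilde q)$. This closes the biconditional. I expect this reverse step to be a one- or two-line consequence of the forward direction and the definition of $\tilde B$, so essentially all the real content lies in the forward computation and the normalization argument described above.
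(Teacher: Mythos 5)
Your proposal is correct and follows essentially the same route as the paper's proof: the same substitution and interchange of summation reducing everything to the sums $\sum_{j\in I_x}d_{xj}\phi_{jy}$, the same use of disjointness plus the normalization of $\{\phi_{jy}\}$ to kill the $x\neq y$ terms, and the same derivation of the reverse direction from the forward one via the definition of $\tilde B$. Your explicit remark that Eq.~\eqref{eq:agg} alone does not give $\phi_{jy}=0$ off $I_y$ and that the probability normalization is the hinge is a point the paper uses only implicitly, so making it explicit is a small improvement rather than a deviation.
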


\begin{proof}
   We first prove the only if part. Let us denote by $q$ the vector $\Phi(b)$. In what follows, we will show that $q=\tilde q$. By applying the relation $b=D(\tilde q)$, we have for every $y\in \mathcal{F}$,
    \begin{equation}
    \label{eq:one_to_one_blf}
        \begin{aligned}
        q(y)=&\sum_{j=1}^nb(j)\phi_{jy}=\sum_{j=1}^n\sum_{x\in\mathcal{F}}\tilde q(x)d_{xj}\phi_{jy}\\
        =&\sum_{x\in\mathcal{F}}\tilde q(x)\sum_{j=1}^nd_{xj}\phi_{jy}=\sum_{x\in\mathcal{F}}\tilde q(x)\sum_{j\in I_x}d_{xj}\phi_{jy},
        \end{aligned}
    \end{equation}
    where the last equality holds because $d_{xj}=0$ for all $j\not\in I_x$; cf.~Eq.~\eqref{eq:disagg}.

    Let us now focus on the sum $\sum_{j\in I_x}d_{xj}\phi_{jy}$. We claim that
    \begin{equation}
    \label{eq:agg_disagg}
        \sum_{j\in I_x}d_{xj}\phi_{jy}=\begin{cases}1\quad &\hbox{if }x=y,\\
        0&\hbox{otherwise}.\end{cases}
    \end{equation}
    Suppose $x=y$. We have $\phi_{jy}=1$ for all $j\in I_x=I_y$ according to Eq.~\eqref{eq:agg}. Since $\sum_{j\in I_x}d_{xj}=\sum_{j=1}^nd_{xj}=1$, we obtain the desired equality. Suppose $x\neq y$. Using the fact that $I_x$ and $I_y$ are disjoint and that $\phi_{jx}=1$ for $j\in I_x$, we have $\phi_{jy}=0$ for $j\in I_x$. Therefore, $\sum_{j\in I_x}d_{xj}\phi_{jy}=0$. Combining Eq.~\eqref{eq:one_to_one_blf} with Eq.~\eqref{eq:agg_disagg} leads to the equality $q(y)=\Tilde{q}(y)$. Since $y$ is arbitrary, we obtain $q=\tilde q$.

    Next, we show the if part. Suppose $\tilde q=\Phi(b)$. Since $b\in \tilde B$, there exists some $\tilde q'$ such that $b=D(\tilde q')$. By the only if of the statement, the equality $b=D(\tilde q')$ implies $\tilde q'=\Phi(b)$. However, $\tilde q=\Phi(b)$ by assumption. Therefore, $\tilde q'=\tilde q$, or equivalently, $\tilde q=\Phi(b)$.
\end{proof}

Having investigated the properties of the mappings $D$ and $\Phi$, we are ready to characterize the cost function approximation $\tilde J$ for beliefs in $\tilde B$.

\begin{proposition}\label{prop:j_tilde_b_tilde}
    Let $b$ be a belief state belonging to the set $\tilde B$ such that $b=D(\tilde q)$ for some $\tilde q\in \tilde Q$. We have
    \begin{equation}
        \label{eq:blf_agg_phi_b}
        \psi_{\Phi(b)\tilde p}=1,\qquad \psi_{\Phi(b)\tilde p'}=0, \quad \hbox{for all }\tilde p'\neq \tilde p.
    \end{equation}
    Moreover, 
    $$\tilde J(b)=r^*_{\tilde q}.$$
\end{proposition}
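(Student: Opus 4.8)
The plan is to reduce the entire statement to the consistency relation already established in Proposition~\ref{prop:consistency}, combined with the normalization requirement~\eqref{eq:blf_agg} on the belief aggregation probabilities. First I would use the hypothesis that $b\in\tilde B$ with $b=D(\tilde q)$ and apply Proposition~\ref{prop:consistency}, which immediately gives $\Phi(b)=\tilde q$. In words, the feature belief obtained by aggregating the belief state $b$ is exactly the representative feature belief $\tilde q$ from which $b$ was disaggregated; this is the crucial observation that turns the generic aggregation weight $\psi_{\Phi(b)\,\cdot}$ into a weight evaluated at a genuine representative belief.

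Next I would invoke requirement~\eqref{eq:blf_agg}, which states that $\psi_{\tilde q\tilde q}=1$ for every $\tilde q\in\tilde Q$. Since $\{\psi_{q\tilde p}\mid \tilde p\in\tilde Q\}$ is a probability distribution over $\tilde Q$ for each feature belief $q$ (cf.\ item (d)), having one of its entries equal to $1$ forces every other entry to vanish. Applying this with $q=\Phi(b)=\tilde q$ yields $\psi_{\Phi(b)\tilde q}=1$ together with $\psi_{\Phi(b)\tilde p'}=0$ for all $\tilde p'\neq\tilde q$, which is precisely the claim~\eqref{eq:blf_agg_phi_b} once we identify the distinguished representative belief $\tilde p$ in the statement with $\tilde q$. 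Finally I would substitute these values into the interpolation formula~\eqref{eq:j_tilde}: in the sum $\sum_{\tilde p\in\tilde Q}\psi_{\Phi(b)\tilde p}\,r^*_{\tilde p}$ the only surviving term is the one indexed by $\tilde q$, so $\tilde J(b)=r^*_{\tilde q}$, as desired.

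There is no substantive obstacle here; the result is essentially a corollary of Proposition~\ref{prop:consistency}, which carries all the real work. The only point that requires a moment of care is the identification just described, namely that the representative belief receiving full aggregation weight from $\Phi(b)$ coincides with the representative belief $\tilde q$ from which $b$ was disaggregated. This is exactly the content of the consistency relation $\Phi(D(\tilde q))=\tilde q$, and it is precisely why disjointness of the sets $I_x$ (used to establish Propositions~\ref{prop:unique_beliefs} and~\ref{prop:consistency}) is what makes the interpolation formula exact on $\tilde B$.
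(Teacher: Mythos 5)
Your proposal is correct and follows essentially the same route as the paper's proof: apply Proposition~\ref{prop:consistency} to get $\Phi(b)=\tilde q$, use the normalization~\eqref{eq:blf_agg} (together with the fact that $\psi_{q\,\cdot}$ is a probability distribution) to obtain~\eqref{eq:blf_agg_phi_b}, and substitute into the interpolation formula~\eqref{eq:j_tilde}. Your explicit remark that a unit entry in a probability distribution forces the remaining entries to vanish is just a spelled-out version of the step the paper leaves implicit.
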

\begin{proof}
    By Prop.~\ref{prop:consistency}, $b=D(\tilde q)$ implies $\tilde q=\Phi(b)$. In view of the definitions of belief aggregation probability $\psi$ [cf. Eq.~\eqref{eq:blf_agg}], we obtain Eq.~\eqref{eq:blf_agg_phi_b}. Moreover, 
    $$\tilde J(b)=\sum_{\tilde q'\in\tilde Q}\psi_{\Phi(b) \tilde q'}r^*_{\tilde q'}=\sum_{\tilde q'\in\tilde Q}\psi_{\tilde q \tilde q'}r^*_{\tilde q'}=r^*_{\tilde q},$$
    where the last equality follows from Eq.~\eqref{eq:blf_agg_phi_b}.
\end{proof}

In summary, for every $b$ such that $b=D(\tilde q)$ for some $\tilde q$, its corresponding cost function approximation value $\tilde J(b)$ depends solely on the component of $r^*$ that corresponds to the representative feature belief $\tilde q$.  

\subsection{Error Bound of The Cost Function Approximation $\tilde J$}
We will now quantify the difference between the cost function approximation $\tilde J$ and the optimal cost function $J^*$; cf.~Eqs.~\eqref{eq:j_tilde} and~\eqref{eq:optimal_cost}. To this end, we define a \textit{footprint set} of each representative feature belief $\tilde q \in \tilde{Q}$ as
\begin{equation}
    \label{eq:s_def}
    S_{\tilde q}=\{b\in B\,|\,\psi_{\Phi(b)\tilde q}>0\}.
\end{equation}
In other words, for every belief state $b\in S_{\tilde q}$, the cost function approximation $\tilde J(b)$ depends on $r^*_{\tilde q}$ with weight $\psi_{\Phi(b)\tilde q}$; cf.~Eq.~\eqref{eq:blf_agg}. Similarly, for every belief state $b\in B$, we associate a \textit{representative set} $R_b$ defined as
\begin{equation}
    \label{eq:r_def}
    R_b=\{\tilde q\in \tilde Q\,|\,\psi_{\Phi(b)\tilde q}>0\}.
\end{equation}
In view of Props.~\ref{prop:consistency} and \ref{prop:j_tilde_b_tilde}, we have
\begin{equation}
    \label{eq:s_r_b_tilde}
    D(\tilde q)\in S_{\tilde q},\qquad \hbox{for all }\tilde q\in \tilde Q.
\end{equation}

We now characterize the approximation error of $\tilde J$. Our analysis relies critically on the condition \eqref{eq:s_r_b_tilde}.
\begin{proposition}\label{prop:error_bound}
    The cost function approximation $\tilde J$ defined in Eq.~\eqref{eq:j_tilde} and the optimal cost function $J^*$ satisfy
\begin{align}\label{eq:error_bound_form}
|\tilde J(b)-J^*(b)|\leq \frac{\epsilon}{1-\alpha},\qquad \hbox{for all }b\in B,
\end{align}  
    where $\epsilon$ is the scalar defined by
    \begin{equation}
        \label{eq:ep_def}
        \epsilon=\max_{\tilde q \in \Tilde{Q}}\sup_{b,b'\in S_{\tilde q}}|J^*(b)-J^*(b')|.
    \end{equation}
\end{proposition}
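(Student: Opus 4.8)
The plan is to exploit the fact that the interpolated approximation $\tilde J$ satisfies the \emph{exact} Bellman equation of the original POMDP at each representative belief $D(\tilde q)$, and then to propagate this identity through the contraction property of the Bellman operator together with the ``footprint bridge'' supplied by Eq.~\eqref{eq:s_r_b_tilde}. To this end I would first introduce the Bellman operator $T$ of the original POMDP,
$$(TJ)(b) = \min_{u\in U}\bigg[\hat g(b,u) + \alpha\sum_{z\in Z}\hat p(z\,|\,b,u)\, J\big(F(b,u,z)\big)\bigg],$$
whose unique fixed point is $J^*$ [cf.~Eq.~\eqref{eq:optimal_cost}] and which is an $\alpha$-contraction in the maximum norm (this is standard and is proved exactly as Prop.~\ref{prop:h_op}). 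The central identity I would establish is
$$r^*_{\tilde q} = (T\tilde J)\big(D(\tilde q)\big),\qquad \tilde q\in \tilde Q.$$
This follows by writing $r^*_{\tilde q}=(Hr^*)(\tilde q)$ [cf.~Eq.~\eqref{eq:h_oprator}], substituting the definition $G(\tilde q,u,z)=\Phi\big(F(D(\tilde q),u,z)\big)$ from Eq.~\eqref{eq:g_def}, and recognizing that the inner sum $\sum_{\tilde q'}\psi_{\Phi(F(D(\tilde q),u,z))\,\tilde q'}\,r^*_{\tilde q'}$ is exactly $\tilde J\big(F(D(\tilde q),u,z)\big)$ by the definition of $\tilde J$ in Eq.~\eqref{eq:j_tilde}.

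Since $J^*$ is the fixed point of $T$, we also have $J^*\big(D(\tilde q)\big)=(TJ^*)\big(D(\tilde q)\big)$, so the contraction property yields
$$\big|\,r^*_{\tilde q} - J^*\big(D(\tilde q)\big)\,\big| = \big|(T\tilde J)\big(D(\tilde q)\big) - (TJ^*)\big(D(\tilde q)\big)\big| \le \alpha\,\norm{\tilde J - J^*}.$$
Next I would set up the footprint bridge. For any $b\in B$ and any $\tilde q\in R_b$, the definitions of $R_b$ and $S_{\tilde q}$ [cf.~Eqs.~\eqref{eq:r_def} and \eqref{eq:s_def}] give $b\in S_{\tilde q}$, while Eq.~\eqref{eq:s_r_b_tilde} gives $D(\tilde q)\in S_{\tilde q}$; hence both beliefs lie in the same footprint set, and the definition of $\epsilon$ in Eq.~\eqref{eq:ep_def} gives $\big|J^*(b)-J^*\big(D(\tilde q)\big)\big|\le\epsilon$. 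Writing $e=\norm{\tilde J - J^*}$ and using that the weights $\psi_{\Phi(b)\tilde q}$ form a probability distribution over $\tilde q\in R_b$, I would decompose
$$\tilde J(b) - J^*(b) = \sum_{\tilde q\in R_b}\psi_{\Phi(b)\tilde q}\Big[\big(r^*_{\tilde q} - J^*(D(\tilde q))\big) + \big(J^*(D(\tilde q)) - J^*(b)\big)\Big].$$
Bounding the first bracket by $\alpha e$ and the second by $\epsilon$, and using $\sum_{\tilde q\in R_b}\psi_{\Phi(b)\tilde q}=1$, gives $\big|\tilde J(b)-J^*(b)\big|\le \alpha e + \epsilon$ for every $b\in B$. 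Taking the supremum over $b$ produces the self-referential inequality $e\le \alpha e+\epsilon$, which rearranges to $e\le \epsilon/(1-\alpha)$, i.e.\ Eq.~\eqref{eq:error_bound_form}.

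The hard part, and the step where the argument could easily go wrong, is securing the factor $\alpha$ in the bound on $\big|r^*_{\tilde q}-J^*(D(\tilde q))\big|$. A naive estimate $\big|r^*_{\tilde q}-J^*(D(\tilde q))\big|\le e$ would collapse the final inequality to the vacuous $e\le e+\epsilon$. The essential insight is that $r^*_{\tilde q}$ is not an arbitrary number but equals $T$ applied to the \emph{interpolated} function $\tilde J$ evaluated at $D(\tilde q)$, so that the discount factor is introduced through the contraction of $T$; the footprint bridge then contributes only the $\epsilon$ term. Everything else is bookkeeping, relying on the convexity of the interpolation weights and on the consistency results of Props.~\ref{prop:consistency} and \ref{prop:j_tilde_b_tilde} that underlie Eq.~\eqref{eq:s_r_b_tilde}.
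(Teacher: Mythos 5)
Your proof is correct, but it follows a genuinely different route from the paper's. The paper constructs an explicit super-solution of the aggregate Bellman operator $H$, namely $\overline r_{\tilde q}=\inf_{b\in S_{\tilde q}}J^*(b)+\epsilon/(1-\alpha)$ (and a mirror-image sub-solution $\underline r$), verifies $H\overline r\le\overline r$ through a chain of inequalities invoking Bellman's equation for $J^*$ and the condition $D(\tilde q)\in S_{\tilde q}$, and then uses the monotonicity and contractivity of $H$ (Prop.~\ref{prop:h_op}) to drive a monotonically decreasing VI sequence from $\overline r$ down to $r^*$, so that $r^*\le\overline r$; the two one-sided bounds are proved separately. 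You instead isolate the identity $r^*_{\tilde q}=(T\tilde J)\big(D(\tilde q)\big)$, where $T$ is the Bellman operator of the original POMDP, which is indeed immediate from $r^*=Hr^*$, the definition of $G$ in Eq.~\eqref{eq:g_def}, and the interpolation formula~\eqref{eq:j_tilde}; you then pass the discount factor through the contraction of $T$ rather than of $H$, and close with the self-referential inequality $e\le\alpha e+\epsilon$ for $e=\norm{\tilde J-J^*}$, which delivers both directions of the bound at once. Both arguments rest on the same two structural facts --- $D(\tilde q)\in S_{\tilde q}$ [Eq.~\eqref{eq:s_r_b_tilde}] and the fact that the weights $\psi_{\Phi(b)\tilde q}$ form a probability distribution --- and you correctly identify that the factor $\alpha$ must enter through a contraction step rather than a naive triangle inequality. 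What your route buys is a shorter, symmetric argument that makes the origin of the $\alpha$ transparent; what it costs is that the rearrangement of $e\le\alpha e+\epsilon$ into $e\le\epsilon/(1-\alpha)$ is only valid if $e$ is finite a priori. This does hold --- $J^*$ is bounded because the state and control spaces are finite and $g$ is bounded, and $\tilde J$ takes values in the convex hull of the finitely many numbers $r^*_{\tilde q}$ --- but you should state it explicitly, just as the paper opens its proof by establishing that $\epsilon$ is finite.
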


\begin{proof}
First, we show that $\epsilon$ is finite. Since the number of states and controls is finite and the cost function $g$ is bounded, it follows that the function $\hat g$ in Eq.~\eqref{eq:g_hat_p_hat} is bounded. Thus, from the theory of discounted cost DP problems, it follows that $J^*$ is bounded; see \cite[Section~1.2]{bertsekas2012dynamic}. As a result, for every $\tilde q$, $\sup_{b,b'\in S_{\tilde q}}|J^*(b)-J^*(b')|$ is finite, and the finiteness of $\epsilon$ follows from the finiteness of the set $\tilde Q$.

    Next, we consider the operator $H$ defined by Eq.~\eqref{eq:h_oprator} and the $m$-dimensional vector $\overline{r}$ with components defined by
    \begin{equation}
        \label{eq:r_bar_def}
        \overline{r}_{\tilde q}=\inf_{b\in S_{\tilde q}}J^*(b)+\frac{\epsilon}{1-\alpha},\qquad \tilde q\in \tilde Q.
    \end{equation}
    These components are finite, and therefore, we have $\overline{r}\in \Re^m$. In view of the definitions of $S_{\tilde q}$, $R_b$, and $\overline{r}_{\tilde q}$ given in Eqs.~\eqref{eq:s_def}, \eqref{eq:r_def}, and \eqref{eq:r_bar_def} we have
    \begin{equation}
        \label{eq:r_bar_j_star}
        \overline{r}_{\tilde q}\leq J^*(b)+\frac{\epsilon}{1-\alpha},\qquad \hbox{for all }\tilde q\in R_b, \,b\in B.
    \end{equation}

    For every $\tilde q\in \tilde Q$, we have
\begin{align*}
    &(H\overline{r})(\tilde q)\\
    =&\min_{u\in U}\bigg[\hat{g}\big(D(\tilde q),u\big)+\alpha\sum_{z\in Z}\hat{p}\big(z\,|\,D(\tilde q),u\big)\sum_{\tilde q'\in \tilde Q}\psi_{G(\tilde q,u,z)\tilde q'}\overline{r}_{\tilde q'}\bigg]\\
    =&\min_{u\in U}\Bigg[\hat{g}\big(D(\tilde q),u\big)+\alpha\sum_{z\in Z}\hat{p}\big(z\,|\,D(\tilde q),u\big)\sum_{\tilde q'\in \tilde Q}\psi_{\Phi\big(F(D(\tilde q),u,z)\big)\tilde q'}\overline{r}_{\tilde q'}\Bigg]\\
    =&\min_{u\in U}\Bigg[\hat{g}\big(D(\tilde q),u\big)+\alpha\sum_{z\in Z}\hat{p}\big(z\,|\,D(\tilde q),u\big)\sum_{\tilde q'\in R_{F(D(\tilde q),u,z)}}\psi_{\Phi\big(F(D(\tilde q),u,z)\big)\tilde q'}\overline{r}_{\tilde q'}\Bigg]\\
    \leq&\min_{u\in U}\bigg[\hat{g}\big(D(\tilde q),u\big)+\alpha\sum_{z\in Z}\hat{p}\big(z\,|\,D(\tilde q),u\big)J^*\Big(F\big(D(\tilde q),u,z\big)\Big)\bigg]+\frac{\alpha \epsilon}{1-\alpha}\\
    =&J^*\big(D(\tilde q)\big)+\frac{\alpha \epsilon}{1-\alpha}\\
    \leq &\inf_{b\in S_{\tilde q}}J^*(b)+\epsilon +\frac{\alpha \epsilon}{1-\alpha}\\
    =&\overline{r}_{\tilde q},
\end{align*}
where the second equality is due to the definition of $G$; cf.~Eq.~\eqref{eq:g_def}. The third equality is due to the definition of $R_{F\big(D(\tilde q),u,z\big)}$; cf.~Eq.~\eqref{eq:r_def}. The first inequality follows from Eq.~\eqref{eq:r_bar_j_star} with $F\big(D(\tilde q),u,z\big)$ in place of $b$ and the fact that $\sum_{z\in Z}\hat{p}\big(z\,|\,D(\tilde q),u\big)=1$. The fourth equality holds because $J^*$ satisfies Bellman's equation of the original problem; cf.~Eq.~\eqref{eq:optimal_cost}. The last inequality follows from the definition of $\epsilon$ and the fact that $D(\tilde q)\in S_{\tilde q}$, as stated in Eq.~\eqref{eq:s_r_b_tilde}; cf.~Eq.~\eqref{eq:ep_def}. From this inequality, we obtain $H\overline{r}\leq \overline{r}$. 

Due to the monotonicity of $H$ [cf.~Prop.~\ref{prop:h_op}], we have that the sequence $\{r^k\}$ with $r^0=\overline{r}$ and $r^{k+1}=Hr^k$ is monotonically decreasing. Moreover, from the contractivity of $H$ [cf.~Prop.~\ref{prop:h_op}], we have $\lim_{k\to\infty}r^k=r^*$. Combining these properties with Eq.~\eqref{eq:r_bar_j_star}, we obtain
$$r^*_{\tilde q}\leq \overline{r}_{\tilde q}\leq J^*(b)+\frac{\epsilon}{1-\alpha},\qquad \hbox{for all }\tilde q\in R_b, \,b\in B.$$
As a result, we have
$$\tilde J(b)=\sum_{\tilde q\in\tilde Q}\psi_{\Phi(b) \tilde q}r^*_{\tilde q}=\sum_{\tilde q\in R_b}\psi_{\Phi(b) \tilde q}r^*_{\tilde q}\leq J^*(b)+\frac{\epsilon}{1-\alpha}.$$

The converse inequality can be shown by considering the $m$-dimensional vector $\underline{r}$ with components
$$\underline{r}_{\tilde q}=\sup_{b\in S_{\tilde q}}J^*(b)-\frac{\epsilon}{1-\alpha},\qquad \tilde q\in \tilde Q.$$
The proof is thus complete.
\end{proof}
The scalar $\epsilon$ of Eq.~\eqref{eq:ep_def} is the maximum variation of the optimal cost function $J^{*}$ within the footprint sets $\{S_{\tilde{q}} \mid \tilde{q} \in \tilde{Q}\}$; see Fig.~\ref{fig:epsilon}. Thus, the meaning of the preceding proposition is that if $J^{*}$ varies by at most $\epsilon$ within each footprint set, then the interpolation formula in Eq.~\eqref{eq:j_tilde} yields a cost function approximation that is within $\frac{\epsilon}{1-\alpha}$ of the optimal.
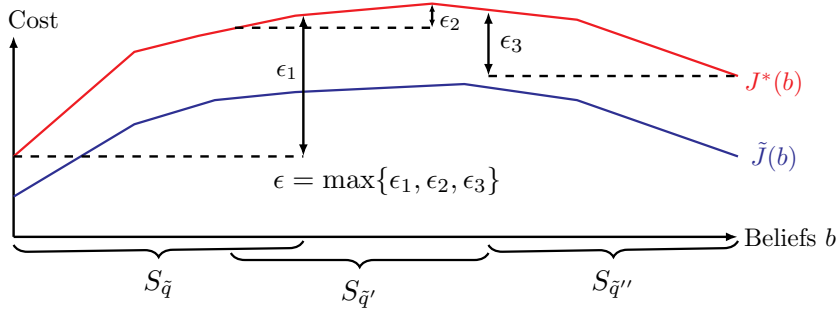
\begin{figure}[ht!]
  \centering
  \scalebox{1.07}{
        \begin{tikzpicture}

\node[scale=1] (kth_cr) at (9.7,0)
{
  \begin{tikzpicture}
    \draw[-{Latex[length=1.5mm]}, black, thick, line width=0.3mm] (0,0) to (9,0);
    \draw[-{Latex[length=1.5mm]}, black, thick, line width=0.3mm] (0,0) to (0,2.5);

\node[inner sep=0pt,align=center, scale=0.9, rotate=0, opacity=1] (obs) at (9.5,1.9)
{
\textcolor{Red}{$J^{*}(b)$}
};
\node[inner sep=0pt,align=center, scale=0.9, rotate=0, opacity=1] (obs) at (9.5,1)
{
\textcolor{Blue}{$\Tilde{J}(b)$}
};

\node[inner sep=0pt,align=center, scale=0.9, rotate=0, opacity=1] (obs) at (9.7,0)
{
Beliefs $b$
};
\node[inner sep=0pt,align=center, scale=0.9, rotate=0, opacity=1] (obs) at (0.3,2.7)
{
Cost
};

\draw[Red, thick]
  plot coordinates {
    (0,1) (1.5,2.3) (2.3,2.5) (3.5,2.75) (5.2, 2.9) (7, 2.7) (9, 2)};

\draw[Blue, thick]
  plot coordinates {
    (0,0.5) (1.5,1.4) (2.5,1.7) (3.5,1.8) (5.6, 1.9) (7, 1.7) (9, 1)};

\draw[
  decorate,
  decoration={brace, mirror, amplitude=5pt},
  thick
] 
(0, -0.1) -- (3.6, -0.05)
node[midway, below=6pt]{$S_{\tilde{q}}$};

\draw[
  decorate,
  decoration={brace, mirror, amplitude=5pt},
  thick
] 
(2.7, -0.2) -- (5.9, -0.2)
node[midway, below=6pt]{$S_{\tilde{q}^{\prime}}$};

\draw[
  decorate,
  decoration={brace, mirror, amplitude=5pt},
  thick
] 
(5.9, -0.1) -- (9, -0.05)
node[midway, below=6pt]{$S_{\tilde{q}^{\prime\prime}}$};  

\draw[-, black, thick, line width=0.3mm, dashed] (0,1) to (3.6, 1);
\draw[{Latex[length=1.5mm]}-{Latex[length=1.5mm]}, black, thick, line width=0.3mm] (3.6,1) to (3.6, 2.76);

\node[inner sep=0pt,align=center, scale=0.9, rotate=0, opacity=1] (obs) at (3.45,2.05)
{
$\epsilon_1$
};

\node[inner sep=0pt,align=center, scale=0.9, rotate=0, opacity=1] (obs) at (5.48,2.66)
{
$\epsilon_2$
};

\draw[-, black, thick, line width=0.3mm, dashed] (2.75,2.6) to (5.2, 2.6);
\draw[{Latex[length=1mm]}-{Latex[length=1mm]}, black, thick, line width=0.3mm] (5.2,2.6) to (5.2, 2.9);

\draw[-, black, thick, line width=0.3mm, dashed] (5.9,2) to (9, 2);
\draw[{Latex[length=1.5mm]}-{Latex[length=1.5mm]}, black, thick, line width=0.3mm] (5.9,2) to (5.9, 2.8);
\node[inner sep=0pt,align=center, scale=1, rotate=0, opacity=1] (obs) at (6.25,2.4)
{
$\epsilon_3$
};
\node[inner sep=0pt,align=center, scale=1, rotate=0, opacity=1] (obs) at (4.7,0.7)
{
$\epsilon = \max \{\epsilon_1, \epsilon_2, \epsilon_3\}$
};

  \end{tikzpicture}
};

\end{tikzpicture}
  }
  \caption{Schematic illustration of the scalar $\epsilon$ of Eq.~\eqref{eq:ep_def} in Prop.~\ref{prop:error_bound}; the illustration is based on an approximation with representative feature beliefs $\tilde{q}, \tilde{q}^{\prime}$, and $\tilde{q}^{\prime\prime}$, with footprint sets $S_{\tilde{q}},S_{\tilde{q}^{\prime}}$, and $S_{\tilde{q}^{\prime\prime}}$, respectively; cf.~Eq.~\eqref{eq:s_def}.}\label{fig:epsilon}
\end{figure}

Note that Prop.~\ref{prop:error_bound} places no restrictions on the aggregation probabilities $\phi_{jy}$ nor the belief aggregation probabilities $\psi_{q\tilde{q}}$, except those specified by Eqs.~\eqref{eq:agg} and \eqref{eq:blf_agg}. Consequently, the error bound in Prop.~\ref{prop:error_bound} goes beyond the classical bound by Tsitsiklis and van Roy \cite{tsitsiklis1996}, where it is assumed that the collection of the footprint sets forms a partition of the (observable) state space. For our aggregation method, the conditions stated in \cite{tsitsiklis1996} imply that for every $b\in B$, there exists exactly one $\tilde q\in \tilde Q$ such that $\psi_{\Phi(b)\tilde q}=1$ and $\psi_{\Phi(b)\tilde q'}=0$ for all $\tilde q'\neq \tilde q$. A sufficient condition for this to hold is that for every $q\in Q$, there exits exactly one $\tilde q\in \tilde Q$ such that $\psi_{q\tilde q}=1$ and $\psi_{q\tilde q'}=0$ for all $\tilde q'\neq \tilde q$. In contrast, the bound in Eq.~\eqref{eq:error_bound_form} allows states and feature beliefs to aggregate to multiple feature states and representative feature beliefs, respectively. This implies that the footprint sets $S_{\tilde q}$ and $S_{\tilde q'}$ that correspond to different representative feature beliefs may have nonempty intersections, as indicated in Fig.~\ref{fig:epsilon}. Moreover, our proof does not rely on properties specific to POMDP, allowing it to be adopted for similar aggregation methods for MDP involving only observable states. A version of this proof, tailored specifically to MDPs and featuring a generalized form of condition \eqref{eq:s_r_b_tilde}, is presented in \cite{li2025bound}.

While the bound in Prop.~\ref{prop:error_bound} provides qualitative insight, it is conservative and smaller approximation errors can be expected in practice. We illustrate this by using our running example.

\begin{example}\label{ex:numerical_ex_1}
Let us consider the treasure hunting problem involving a single site, i.e., $N=1$, so that we can plot the values of $J^*$ and $\tilde J$. We apply our aggregation scheme with the features described in Example~\ref{ex:feature_treasure}. Since there is only one site involved, we have $\mathcal{F}=X$. The representative feature beliefs $\tilde Q$ and the belief aggregation probabilities are defined as in Example~\ref{ex:running_computation}; cf. Eqs.~\eqref{eq:aggregate_belief_space} and \eqref{eq:nearest_neighbor} in Example~\ref{ex:running_aggregate_belief}. The problem data are given in Appendix~\ref{appendix:setup}.

We show in Fig.~\ref{fig:treasure_1} a comparison between the cost function approximation $\Tilde{J}$ and the optimal cost function $J^{*}$ [cf.~Eq.~\eqref{eq:optimal_cost}] for varying discretization resolutions $\rho$ [cf.~Eq.~\eqref{eq:aggregate_belief_space}]. We note that the difference between the cost function approximation $\Tilde{J}$ and the optimal cost $J^{*}$ is reduced as the discretization resolution $\rho$ is increased, as expected from the error bound of Prop.~\ref{prop:error_bound}.

\begin{figure}[ht!]
  \centering
  \scalebox{0.75}{
    \input{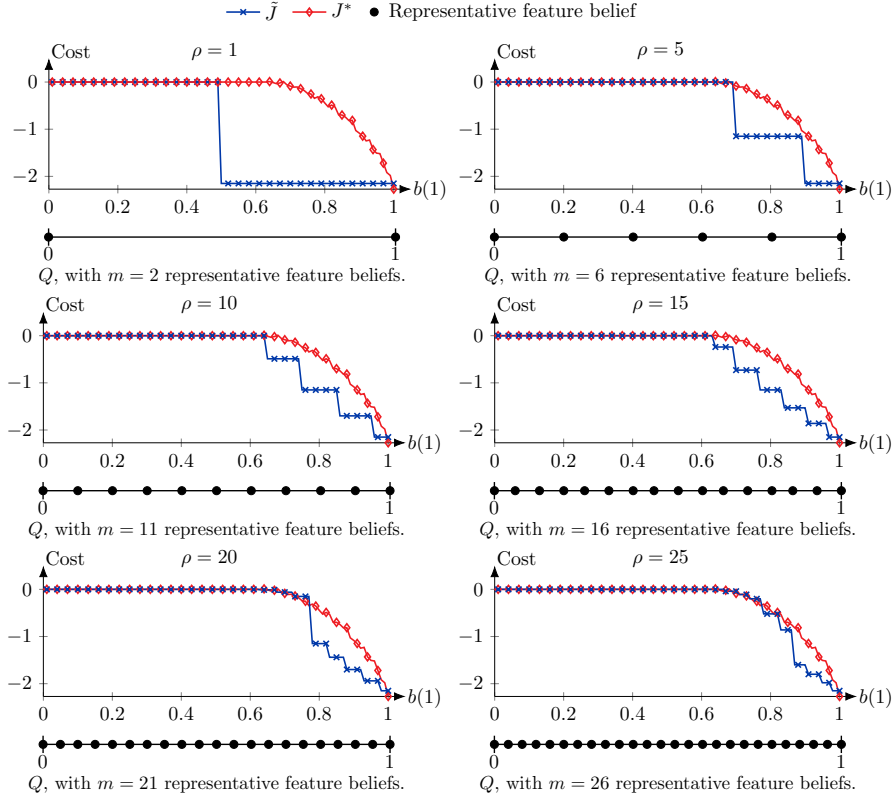}
  }  \caption{Comparison between the optimal cost function $J^{*}$ [cf.~Eq.~\eqref{eq:optimal_cost}] and the cost function approximation $\Tilde{J}$ [cf.~Eq.~\eqref{eq:j_tilde}] for the POMDP for the treasure hunting problem when the number of sites $N$ is $1$, and the belief aggregation probabilities $\psi_{q\tilde{q}}$ are defined based on Eq.~\eqref{eq:nearest_neighbor}. The horizontal axes indicate the belief that the site has a treasure, and the vertical axes show the corresponding expected cost. Each plot relates to a different discretization resolution $\rho$ [cf.~Eq.~\eqref{eq:aggregate_belief_space}], which leads to a different number $m$ of representative feature beliefs, as indicated below the horizontal axes.}\label{fig:treasure_1}
\end{figure}

We show in Fig.~\ref{fig:treasure_5} a comparison between the theoretical bound and the actual approximation error of the cost function approximation $\tilde{J}$. We observe that the bound is not tight (as remarked earlier) but gets closer to the actual error when the discretization resolution increases. Similar patterns can be observed for larger values of $N$.

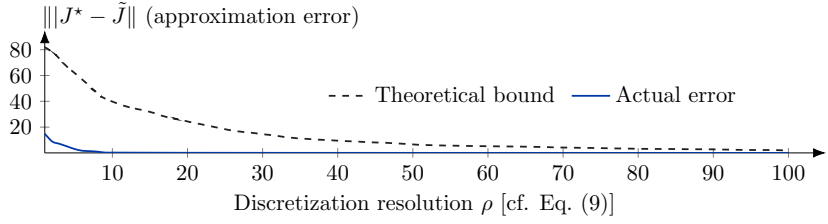
\begin{figure}[ht!]
  \centering
  \scalebox{0.85}{
    \begin{tikzpicture}

\pgfplotstableread{
1 81.67 14.9
2 78.00 8.8
3 72.80 7
4 66.60 5
5 61.80 3
6 56.80 1.7
7 51.60 1.4
8 46.20 1
9 42.20 0.32
10 39.80 0.29
11 37.80 0.27
12 35.80 0.25
13 34.40 0.23
14 33.20 0.22
15 31.80 0.20
16 30.00 0.18
17 28.20 0.17
18 26.80 0.15
19 25.80 0.14
20 24.40 0.13
21 23.20 0.13
22 22.00 0.13
23 20.80 0.12
24 19.60 0.12
25 18.40 0.12
26 17.40 0.11
27 16.56 0.11
28 15.92 0.10
29 15.30 0.10
30 14.54 0.09
31 13.92 0.09
32 13.18 0.08
33 12.32 0.08
34 11.72 0.07
35 11.22 0.07
36 10.86 0.07
37 10.48 0.07
38 10.14 0.07
39 9.88 0.07
40 9.48 0.07
41 9.18 0.07
42 8.98 0.07
43 8.68 0.07
44 8.38 0.07
45 8.12 0.07
46 7.96 0.06
47 7.66 0.06
48 7.34 0.06
49 7.00 0.06
50 6.62 0.06
51 6.30 0.06
52 6.10 0.06
53 5.94 0.05
54 5.78 0.05
55 5.68 0.05
56 5.58 0.05
57 5.46 0.05
58 5.36 0.05
59 5.28 0.05
60 5.20 0.05
61 5.14 0.05
62 5.08 0.05
63 5.00 0.05
64 4.94 0.05
65 4.84 0.05
66 4.74 0.05
67 4.62 0.05
68 4.48 0.05
69 4.32 0.05
70 4.20 0.05
71 4.10 0.05
72 4.00 0.05
73 3.92 0.04
74 3.84 0.04
75 3.74 0.04
76 3.62 0.04
77 3.50 0.04
78 3.42 0.04
79 3.34 0.04
80 3.24 0.04
81 3.16 0.04
82 3.12 0.04
83 3.08 0.04
84 3.06 0.04
85 3.02 0.04
86 2.94 0.04
87 2.88 0.04
88 2.86 0.04
89 2.84 0.04
90 2.76 0.04
91 2.66 0.04
92 2.56 0.04
93 2.44 0.04
94 2.36 0.04
95 2.28 0.04
96 2.26 0.04
97 2.26 0.04
98 2.20 0.04
99 2.08 0.04
100 1.88 0.04
}\datatable

\pgfplotsset{/dummy/workaround/.style={/pgfplots/axis on top}}

\node[scale=1] (kth_cr) at (5.7,-0.19)
{
\begin{tikzpicture}
  \begin{axis}
[
        xmin=1,
        xmax=105,
        ymax=95,
        width=13.8cm,
        height =3.5cm,
        axis y line=center,
        axis x line=bottom,
        scaled y ticks=false,
        yticklabel style={
        /pgf/number format/fixed,
        /pgf/number format/precision=5
      },
        xlabel style={below right},
        ylabel style={above left},
        axis line style={-{Latex[length=2mm]}},
        smooth,
        legend style={at={(0.9,0.62)}},
        legend columns=2,
        legend style={
          draw=none,
            /tikz/column 2/.style={
                column sep=5pt,
              }
              }
              ]
            \addplot[Black,name path=l1, thick, dashed, smooth, tension=2] table [x index=0, y index=1, domain=0:1] {\datatable};              
            \addplot[RoyalAzure,name path=l1, thick, smooth] table [x index=0, y index=2] {\datatable};
\legend{Theoretical bound, Actual error}
\end{axis}
\node[inner sep=0pt,align=center, scale=1, rotate=0, opacity=1] (obs) at (2.5,2.15)
{
  $\norm{|J^{\star}-\Tilde{J}}$ (approximation error)
};
\node[inner sep=0pt,align=center, scale=1, rotate=0, opacity=1] (obs) at (6,-0.8)
{
  Discretization resolution $\rho$ [cf. Eq.~\eqref{eq:aggregate_belief_space}]
};
\end{tikzpicture}
};
\end{tikzpicture}
  }
  \caption{Comparison between the theoretical error bound [cf.~Prop.~\ref{prop:error_bound}] and the actual error of the cost function approximation $\Tilde{J}$ [cf.~\eqref{eq:j_tilde}] when applied to the treasure hunting example with a single search site, i.e., $N=1$.}\label{fig:treasure_5}
\end{figure}
\end{example}

\subsection{The Cost Function Approximation as a Lower Bound}
In the preceding section, we quantified the difference between the cost function approximation $\tilde J$ [cf.~Eq.~\eqref{eq:j_tilde}] and the optimal cost function $J^*$. Within the context of POMDP, it is well-known (cf. \cite{aggregation_yu_dp} and \cite[Example 6.5.5]{bertsekas2012dynamic}) that a cost function approximation constructed with a suitably defined aggregation scheme can serve as a lower bound of the optimal cost function $J^*$. In what follows, we delineate conditions under which the cost function approximation $\tilde J$ computed via our approach can serve as a lower bound of $J^*$.

In particular, we require that the disaggregation, aggregation, and belief aggregation probabilities satisfy
\begin{equation}
\label{eq:con_hull_2}
    b=\sum_{b'\in \tilde B}\psi_{\Phi(b)\Phi(b')}b',\qquad \hbox{for all }b\in B.
\end{equation}
In other words, the convex hull of $\tilde B$ [cf.~Eq.~\eqref{eq:btilde}] equals $B$, and the convex combination coefficient associated with $b\in B$ and $b'\in \tilde B$ is the belief aggregation probability $\psi_{\Phi(b)\Phi(b')}$. Under this condition, we have the following result.

\begin{proposition}\label{prop:lower_bound}
    Let the disaggregation, aggregation, and belief aggregation probabilities satisfy Eq.~\eqref{eq:con_hull_2}. The cost function approximation $\tilde J$ defined in Eq.~\eqref{eq:j_tilde} and the optimal cost function $J^*$ satisfy
    \begin{equation}
        \label{eq:j_tilde_l_b}
        0\leq J^*(b)-\tilde J(b)\leq \frac{\epsilon}{1-\alpha},\qquad \hbox{for all }b\in B,
    \end{equation}
    where $\epsilon$ is given by Eq.~\eqref{eq:ep_def}. In addition, if $\tilde Q$ contains $n$ representative feature beliefs, $\tilde J$ is linear.
\end{proposition}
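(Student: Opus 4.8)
The plan is to treat the two-sided bound and the linearity claim separately. The right-hand inequality $J^*(b)-\tilde J(b)\le \epsilon/(1-\alpha)$ is immediate from Prop.~\ref{prop:error_bound}, which already yields $|\tilde J(b)-J^*(b)|\le \epsilon/(1-\alpha)$. The real content is the lower bound $\tilde J(b)\le J^*(b)$, and for this I would lean on the concavity of $J^*$: since for each conditional plan the cost is an affine function of the initial belief and $J^*$ is the pointwise infimum over such plans, $J^*$ is concave on $B$ (the classical piecewise-linear-concave property of POMDP value functions).

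The key intermediate claim I would establish is that the fixed point $r^*$ of $H$ satisfies $r^*_{\tilde q}\le J^*\big(D(\tilde q)\big)$ for every $\tilde q\in\tilde Q$. To prove it, I would introduce the vector $V\in\Re^m$ with $V_{\tilde q}=J^*\big(D(\tilde q)\big)$ and show $HV\le V$. Expanding $(HV)(\tilde q)$ via Eqs.~\eqref{eq:h_oprator} and \eqref{eq:g_def}, the inner sum becomes $\sum_{\tilde q'}\psi_{\Phi(b')\tilde q'}J^*\big(D(\tilde q')\big)$ with $b'=F\big(D(\tilde q),u,z\big)$. Applying the hypothesis \eqref{eq:con_hull_2} at the belief $b'$, reindexing $\tilde B$ by $\tilde Q$ through the bijection $D$ and using $\Phi\big(D(\tilde q')\big)=\tilde q'$ from Prop.~\ref{prop:consistency}, expresses $b'$ as a convex combination of the $D(\tilde q')$ with coefficients $\psi_{\Phi(b')\tilde q'}$; concavity of $J^*$ then gives $\sum_{\tilde q'}\psi_{\Phi(b')\tilde q'}J^*\big(D(\tilde q')\big)\le J^*(b')$. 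Substituting and invoking Bellman's equation \eqref{eq:optimal_cost} for the original POMDP yields $(HV)(\tilde q)\le J^*\big(D(\tilde q)\big)=V_{\tilde q}$. Monotonicity and contractivity of $H$ (Prop.~\ref{prop:h_op}) then force the iterates $H^kV$ to decrease to $r^*$, so $r^*\le V$, which is the claim. Given this, I would finish the lower bound by writing, via the same reindexing of \eqref{eq:con_hull_2}, $\tilde J(b)=\sum_{\tilde q'}\psi_{\Phi(b)\tilde q'}r^*_{\tilde q'}\le\sum_{\tilde q'}\psi_{\Phi(b)\tilde q'}J^*\big(D(\tilde q')\big)\le J^*\big(\sum_{\tilde q'}\psi_{\Phi(b)\tilde q'}D(\tilde q')\big)=J^*(b)$, where the final concavity step again uses \eqref{eq:con_hull_2}.

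For the linearity claim I would argue that when $|\tilde Q|=n$ the set $\tilde B$ must coincide with the $n$ vertices of the simplex $B$. Indeed, \eqref{eq:con_hull_2} forces $\mathrm{conv}(\tilde B)=B$, and since each unit vector $e_i$ is an extreme point of $B$ it cannot be a nontrivial convex combination of points of $\tilde B$, hence must itself belong to $\tilde B$; as $D$ is one-to-one (Prop.~\ref{prop:unique_beliefs}) the set $\tilde B$ has exactly $n$ elements, so $\tilde B=\{e_1,\dots,e_n\}$. Writing $\tilde q_i$ for the representative feature belief with $D(\tilde q_i)=e_i$, the uniqueness of the representation $b=\sum_i b(i)e_i$ together with \eqref{eq:con_hull_2} identifies $\psi_{\Phi(b)\tilde q_i}=b(i)$, so that $\tilde J(b)=\sum_i b(i)\,r^*_{\tilde q_i}$, which is linear in $b$.

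The main obstacle I anticipate is the concavity of $J^*$: it is the one ingredient not supplied earlier in the excerpt, and the entire lower-bound argument (both the super-solution step $HV\le V$ and the final Jensen-type inequality) rests on it. I would therefore justify it cleanly, either through the standard representation of $J^*$ as an infimum of affine functions over conditional plans or by citing the piecewise-linear-concave property, before using it in the two places above.
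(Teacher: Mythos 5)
Your proposal is correct and follows essentially the same route as the paper's proof: both reduce the lower bound to showing that the vector with components $J^*\big(D(\tilde q)\big)$ dominates the fixed point $r^*$ by combining concavity of $J^*$, the reindexed form of Eq.~\eqref{eq:con_hull_2}, Bellman's equation, and the monotonicity/contractivity of $H$, and both conclude with the same Jensen-type step; your linearity argument via identifying $\tilde B$ with the simplex vertices is a more explicit rendering of the paper's linear-independence argument but rests on the same observation. (Minor note: you state the super-solution inequality in the correct direction, $HV\leq V$, whereas the paper's Eq.~\eqref{eq:h_hat_monotone} has the inequality sign reversed relative to what Eq.~\eqref{eq:ineq_2} actually establishes.)
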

\begin{proof}
In what follows, we prove that $\tilde J(b)\leq J^*(b)$ for all $b\in B$. The other side of Eq.~\eqref{eq:j_tilde_l_b} follows directly from Prop.~\ref{prop:error_bound}. From Prop.~\ref{prop:consistency}, we have that for every $b\in \tilde B$, $b=D(\tilde q)$ if and only if $\tilde q=\Phi(b)$. As a result, the relation given in Eq.~\eqref{eq:con_hull_2} can be stated equivalently as
    \begin{equation}
\label{eq:con_hull_1}
    b=\sum_{\tilde q\in \tilde Q}\psi_{\Phi(b)\tilde q}D(\tilde q),\qquad \hbox{for all }b\in B.
\end{equation}
    It is well-known that the optimal cost function $J^*$ is concave in $b$; see, e.g., \cite[Thm.~2]{smallwood_1}, \cite[Thms.~7.6.1-7.6.2]{krishnamurthy_2016}, or \cite[Section~6.5.1]{bertsekas2012dynamic}. The concavity of $J^*$ and Eq.~\eqref{eq:con_hull_1} imply that
    \begin{equation}
        \label{eq:j_star_concave}
        J^*(b)=J^*\Bigg(\sum_{\tilde q\in \tilde Q}\psi_{\Phi(b)\tilde q}D(\tilde q)\Bigg)\geq \sum_{\tilde q\in \tilde Q}\psi_{\Phi(b)\tilde q}J^*\big(D(\tilde q)\big),\qquad \hbox{for all }b\in B.
    \end{equation}
    Since $J^*$ also satisfies Bellman's equation of the POMDP [cf.~Eq.~\eqref{eq:optimal_cost}], we obtain
    \begin{equation}
    \label{eq:ineq_1}
        \begin{aligned}
        J^*(b)=&\min_{u\in U}\bigg[\hat{g}(b,u)+\alpha\sum_{z\in Z}\hat{p}(z\,|\,b,u)J^*\big(F(b,u,z)\big)\bigg]\\
        = &\min_{u\in U}\Bigg[\hat{g}(b,u)+\alpha\sum_{z\in Z}\hat{p}(z\,|\,b,u)J^*\bigg(\sum_{\tilde q\in \tilde Q}\psi_{\Phi\big(F(b,u,z)\big)\tilde q}D(\tilde q)\bigg)\Bigg]\\
        \geq & \min_{u\in U}\Bigg[\hat{g}(b,u)+\alpha\sum_{z\in Z}\hat{p}(z\,|\,b,u)\sum_{\tilde q\in \tilde Q}\psi_{\Phi\big(F(b,u,z)\big)\tilde q}J^*\big(D(\tilde q)\big)\Bigg],
    \end{aligned}
    \end{equation}
    where the second equality follows from Eq.~\eqref{eq:con_hull_1} with $F(b,u,z)$ in place of $b$, and the inequality follows from Eq.~\eqref{eq:j_star_concave} with $F(b,u,z)$ in place of $b$. In particular, for every $b\in \tilde B$ so that $b=D(\tilde q)$ for some $\tilde q\in \tilde Q$, Eq.~\eqref{eq:ineq_1} becomes
    \begin{equation}
    \label{eq:ineq_2}
        J^*\big(D(\tilde q)\big)\geq \min_{u\in U}\bigg[\hat{g}\big(D(\tilde q),u\big)+\alpha\sum_{z\in Z}\hat{p}\big(z\,|\,D(\tilde q),u\big)\sum_{\tilde q'\in \tilde Q}\psi_{G(\tilde q,u,z)\tilde q'}J^*\big(D(\tilde q')\big)\bigg].
    \end{equation}
    From the definition of $\tilde B$, we have that Eq.~\eqref{eq:ineq_2} holds for all $\tilde q\in \tilde Q$.

    Let us consider a vector $\hat{r}\in \Re^m$ with components defined by
    $$\hat{r}_{\tilde q}=J^*\big(D(\tilde q)\big),\qquad \tilde q\in \tilde Q.$$
    By definition of $H$ [cf.~Eq.~\eqref{eq:h_oprator}], Eq.~\eqref{eq:ineq_2} can be written equivalently as
    \begin{equation}
        \label{eq:h_hat_monotone}
        H\hat{r}\geq \hat{r}.
    \end{equation}
    Using arguments similar to those in the proof for Prop.~\ref{prop:error_bound}, Eq.~\eqref{eq:h_hat_monotone} implies that $\hat{r}\geq r^*$, where $r^*$ is the fixed point of $H$. Then for every $b\in B$, we have
$$\tilde J(b)=\sum_{\tilde q\in\tilde Q}\psi_{\Phi(b) \tilde q}r^*_{\tilde q}\leq \sum_{\tilde q\in\tilde Q}\psi_{\Phi(b) \tilde q}\hat{r}_{\tilde q}=\sum_{\tilde q\in \tilde Q}\psi_{\Phi(b)\tilde q}J^*\big(D(\tilde q)\big).$$
Combining this inequality with Eq.~\eqref{eq:j_star_concave} yields the desired inequality.

Suppose that $\tilde Q$ contains $n$ representative feature beliefs. Since the set $\tilde B$ spans $B$ [cf.~Eq.~\eqref{eq:con_hull_2}], and $B$ is the belief space over $n$ states, we have that $\tilde B$ is linearly independent; cf.~Eq.~\eqref{eq:btilde}. Consider three belief states $b_1$, $b_2$, and $b_3$, where $b_3=\gamma b_1+(1-\gamma)b_2$ and $\gamma\in [0,1]$. Due to the linear independence of $\tilde B$, the aggregation probability $\psi_{\Phi(b_3)\Phi(b')}$ of $b_3$ satisfies
    $$\psi_{\Phi(b_3)\Phi(b')}=\gamma \psi_{\Phi(b_1)\Phi(b')}+(1-\gamma)\psi_{\Phi(b_2)\Phi(b')},\qquad b'\in \tilde B.$$
    Therefore, we have $\tilde J(b_3)=\gamma \tilde J(b_1)+(1-\gamma)\tilde J(b_2)$.
\end{proof}

\begin{example}\label{ex:running_lower_bound}
Let us illustrate Prop.~\ref{prop:lower_bound} through the treasure hunting problem in Example~\ref{ex:treasure} with a single search site. We construct the representative feature belief space using the method described in Example~\ref{ex:numerical_ex_1}, employing varying discretization resolutions. The belief aggregation probabilities $\psi_{q\tilde q}$ are computed to satisfy the condition given in Eq.~\eqref{eq:con_hull_2}. Additional computational details are provided in Appendix~\ref{appendix:setup}.

We show in Fig.~\ref{fig:treasure_2} a comparison between the cost function approximation $\Tilde{J}$ and the optimal cost function $J^{*}$ [cf.~Eq.~\eqref{eq:optimal_cost}] for varying discretization resolutions $\rho$ [cf.~Eq.~\eqref{eq:aggregate_belief_space}]. We note that the cost function approximations $\tilde{J}$ are lower bounds to the optimal cost, i.e., $\tilde{J} \leq J^{*}$, as asserted in Prop.~\ref{prop:lower_bound}. In contrast, the cost function approximations computed in Example~\ref{ex:numerical_ex_1} may not be lower bounds of $J^*$, as shown in Fig.~\ref{fig:treasure_1}. This is because the belief aggregation probabilities used in Example~\ref{ex:numerical_ex_1} do not satisfy the conditions in Eq.~\eqref{eq:con_hull_1}. Moreover, when $\rho=1$, the function $\tilde J$ computed here becomes a linear function of $b$. 

\begin{figure}[ht!]
  \centering
  \scalebox{0.75}{
    \input{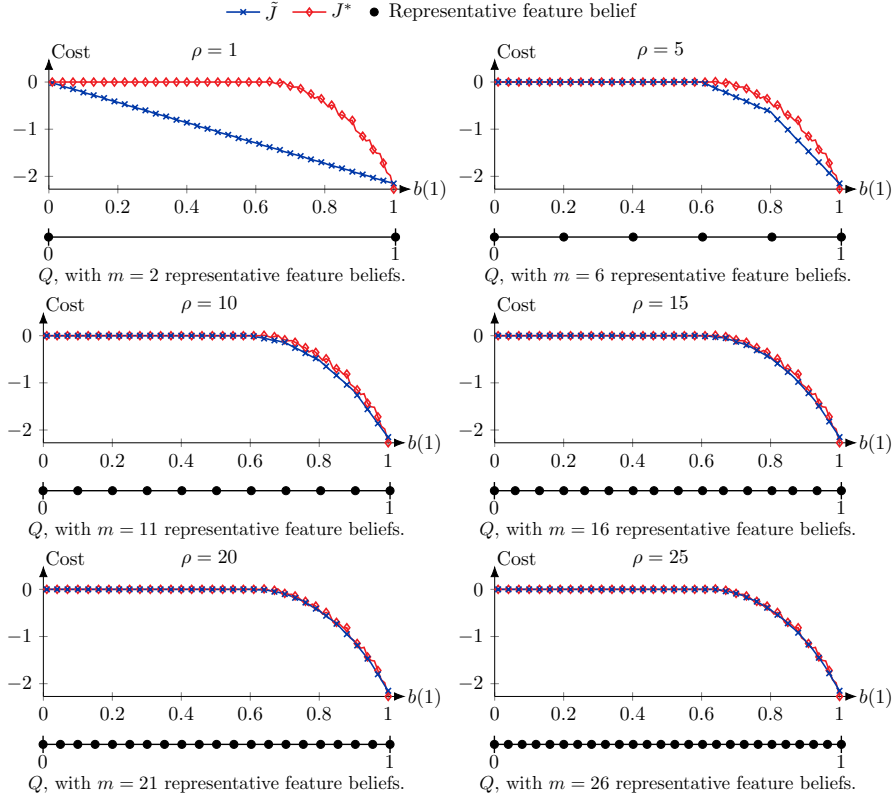}
  }  \caption{Comparison between the optimal cost function $J^{*}$ [cf.~Eq.~\eqref{eq:optimal_cost}] and the cost function approximation $\Tilde{J}$ [Eq.~\eqref{eq:j_tilde}] for the POMDP of Example~\ref{ex:running_lower_bound} when the belief aggregation probabilities $\psi_{q\tilde{q}}$ satisfy Eq.~\eqref{eq:con_hull_2}. The horizontal axes indicate the belief that the site has a treasure, and the vertical axes show the corresponding expected cost. Each plot relates to a different discretization resolution $\rho$ [cf.~Eq.~\eqref{eq:aggregate_belief_space}], which leads to a different number $m$ of representative feature beliefs, as indicated below the horizontal axes.}\label{fig:treasure_2}
\end{figure}

\end{example}

\section{Computational Experiments}\label{sec:experiment}
We evaluate our feature-based aggregation method on a well-known problem in the POMDP literature, namely the Rocksample (RS) problem \cite{hsvi} with varying sizes of the unobservable state space. We also evaluate it on a new problem: Cyber Autonomy Gym for Experimentation 2 (CAGE-2) \cite{cage_challenge_2_announcement}, which involves a much larger unobservable state space and observation space. The sizes of the tested problems are summarized in Table~\ref{tab:eval_1}. 

\begin{table}[ht!]
  \centering
  \scalebox{0.9}{
    \begin{tabular}{lllllll} \toprule
\rowcolor{lightgray}
      {POMDP} & {States $n$} & {Observations $|Z|$} & {Controls $|U|$}  \\ \midrule
      RS (4,4) \cite{hsvi} & $257$ & $2$ & $9$ \\
      RS (5,5) \cite{hsvi} & $801$ & $2$ & $10$ \\
      RS (5,7) \cite{hsvi} & $3201$ & $2$ & $12$ \\
      RS (7,8) \cite{hsvi} & $12545$ & $2$ & $13$ \\
      RS (10,10) \cite{hsvi} & $102401$ & $2$ & $15$ \\
      CAGE-2 \cite{cage_challenge_2_announcement} & $\geq 10^{47}$ & $\geq 10^{25}$ & $145$ \\
    \bottomrule
  \end{tabular}}
  \caption{POMDP used for the experimental evaluation, where $|Z|$ and $|U|$ represent the numbers of different observations and controls, respectively.}\label{tab:eval_1}
\end{table}

For the problems considered here, it is not tractable to compute their optimal cost functions $J^*$, making direct comparison between the optimal cost functions and their approximations infeasible. Instead, we test the quality of the cost function approximations $\tilde J$ obtained via our method, as well as some value-based methods discussed in Section~\ref{sec:introduction}, by applying the policy $\tilde \mu$ defined by the following minimization:
$$\tilde \mu(b) \in \arg\min_{u \in U}\left[\hat{g}(b,u) + \alpha\sum_{z \in Z}\hat{p}(z \mid b,u)\tilde{J}(F(b,u,z))\right];$$
cf. Eq.~\eqref{eq:approximate_policy}. We also apply the policies computed by heuristic search \cite{pomcp,despot,Sunberg2018,NEURIPS2021_ef41d488}, and the policy-based methods \cite{ppo,ppg}. A summary of the tested methods is given in Table~\ref{tab:methods_comparison}. Additional problem data and features used in our scheme are given in Appendix~\ref{appendix:setup}. The code is available at \cite{code}.

\begin{table}[ht!]
  \centering
  \scalebox{0.78}{
    \begin{tabular}{llll} \toprule
\rowcolor{lightgray}
      {Method} & {Value-based} & {Heuristic search} & {Policy-based} \\ \midrule
      \cite{pomdp_lovejoy}  & \cmark  &  & \\
      QMDP \cite{littman1995learning} & \cmark  &  & \\
      HSVI \cite{hsvi}  & \cmark &  & \\          
      \cite{epca} & \cmark  &  & \\
      PBVI \cite{pbvi}  & \cmark &  & \\
      SARSOP \cite{sarsop}  & \cmark & & \\            
      POMCP \cite{pomcp}  &  & \cmark  & \\      
      R-DESPOT \cite{despot}  &  & \cmark  & \\
      POMCPOW \cite{Sunberg2018}  &  & \cmark  & \\
      AdaOPS \cite{NEURIPS2021_ef41d488}  &  & \cmark  & \\
      PPO \cite{ppo}  &  &   & \cmark \\            
      PPG \cite{ppg}  &  &   & \cmark \\                  
    \bottomrule
  \end{tabular}}
  \caption{Methods used for the experimental evaluation.}\label{tab:methods_comparison}
\end{table}

\subsection{Evaluation Results On the RS Problems}
We present the experimental results for the RS problems according to the categorization in Table~\ref{tab:methods_comparison}. Fig.~\ref{fig:rs_bars_aggregation} compares our method with value-based approximation approaches on RS problems of varying sizes. Our method requires substantially less off-line computation while achieving competitive performance on small RS instances and outperforming all baselines on large-scale instances. In Fig~\ref{fig:rs_bars_online}, we compare our method against heuristic search methods. We find that our method performs competitively with leading search methods, despite not relying on extensive on-line computation per stage. Lastly, in Fig.~\ref{fig:rs_bars_offline}, we compare our method with the policy-based approximation methods. We find that our method consistently achieves lower costs than the policy-based methods while being significantly more computationally efficient.

\begin{figure}[ht!]
  \centering
  \scalebox{0.65}{
    \begin{tikzpicture}
\pgfplotstableread{
8.5
0
1000
1000
}\datatable
\node[scale=1] (kth_cr) at (0,0.4)
{
\begin{tikzpicture}

\pgfplotstableread{
12.85 30.0
11.88 30.0
12.49 30.0
15.29 nan
18.41 nan
}\datatable

\pgfplotstableread{
25.12 28.13
24.19 30.73
23.87 29.91
24.46 29.71
30.0 31.22
}\datatabletwo

\pgfplotstableread{
12.08
10.76
5.31
nan
nan
}\datatablethree

\pgfplotstableread{
12.08
10.76
nan
nan
nan
}\datatablefour

\pgfplotstableread{
21.43 21.43
21.85 21.76
21.24 21.24
22.65 22.65
25.41 25.59
}\datatablefive

\pgfplotstableread{
21.76  
20.95
nan
nan
nan
}\datatablesix

\begin{axis}[
   ybar,
    title style={align=center},
    ticks=both,
    ymin=4,
    axis x line = bottom,
    axis y line = left,
    axis line style={-|},
yticklabel style={
        /pgf/number format/fixed,
        /pgf/number format/precision=0,
        /pgfplots/tick scale binop=\pgfmathparse{\tick-30}  
    },
    ytick={10, 20, 30},
    yticklabels={-20, -10, 0},
    enlarge y limits={lower, value=0.1},
    enlarge y limits={upper, value=0.22},
    xtick=data,
    ymajorgrids,
    xticklabels={
        {RS (4,4)},
        {RS (5,5)},
        {RS (5,7)},
        {RS (7,8)},
        {RS (10,10)},
        {CAGE-2}},
    legend style={at={(0.5, -0.20)}, anchor=north, legend columns=5, draw=none},
    every axis legend/.append style={nodes={right}, inner sep = 0.2cm},
    x tick label style={align=center, yshift=-0.1cm},
    enlarge x limits=0.125,
    bar width=0.40cm,
    width=19.5cm,
    height=5.5cm
    ]

  \addplot+[draw=black,fill=blue,
visualization depends on = {\thisrowno{0} - 30} \as \lbl,
nodes near coords = {%
        \rotatebox{90}{%
            \pgfmathprintnumber[fixed zerofill,precision=1]{\lbl}%
        }%
    },
    nodes near coords align = vertical,
every node near coord/.append style={font=\small, yshift=0.5mm, text=black}    
] table [x expr=\coordindex, y index=0] {\datatable};
\addplot+[draw=black,fill=red,
visualization depends on = {\thisrowno{1} - 30} \as \lbl,
nodes near coords = {%
        \rotatebox{90}{%
            \pgfmathprintnumber[fixed zerofill,precision=1]{\lbl}%
        }%
    },
    nodes near coords align = vertical,
every node near coord/.append style={font=\small, fill=none, yshift=0.5mm, text=black}    
] table [x expr=\coordindex, y index=1] {\datatable};
\addplot+[draw=black,fill=green,
visualization depends on = {\thisrowno{0} - 30} \as \lbl,
nodes near coords = {%
        \rotatebox{90}{%
            \pgfmathprintnumber[fixed zerofill,precision=1]{\lbl}%
        }%
    },
    nodes near coords align = vertical,
every node near coord/.append style={font=\small, fill=none, yshift=0.5mm, text=black}
] table [x expr=\coordindex, y index=0] {\datatabletwo};

\addplot+[draw=black,fill=violet,
visualization depends on = {\thisrowno{0} - 30} \as \lbl,
nodes near coords = {%
        \rotatebox{90}{%
            \pgfmathprintnumber[fixed zerofill,precision=1]{\lbl}%
        }%
    },
    nodes near coords align = vertical,
every node near coord/.append style={font=\small, fill=none, yshift=0.5mm, text=black}    
] table [x expr=\coordindex, y index=0] {\datatablethree};

\addplot+[draw=black,fill=orange,
visualization depends on = {\thisrowno{1} - 30} \as \lbl,
nodes near coords = {%
        \rotatebox{90}{%
            \pgfmathprintnumber[fixed zerofill,precision=1]{\lbl}%
        }%
    },
    nodes near coords align = vertical,
every node near coord/.append style={font=\small, fill=none, yshift=0.5mm, text=black}
] table [x expr=\coordindex, y index=1] {\datatabletwo};

\addplot+[draw=black,fill=lightgray,
visualization depends on = {\thisrowno{0} - 30} \as \lbl,
nodes near coords = {%
        \rotatebox{90}{%
            \pgfmathprintnumber[fixed zerofill,precision=1]{\lbl}%
        }%
    },
    nodes near coords align = vertical,
every node near coord/.append style={font=\small, fill=none, yshift=0.5mm, text=black}    
] table [x expr=\coordindex, y index=0] {\datatablesix};

\addplot+[draw=black,fill=teal,
visualization depends on = {\thisrowno{0} - 30} \as \lbl,
nodes near coords = {%
        \rotatebox{90}{%
            \pgfmathprintnumber[fixed zerofill,precision=1]{\lbl}%
        }%
    },
    nodes near coords align = vertical,
every node near coord/.append style={font=\small, fill=none, yshift=0.5mm, text=black}    
] table [x expr=\coordindex, y index=0] {\datatablefour};


\end{axis}
\end{tikzpicture}
};

\node[scale=1] (kth_cr) at (-0.15,-5.7)
{
\begin{tikzpicture}

\pgfplotstableread{
2.4 2.7 300 1.4
125.5 0.8 300 4.8
189.1 39.9 300 5.5
202 nan 300 6.2
500 nan 1000 6.4
}\datatable

\pgfplotstableread{
0.01 0.01 300 300 300
6.2 58.5 300 300 300
721.3 nan 300 300 nan
nan nan 300 300 nan
nan nan 1000 1000 nan
}\datatabletwo

\begin{axis}[
   ybar,
    title style={align=center},
    ticks=both,
    ymin=0,
    axis x line = bottom,
    axis y line = left,
    axis line style={-|},
    nodes near coords = \rotatebox{90}{{\pgfmathprintnumber[fixed zerofill, precision=1]{\pgfplotspointmeta}}},
    nodes near coords align={vertical},
    every node near coord/.append style={font=\small, fill=white, yshift=0.5mm},
    enlarge y limits={lower, value=0.1},
    enlarge y limits={upper, value=0.22},
    xtick=data,
    ymajorgrids,
    xticklabels={
        {RS (4,4)},
        {RS (5,5)},
        {RS (5,7)},
        {RS (7,8)},
        {RS (10,10)},
        {CAGE-2}},
    legend style={at={(0.5, -0.20)}, anchor=north, legend columns=7, draw=none, nodes={scale=1.2, transform shape}},
    every axis legend/.append style={nodes={right}, inner sep = 0.2cm},
   x tick label style={align=center, yshift=-0.1cm},
    enlarge x limits=0.125,
    bar width=0.4cm,
    width=19.5cm,
    height=5.5cm
    ]

\addplot[draw=black,fill=blue] table [x expr=\coordindex, y index=0] {\datatable};
\addplot[draw=black,fill=red] table [x expr=\coordindex, y index=1] {\datatable};
\addplot[draw=black,fill=green] table [x expr=\coordindex, y index=2] {\datatable};
\addplot[draw=black,fill=violet] table [x expr=\coordindex, y index=0] {\datatabletwo};
\addplot[draw=black,fill=orange] table [x expr=\coordindex, y index=3] {\datatable};
\addplot[draw=black,fill=lightgray] table [x expr=\coordindex, y index=4] {\datatabletwo};
\addplot[draw=black,fill=teal] table [x expr=\coordindex, y index=1] {\datatabletwo};

\legend{Our method$\text{ }$$\text{ }$, \cite{pomdp_lovejoy}$\text{ }$$\text{ }$, QMDP$\text{ }$$\text{ }$, HSVI$\text{ }$$\text{ }$, \cite{epca}$\text{ }$$\text{ }$, PBVI$\text{ }$$\text{ }$, SARSOP$\text{ }$$\text{ }$}
\end{axis}
\end{tikzpicture}
};

\draw[-, black, thick, line width=0.3mm, dashed] (-5,-7.05) to (-5,2.5);
\draw[-, black, thick, line width=0.3mm, dashed] (-1.4,-7.05) to (-1.4,2.5);
\draw[-, black, thick, line width=0.3mm, dashed] (2.2,-7.05) to (2.2,2.5);
\draw[-, black, thick, line width=0.3mm, dashed] (5.8,-7.05) to (5.8,2.5);

\node[inner sep=0pt,align=center, scale=1.1, rotate=0, opacity=1] (obs) at (-8.2,2.55)
{
Cost
};

\node[inner sep=0pt,align=center, scale=1.1, rotate=0, opacity=1] (obs) at (-6.9,-2.9)
{
Compute time (min)
};

\node[inner sep=0pt,align=center, scale=0.8, rotate=0, opacity=1, rotate=90] (obs) at (1.4,-1.28)
{
N/A
};
\node[inner sep=0pt,align=center, scale=0.8, rotate=0, opacity=1, rotate=90] (obs) at (1.85,-1.28)
{
N/A
};

\node[inner sep=0pt,align=center, scale=0.8, rotate=0, opacity=1, rotate=90] (obs) at (3.06,-1.28)
{
N/A
};

\node[inner sep=0pt,align=center, scale=0.8, rotate=0, opacity=1, rotate=90] (obs) at (4,-1.28)
{
N/A
};

\node[inner sep=0pt,align=center, scale=0.8, rotate=0, opacity=1, rotate=90] (obs) at (5,-1.28)
{
N/A
};

\node[inner sep=0pt,align=center, scale=0.8, rotate=0, opacity=1, rotate=90] (obs) at (5.45,-1.28)
{
N/A
};

\node[inner sep=0pt,align=center, scale=0.8, rotate=0, opacity=1, rotate=90] (obs) at (6.66,-1.28)
{
N/A
};

\node[inner sep=0pt,align=center, scale=0.8, rotate=0, opacity=1, rotate=90] (obs) at (7.6,-1.28)
{
N/A
};

\node[inner sep=0pt,align=center, scale=0.8, rotate=0, opacity=1, rotate=90] (obs) at (8.6,-1.28)
{
N/A
};

\node[inner sep=0pt,align=center, scale=0.8, rotate=0, opacity=1, rotate=90] (obs) at (9.05,-1.28)
{
N/A
};

\node[inner sep=0pt,align=center, scale=0.8, rotate=0, opacity=1, rotate=90] (obs) at (1.4,-6.7)
{
N/A
};
\node[inner sep=0pt,align=center, scale=0.8, rotate=0, opacity=1, rotate=90] (obs) at (1.85,-6.7)
{
N/A
};

\node[inner sep=0pt,align=center, scale=0.8, rotate=0, opacity=1, rotate=90] (obs) at (3.06,-6.7)
{
N/A
};

\node[inner sep=0pt,align=center, scale=0.8, rotate=0, opacity=1, rotate=90] (obs) at (4,-6.7)
{
N/A
};

\node[inner sep=0pt,align=center, scale=0.8, rotate=0, opacity=1, rotate=90] (obs) at (5,-6.7)
{
N/A
};

\node[inner sep=0pt,align=center, scale=0.8, rotate=0, opacity=1, rotate=90] (obs) at (5.45,-6.7)
{
N/A
};

\node[inner sep=0pt,align=center, scale=0.8, rotate=0, opacity=1, rotate=90] (obs) at (6.66,-6.7)
{
N/A
};

\node[inner sep=0pt,align=center, scale=0.8, rotate=0, opacity=1, rotate=90] (obs) at (7.6,-6.7)
{
N/A
};

\node[inner sep=0pt,align=center, scale=0.8, rotate=0, opacity=1, rotate=90] (obs) at (8.6,-6.7)
{
N/A
};

\node[inner sep=0pt,align=center, scale=0.8, rotate=0, opacity=1, rotate=90] (obs) at (9.05,-6.7)
{
N/A
};
\end{tikzpicture}
  }
  \caption{Comparison between our method and value-based approximation methods for solving the RS problems of varying sizes. ``N/A" and a missing bar mean that a result could not be obtained after $1,000$ minutes of computation.}\label{fig:rs_bars_aggregation}
\end{figure}
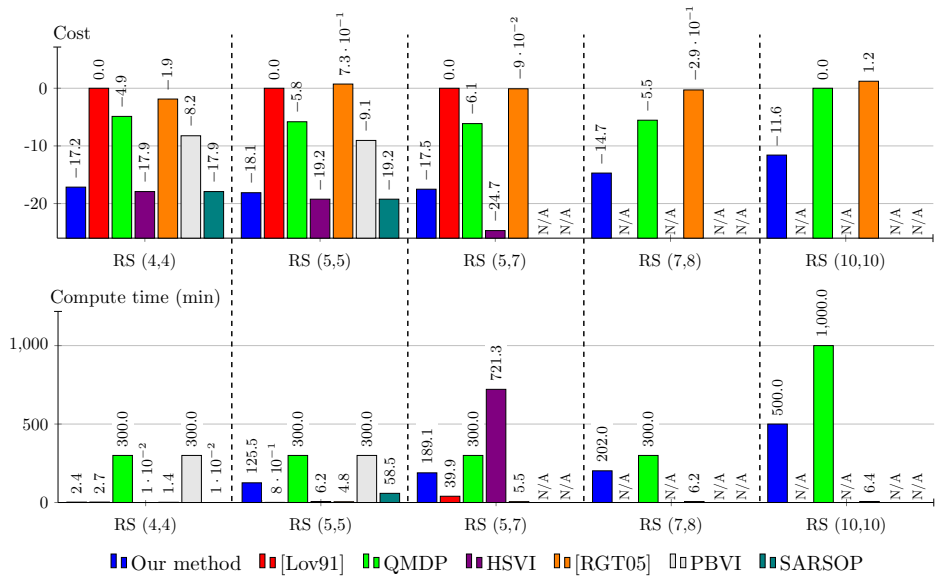

\begin{figure}[ht!]
  \centering
  \scalebox{0.8}{
    \begin{tikzpicture}
\pgfplotstableread{
8.5
0
1000
1000
}\datatable
\node[scale=1] (kth_cr) at (0,0)
{
\begin{tikzpicture}
\pgfplotstableread{
12.85 21.36 17.93 21.40 13.05
11.88 21.20 17.91 21.53 12.61
12.49 20.19 18.00 21.74 13.86
15.29 20.54 16.86 21.86 14.01
18.41 21.02 19.59 22.12 14.71
}\datatable
\pgfplotstableread{
17.93 21.40 13.05
17.91 21.53 12.61
18.00 21.74 13.86
16.86 21.86 14.01
19.59 22.12 14.71
}\datatabletwo

\pgfplotstableread{
13.05
12.61
13.86
14.01
14.71
}\datatablethree

\begin{axis}[
   ybar,
    title style={align=center},
    ticks=both,
    ymin=0,
    ymax=30,
    ytick={10, 20, 30},
    yticklabels={-20, -10, 0},    
    axis x line = bottom,
    axis y line = left,
    axis line style={-|},
    nodes near coords = \rotatebox{90}{{\pgfmathprintnumber[fixed zerofill, precision=1]{\pgfplotspointmeta}}},
    nodes near coords align={vertical},
    every node near coord/.append style={font=\small, fill=white, yshift=0.5mm},
    enlarge y limits={lower, value=0.1},
    enlarge y limits={upper, value=0.22},
    xtick=data,
    ymajorgrids,
    xticklabels={
        {RS (4,4)},
        {RS (5,5)},
        {RS (5,7)},
        {RS (7,8)},
        {RS (10,10)},
        {CAGE-2}},
    legend style={at={(0.491, -0.20)}, anchor=north, legend columns=5, draw=none, nodes={scale=1.2, transform shape}},
    every axis legend/.append style={nodes={right}, inner sep = 0.2cm},
   x tick label style={align=center, yshift=-0.1cm},
    enlarge x limits=0.1,
    width=15cm,
    height=5cm
    ]

  \addplot[draw=black,fill=blue,
visualization depends on = {\thisrowno{0} - 30} \as \lbl,
nodes near coords = {%
        \rotatebox{90}{%
            \pgfmathprintnumber[fixed zerofill,precision=1]{\lbl}%
        }%
    },
    nodes near coords align = vertical,
every node near coord/.append style={font=\small, fill=white, yshift=-0.2mm, text=black}
] table [x expr=\coordindex, y index=0] {\datatable};
\addplot[draw=black,fill=red,
visualization depends on = {\thisrowno{1} - 30} \as \lbl,
nodes near coords = {%
        \rotatebox{90}{%
            \pgfmathprintnumber[fixed zerofill,precision=1]{\lbl}%
        }%
    },
    nodes near coords align = vertical,
every node near coord/.append style={font=\small, fill=white, yshift=-0.2mm, text=black} 
] table [x expr=\coordindex, y index=1] {\datatable};
\addplot[draw=black,fill=green,
visualization depends on = {\thisrowno{0} - 30} \as \lbl,
nodes near coords = {%
        \rotatebox{90}{%
            \pgfmathprintnumber[fixed zerofill,precision=1]{\lbl}%
        }%
    },
    nodes near coords align = vertical,
every node near coord/.append style={font=\small, fill=white, yshift=-0.2mm, text=black} 
] table [x expr=\coordindex, y index=0] {\datatabletwo};
\addplot[draw=black,fill=violet,
visualization depends on = {\thisrowno{1} - 30} \as \lbl,
nodes near coords = {%
        \rotatebox{90}{%
            \pgfmathprintnumber[fixed zerofill,precision=1]{\lbl}%
        }%
    },
    nodes near coords align = vertical,
every node near coord/.append style={font=\small, fill=white, yshift=-0.2mm, text=black} 
] table [x expr=\coordindex, y index=1] {\datatabletwo};
\addplot[draw=black,fill=orange,
visualization depends on = {\thisrowno{0} - 30} \as \lbl,
nodes near coords = {%
        \rotatebox{90}{%
            \pgfmathprintnumber[fixed zerofill,precision=1]{\lbl}%
        }%
    },
    nodes near coords align = vertical,
every node near coord/.append style={font=\small, fill=white, yshift=-0.2mm, text=black} 
] table [x expr=\coordindex, y index=0] {\datatablethree};

\legend{Our method$\quad$, POMCP$\quad$, R-DESPOT$\quad$, POMCPOW$\quad$, AdaOPS}
\end{axis}
\end{tikzpicture}
};

\node[inner sep=0pt,align=center, scale=1.1, rotate=0, opacity=1] (obs) at (-6.2,2.75)
{
Cost
};

\draw[-, black, thick, line width=0.3mm, dashed] (-4,-0.9) to (-4,2.5);
\draw[-, black, thick, line width=0.3mm, dashed] (-1.26,-0.9) to (-1.26,2.5);
\draw[-, black, thick, line width=0.3mm, dashed] (1.52,-0.9) to (1.52,2.5);
\draw[-, black, thick, line width=0.3mm, dashed] (4.4,-0.9) to (4.4,2.5);

\end{tikzpicture}
  }
  \caption{Comparison between our method and heuristic search methods for solving the RS problems with varying sizes. The search methods were given $1$ second of search time per stage.}\label{fig:rs_bars_online}
\end{figure}
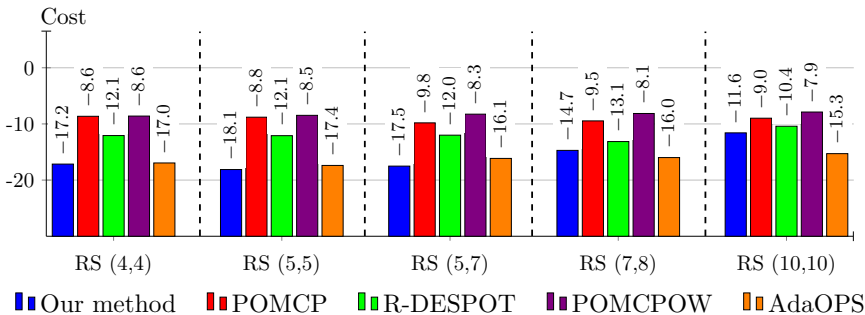

\begin{figure}[ht!]
  \centering
  \scalebox{0.75}{
    \begin{tikzpicture}
\pgfplotstableread{
8.5
0
1000
1000
}\datatable
\node[scale=1] (kth_cr) at (-0.12,0.1)
{
\begin{tikzpicture}
\pgfplotstableread{
12.85 21.43 21.43
11.88 21.85 21.76
12.49 21.24 21.24
15.29 22.65 22.65
18.41 25.41 25.59
}\datatable

\begin{axis}[
   ybar,
    title style={align=center},
    ticks=both,
    ymin=0,
    ytick={10, 20, 30},
    yticklabels={-20, -10, 0},    
    axis x line = bottom,
    axis y line = left,
    axis line style={-|},
    enlarge y limits={lower, value=0.1},
    enlarge y limits={upper, value=0.22},
    xtick=data,
    ymajorgrids,
    xticklabels={
        {RS (4,4)},
        {RS (5,5)},
        {RS (5,7)},
        {RS (7,8)},
        {RS (10,10)}},
    legend style={at={(0.5, -0.20)}, anchor=north, legend columns=6, draw=none, nodes={scale=1.3, transform shape}},
    every axis legend/.append style={nodes={right}, inner sep = 0.2cm},
   x tick label style={align=center, yshift=-0.1cm},
    enlarge x limits=0.1,
    width=15.7cm,
    height=5.5cm,
    bar width=0.3cm
    ]

  \addplot[draw=black,fill=blue,
visualization depends on = {\thisrowno{0} - 30} \as \lbl,
nodes near coords = {%
        \rotatebox{90}{%
            \pgfmathprintnumber[fixed zerofill,precision=1]{\lbl}%
        }%
    },
    nodes near coords align = vertical,
every node near coord/.append style={font=\small, fill=white, yshift=-0.2mm, text=black} 
] table [x expr=\coordindex, y index=0] {\datatable};
\addplot[draw=black,fill=red,
visualization depends on = {\thisrowno{0} - 30} \as \lbl,
nodes near coords = {
        \rotatebox{90}{
            \pgfmathprintnumber[fixed zerofill,precision=1]{\lbl}%
        }%
    },
    nodes near coords align = vertical,
every node near coord/.append style={font=\small, fill=white, yshift=0.3mm, text=black}
] table [x expr=\coordindex, y index=1] {\datatable};
\addplot[draw=black,fill=green,
visualization depends on = {\thisrowno{0} - 30} \as \lbl,
nodes near coords = {
        \rotatebox{90}{
            \pgfmathprintnumber[fixed zerofill,precision=1]{\lbl}%
        }%
    },
    nodes near coords align = vertical,
every node near coord/.append style={font=\small, fill=white, yshift=0.5mm, text=black}
] table [x expr=\coordindex, y index=2] {\datatable};

\end{axis}
\end{tikzpicture}
};

\node[scale=1] (kth_cr) at (-0.34,-5.65)
{
\begin{tikzpicture}

\pgfplotstableread{
2.4 300 300
125.5 300 300
189.1 300 300
202 300 300
500 1000 1000
}\datatable

\begin{axis}[
   ybar,
    title style={align=center},
    ticks=both,
    axis x line = bottom,
    axis y line = left,
    axis line style={-|},
    every node near coord/.append style={font=\small, fill=white, yshift=0.5mm},
    enlarge y limits={lower, value=0.1},
    enlarge y limits={upper, value=0.22},
    nodes near coords = \rotatebox{90}{{\pgfmathprintnumber[fixed zerofill, precision=1]{\pgfplotspointmeta}}},
    nodes near coords align={vertical},    
    xtick=data,
    ymajorgrids,
    xticklabels={
        {RS (4,4)},
        {RS (5,5)},
        {RS (5,7)},
        {RS (7,8)},
        {RS (10,10)}},
    legend style={at={(0.5, -0.20)}, anchor=north, legend columns=6, draw=none, nodes={scale=1.2, transform shape}},
    every axis legend/.append style={nodes={right}, inner sep = 0.2cm},
   x tick label style={align=center, yshift=-0.1cm},
    enlarge x limits=0.1,
    width=15.7cm,
    height=5.5cm,
    bar width=0.3cm
    ]

\addplot[draw=black,fill=blue] table [x expr=\coordindex, y index=0] {\datatable};
\addplot[draw=black,fill=red] table [x expr=\coordindex, y index=1] {\datatable};
\addplot[draw=black,fill=green] table [x expr=\coordindex, y index=2] {\datatable};

\legend{Our method$\quad$, PPO$\quad$, PPG$\quad$}
\end{axis}
\end{tikzpicture}
};

\draw[-, black, thick, line width=0.3mm, dashed] (-4,-7) to (-4,2.3);
\draw[-, black, thick, line width=0.3mm, dashed] (-1.3,-7) to (-1.3,2.3);
\draw[-, black, thick, line width=0.3mm, dashed] (1.69,-7) to (1.69,2.3);
\draw[-, black, thick, line width=0.3mm, dashed] (4.7,-7) to (4.7,2.3);

\node[inner sep=0pt,align=center, scale=1, rotate=0, opacity=1] (obs) at (-6.45,2.5)
{
Cost
};
\node[inner sep=0pt,align=center, scale=1, rotate=0, opacity=1] (obs) at (-5.62,-2.8)
{
Compute time (min)
};

\end{tikzpicture}
  }
  \caption{Comparison between our method and policy-based approximation methods for solving the RS problems of varying sizes. The compute time for PPO and PPG indicates the time it took to converge, with a maximum compute time of $1000$ minutes.}\label{fig:rs_bars_offline}
\end{figure}
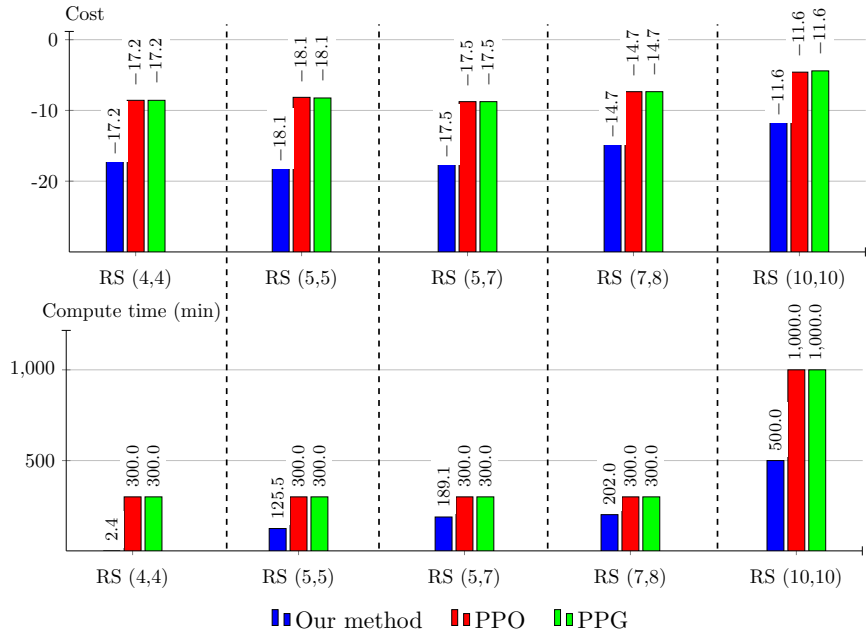

\subsection{Evaluation Results On the CAGE-2 Problem}
Due to the large scale of the CAGE-2 POMDP, we were only able to compare our method against three other methods: POMCP \cite{pomcp}, PPO \cite{ppo}, and PPG \cite{ppg}. The other methods are either infeasible for CAGE-2 due to its scale or lack publicly available implementations. The evaluation results are summarized in Fig.~\ref{fig:cage2_bars}. The on-line computational times required for our method and POMCP are $0.95$ and $30$ seconds, respectively. We find that our method obtains the lowest cost among the methods considered, while requiring significantly less off-line computation compared with PPO and PPG. In particular, it attains top performance with only 8.5 minutes of off-line computation without requiring extensive on-line computation per stage, as POMCP does. We attribute this strong performance in part to our choice of features, which are tailored to the structure of the problem; see Appendix~\ref{appendix:setup} for details. A focused investigation of our method for this problem, where we also integrate other approximation in value space schemes, can be found in \cite{tifs_25_HLALB}.

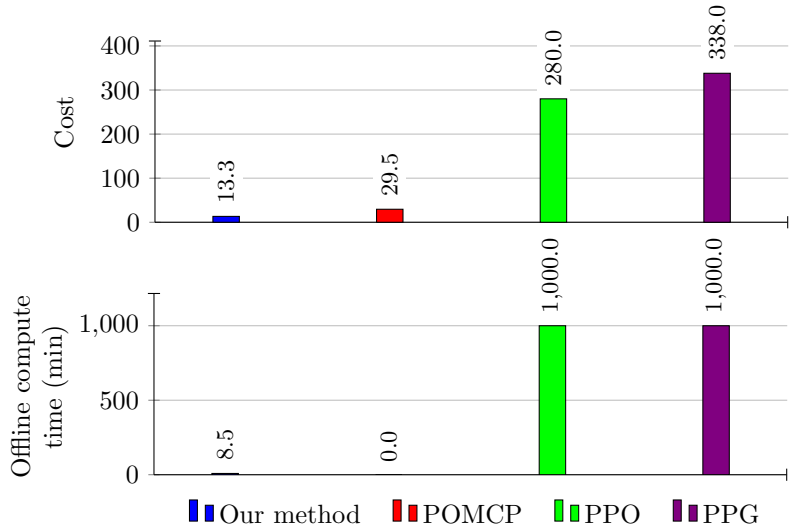
\begin{figure}[ht!]
  \centering
  \scalebox{1}{
    \begin{tikzpicture}
\pgfplotstableread{
8.5
0
1000
1000
}\datatable
\node[scale=1] (kth_cr) at (0,0)
{
\begin{tikzpicture}
\begin{axis}[
   ybar,
    title style={align=center},
    ticks=both,
    ymin=0,
    axis x line = bottom,
    axis y line = left,
    axis line style={-|},
    nodes near coords = \rotatebox{90}{{\pgfmathprintnumber[fixed zerofill, precision=1]{\pgfplotspointmeta}}},
    nodes near coords align={vertical},
    every node near coord/.append style={font=\small, fill=white, yshift=0.5mm},
    enlarge y limits={lower, value=0.1},
    enlarge y limits={upper, value=0.22},
    xtick=\empty,
    ymajorgrids,
    xticklabels={},
    legend style={at={(0.51, -0.05)}, anchor=north, legend columns=5, draw=none},
    every axis legend/.append style={nodes={right}, inner sep = 0.2cm},
   x tick label style={align=center, yshift=-0.1cm},
    enlarge x limits=0.3,
    width=10cm,
    height=4cm,
    ]
  \addplot+[draw=black, color=black, fill=blue] coordinates {(0, 13.32)};
\addplot+[draw=black, color=black, fill=red]    coordinates {(2, 29.51)};    
\addplot+[draw=black, color=black, fill=green]    coordinates {(4, 280)};    
\addplot+[draw=black, color=black, fill=violet]    coordinates {(6, 338)};    
\end{axis}
\end{tikzpicture}
};

\node[scale=1] (kth_cr) at (-0.15,-3.6)
{
\begin{tikzpicture}
\begin{axis}[
   ybar,
    title style={align=center},
    ticks=both,
    ymin=0,
    axis x line = bottom,
    axis y line = left,
    axis line style={-|},
    nodes near coords = \rotatebox{90}{{\pgfmathprintnumber[fixed zerofill, precision=1]{\pgfplotspointmeta}}},
    nodes near coords align={vertical},
    every node near coord/.append style={font=\small, fill=white, yshift=0.5mm},
    enlarge y limits={lower, value=0.1},
    enlarge y limits={upper, value=0.22},
    xtick=\empty,
    ymajorgrids,
    xticklabels={},
    legend style={at={(0.51, -0.05)}, anchor=north, legend columns=5, draw=none},
    every axis legend/.append style={nodes={right}, inner sep = 0.2cm},
   x tick label style={align=center, yshift=-0.1cm},
    enlarge x limits=0.3,
    width=10cm,
    height=4cm,
    ]
  \addplot+[draw=black, color=black, fill=blue]    coordinates {(0, 8.5)};
\addplot+[draw=black, color=black, fill=red]    coordinates {(2, 0.0)};    
\addplot+[draw=black, color=black, fill=green]    coordinates {(4, 1000)};    
\addplot+[draw=black, color=black, fill=violet]    coordinates {(6, 1000)};    

\legend{Our method$\quad$, POMCP$\quad$, PPO$\quad$, PPG}
\end{axis}
\end{tikzpicture}
};

\node[inner sep=0pt,align=center, scale=1, rotate=0, opacity=1, rotate=90] (obs) at (-5,0)
{
  Cost
};
\node[inner sep=0pt,align=center, scale=1, rotate=0, opacity=1, rotate=90] (obs) at (-5.32,-3.58)
{
  Offline compute \\
time (min)
};

\end{tikzpicture}
  }
  \caption{Evaluation results on the CAGE-2 POMDP \cite{cage_challenge_2_announcement}. The upper plot shows the total discounted cost and the lower plot shows the (off-line) compute time in minutes. POMCP was given $30$ seconds of computational budget per stage.}\label{fig:cage2_bars}
\end{figure}

\subsection{Discussion of Experimental Results}

When compared with other value-based methods, our aggregation method scales well and provides good performance. In particular, HSVI and SARSOP achieve similar or lower costs in small-scale instances of RS, as shown in the first three columns of Fig.~\ref{fig:rs_bars_aggregation}. However, they become computationally prohibitive for large-scale instances, as well as the CAGE-2 POMDP. The remaining value-based approaches, except for the one proposed in \cite{epca}, have higher cost function values in all RS problems while requiring more off-line computation when compared against our method. The method proposed in \cite{epca} requires much less off-line computation than our approach in RS problems. However, this is at the expense of much larger cost values. None of the value-based methods other than our aggregation approach can be implemented in CAGE-2 due to off-line computational challenges, even including that in \cite{epca}.

When compared with heuristic search methods in RS problems, POMCP, R-DESPOT, and POMCPPOW have inferior performances than our method; see Fig.~\ref{fig:rs_bars_online}. AdaOPS has slightly lower cost function values than our method, as shown in Fig.~\ref{fig:rs_bars_online}. However, it relies on extensive online computations related to the belief estimator. As a result, AdaOPS becomes computationally infeasible for CAGE-2. In fact, the only heuristic search approach we managed to compare against in CAGE-2 is POMCP. However, POMCP yields higher cost function values than our method.

As for policy-based methods, i.e., PPO and PPG, we can see in Figs.~\ref{fig:rs_bars_offline} and \ref{fig:cage2_bars} that they have inferior performance while requiring more computational time in RS problems and CAGE-2, when compared with our method.

\section{Biased Aggregation and the Approximation Errors}\label{sec:bias}
In this section, we introduce an extension of our method that leverages a known  function $V$, which approximates the optimal cost function $J^*$. We refer to this extension as \emph{biased aggregation}, to distinguish it from the \emph{standard aggregation} approach described in the preceding sections. The biased aggregation approach was introduced in \cite{bertsekas2019biased}, and can be viewed as the standard aggregation applied to a modified POMDP, obtained by incorporating bias terms into the stage costs of the original POMDP. These biases are defined using the given function $V$. 

\subsection{An Extension to Biased Aggregation}
Suppose that we have a cost function approximation $V$ for a given POMDP, which maps $B$ to the real line $\Re$ and is bounded. We refer to this cost function approximation as the \textit{bias function}. In the biased aggregation approach, we introduce a modified POMDP that is obtained from the original by changing its stage cost from $\hat{g}(b,u)$ to 
\begin{equation}
\hat{g}(b,u)-V(b)+\alpha\sum_{z\in Z}\hat{p}(z\,|\,b,u)V\big(F(b,u,z)\big). \label{eq:biased_cost}
\end{equation}
while the rest of the problem definition remains the same. Let us denote by $\tilde V$ the optimal cost function of the modified POMDP. It satisfies the corresponding Bellman equation
$$\tilde V(b)=\min_{u\in U}\bigg[\hat{g}(b,u)- V(b)+\alpha\sum_{z\in Z}\hat{p}(z\,|\,b,u)\Big(\tilde V\big(F(b,u,z)\big)+V\big(F(b,u,z)\big)\Big)\bigg]$$
for all $b\in B$, or equivalently 
$$\tilde V(b)+V(b)=\min_{u\in U}\bigg[\hat{g}(b,u)+\alpha\sum_{z\in Z}\hat{p}(z\,|\,b,u)\Big(\tilde V\big(F(b,u,z)\big)+V\big(F(b,u,z)\big)\Big)\bigg]$$
for all $b\in B$. By comparing this equation with the Bellman equation for the original problem [cf. Eq.~\eqref{eq:optimal_cost}], and in view of the fact that $J^*$ is the unique solution to Bellman's equation of the original problem, we see that the optimal cost functions of the modified and the original problems are related by
\begin{equation}
\label{eq:optimal_tilde_v}
    J^*(b)=\tilde V(b)+V(b),\qquad \hbox{for all }b\in B.
\end{equation}
In particular, when $V(b)\approx J^*(b)$, we have $\tilde V(b)\approx 0$. Based on the error-bound analysis of Section~\ref{sec:analysis}. This suggests that biased aggregation is characterized by tighter error bounds than standard aggregation, and likely produces better approximations to $J^*$. Indeed, this is supported by our computational experiments.

Let us now apply our feature-based belief aggregation method to the modified problem. The controlled dynamic belief system constructed via our method is illustrated in Fig.~\ref{fig:aggregation_bias}. It can be seen that under the same policy and with the same sequence of random observations $z_k$, the modified and the original POMDP generate the same sequence of representative feature beliefs $\tilde q_k$ and beliefs $b_k$. Moreover, the associated Bellman equation can be characterized by an operator $\tilde H$ similar to that of the original problem, which maps $r\in \Re^m$ to $\tilde Hr$ with components
\begin{equation}
    \label{eq:tilde_h_oprator}
    \begin{aligned}
        (\tilde Hr)(\tilde q)&=\min_{u\in U}\Bigg[\hat{g}\big(D(\tilde q),u\big)+\alpha\sum_{z\in Z}\hat{p}\big(z\,|\,D(\tilde q),u\big)\bigg(V\Big(F\big(D(\tilde q),u,z\big)\Big)\\
         &\qquad +\sum_{\tilde q'\in \tilde Q}\psi_{G(\tilde q,u,z)\tilde q'}r_{\tilde q'}\bigg)\Bigg]-V\big(D(\tilde q)\big),\qquad \tilde q\in \Tilde{Q};
    \end{aligned}
\end{equation}
cf. Eq.~\eqref{eq:h_oprator}.

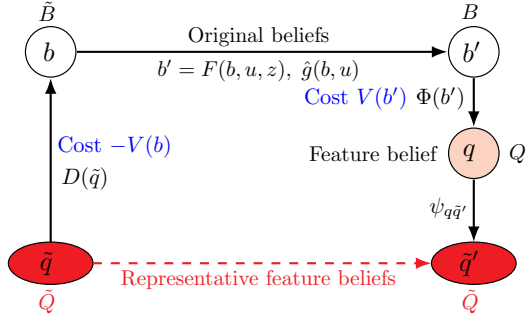
\begin{figure}[ht!]
  \centering
  \scalebox{1.25}{
    \begin{tikzpicture}

\node[scale=0.9](states) at (4,-0.045) {
\begin{tikzpicture}
\node[draw,circle, scale=1.8](b1) at (0,1) {};
\node[draw,circle, scale=1.8](b2) at (5,1) {};
\node[draw,circle, scale=1.8, color=black, fill=Red!20](q) at (5,-0.2) {};
\draw[fill=Red, scale=0.5] (0,-3) ellipse (1cm and 0.5cm) node (bb1) {};
\draw[fill=Red, scale=0.5] (10,-3) ellipse (1cm and 0.5cm) node (bb2) {};

\draw[-{Latex[length=1.85mm]}, black, line width=0.25mm] (b1) to (b2);
\draw[-{Latex[length=1.85mm]}, black, line width=0.25mm, dashed, color=Red] (0.5,-1.5) to (4.5,-1.5);
\draw[-{Latex[length=1.85mm]}, black, line width=0.25mm] (0, -1.25) to (b1);
\draw[-{Latex[length=1.85mm]}, black, line width=0.25mm] (q) to (5, -1.25);
\draw[-{Latex[length=1.85mm]}, black, line width=0.25mm] (b2) to (q);

\node[inner sep=0pt,align=center, scale=0.9, rotate=0, opacity=1] (obs) at (0.03,1)
{
$b$
};
\node[inner sep=0pt,align=center, scale=0.9, rotate=0, opacity=1] (obs) at (5.03,1)
{
$b^{\prime}$
};

\node[inner sep=0pt,align=center, scale=0.9, rotate=0, opacity=1] (obs) at (5,-0.2)
{
$q$
};

\node[inner sep=0pt,align=center, scale=0.72, rotate=0, opacity=1] (obs) at (5.56,-0.2)
{
$Q$
};
\node[inner sep=0pt,align=center, scale=0.72, rotate=0, opacity=1] (obs) at (3.85,-0.2)
{
Feature belief
};

\node[inner sep=0pt,align=center, scale=0.9, rotate=0, opacity=1] (obs) at (0,-1.5)
{
$\tilde{q}$
};
\node[inner sep=0pt,align=center, scale=0.9, rotate=0, opacity=1] (obs) at (5,-1.5)
{
$\tilde{q}^{\prime}$
};

\node[inner sep=0pt,align=center, scale=0.72, rotate=0, opacity=1, color=Red] (obs) at (2.5,-1.7)
{
Representative feature beliefs
};
\node[inner sep=0pt,align=center, scale=0.72, rotate=0, opacity=1, color=Red] (obs) at (0,-1.95)
{
$\tilde{Q}$
};
\node[inner sep=0pt,align=center, scale=0.72, rotate=0, opacity=1, color=Red] (obs) at (5,-1.95)
{
$\tilde{Q}$
};
\node[inner sep=0pt,align=center, scale=0.72, rotate=0, opacity=1] (obs) at (0,1.48)
{
$\tilde{B}$
};
\node[inner sep=0pt,align=center, scale=0.72, rotate=0, opacity=1] (obs) at (5,1.48)
{
$B$
};
\node[inner sep=0pt,align=center, scale=0.72, rotate=0, opacity=1] (obs) at (2.5,1.18)
{
Original beliefs
};

\node[inner sep=0pt,align=center, scale=0.72, rotate=0, opacity=1] (obs) at (2.5,0.78)
{
$b^{\prime}=F(b,u,z),\;\hat{g}(b,u)$
};

\node[inner sep=0pt,align=center, scale=0.75, rotate=0, opacity=1] (obs) at (0.45,-0.5)
{
$D(\tilde{q})$
};

\node[inner sep=0pt,align=center, scale=0.75, rotate=0, opacity=1] (obs) at (0.8,-0.12)
{
\textcolor{blue}{Cost $-V(b)$}
};

\node[inner sep=0pt,align=center, scale=0.75, rotate=0, opacity=1] (obs) at (3.65,0.45)
{
\textcolor{blue}{Cost $V(b^{\prime})$}
};

\node[inner sep=0pt,align=center, scale=0.75, rotate=0, opacity=1] (obs) at (4.65,0.45)
{
$\Phi(b^{\prime})$
};
\node[inner sep=0pt,align=center, scale=0.75, rotate=0, opacity=1] (obs) at (4.75,-0.85)
{
$\psi_{q\tilde{q}^{\prime}}$
};
\end{tikzpicture}
};

\end{tikzpicture}
  }
  \caption{The controlled dynamic belief system constructed through our feature-based belief aggregation method for the modified POMDP with the bias function $V$; cf.~Eq.~\eqref{eq:biased_cost}.}\label{fig:aggregation_bias}
\end{figure}

Since $\tilde H$ is defined by simply modifying the stage costs in the operator $H$, it inherits the monotonicity and contractivity of $H$; see Prop.~\ref{prop:h_op}. Therefore, $\tilde H$ admits a unique fixed point $\tilde r^*=\tilde H\tilde r^*$ within $\Re^m$, which is the optimal cost of the aggregate problem with modified stage costs. Once $\tilde r^*$ is obtained, a cost function approximation $\tilde J$ of the original POMDP can be computed as 
\begin{equation}
\label{eq:j_tilde_bias}
\tilde J(b)=V(b)+\sum_{\tilde q\in\tilde Q}\psi_{\Phi(b) \tilde q}\tilde r^*_{\tilde q}.
\end{equation}
Moreover, from the definitions of $H$ and $\tilde H$, it can be seen that the VI algorithm and its asynchronous variant described in Section~\ref{sec:computation} can be modified accordingly for computing $\tilde r^*$. Compared with the standard aggregation approach, these algorithms do not involve excessive additional computations when solving for $\tilde r^*$. The only additional computation is evaluating $V$ at the beliefs $b$ that are also computed in the standard aggregation. Therefore, the computational savings discussed in Section~\ref{sec:computation}, when compared with aggregation methods that operate directly on beliefs, remain valid in the biased aggregation approach.

We illustrate the biased aggregation method by applying it to our running example.

\begin{example}\label{ex:running_bias}
We apply biased aggregation to the problem instance described in Example \ref{ex:running_lower_bound} and compare it with standard aggregation. In particular, we consider the treasure hunting problem introduced in Example~\ref{ex:treasure} with a single search site. The representative feature belief space $\tilde Q$ is constructed using the method described in Example~\ref{ex:numerical_ex_1} with varying discretization resolutions. The belief aggregation probabilities $\psi_{q\tilde q}$ are computed to satisfy the condition given in Eq.~\eqref{eq:con_hull_2}; cf. Example~\ref{ex:running_lower_bound}. Additional computational details are provided in Appendix~\ref{appendix:setup}. We define the bias function $V$ to be the cost function approximation $\Tilde{J}$ computed according to Eq.~\eqref{eq:j_tilde} using the unbiased aggregation method described in Example~\ref{ex:running_lower_bound} with discretization resolution $\rho=3$; cf. Fig.~\ref{fig:treasure_bias}. 

\begin{figure}[ht!]
  \centering
  \scalebox{0.75}{
    \input{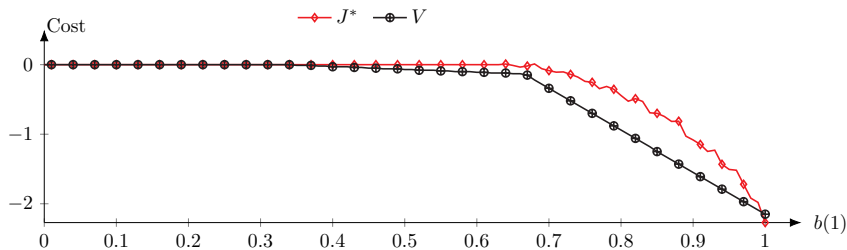}
  }  \caption{Comparison between the optimal cost function $J^{*}$ [cf.~Eq.~\eqref{eq:optimal_cost}] and the given cost function approximation $V$. In particular, the function $V$ is computed by using the standard aggregation approach with representative feature beliefs and the belief aggregation probabilities described in Example~\ref{ex:running_lower_bound} with discretization resolution $\rho=3$.}\label{fig:treasure_bias}
\end{figure}

In Fig.~\ref{fig:treasure_3} we compare the cost function approximation $\Tilde{J}$ for the biased and the unbiased case with varying discretization resolutions $\rho$ [cf.~Eq.~\eqref{eq:aggregate_belief_space}]. We observe that the function $V$ biases the values of the cost function approximation $\Tilde{J}$ to their correct levels. This bias allows us to obtain accurate cost function approximations with fewer representative feature beliefs than the unbiased case. In fact, we obtain a near-optimal approximation with only $m=5$ representative feature beliefs when using biased aggregation.
\end{example}

\begin{figure}[ht!]
  \centering
  \scalebox{0.75}{
    \input{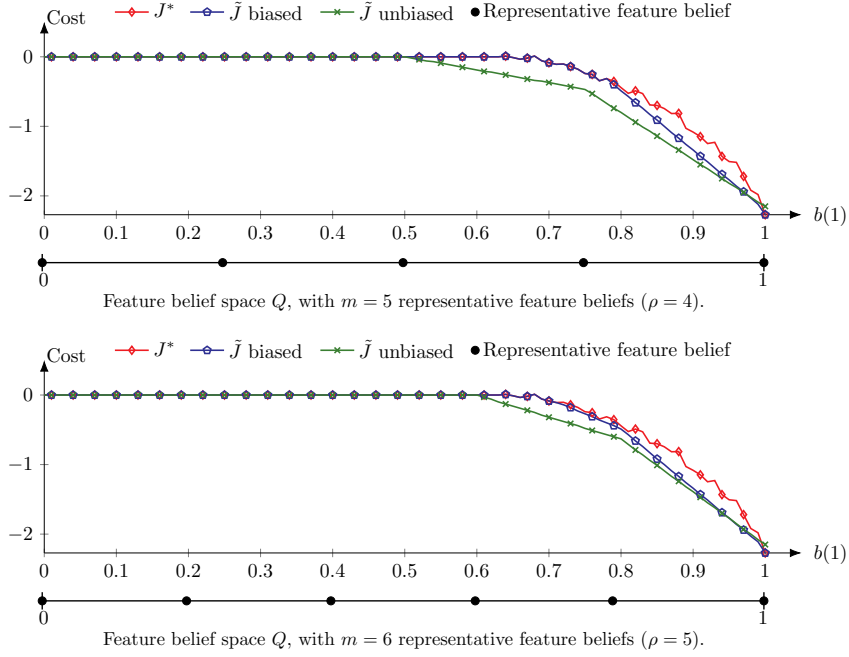}
  }  \caption{Comparison between the optimal cost function $J^{*}$ [cf.~Eq.~\eqref{eq:optimal_cost}] and the cost function approximations $\Tilde{J}$ with and without the bias function $V$ for Example~\ref{ex:running_bias}; cf. Eqs.~\eqref{eq:biased_cost} and \eqref{eq:j_tilde}. The belief aggregation probabilities $\psi_{q\tilde{q}}$ satisfy Eq.~\eqref{eq:con_hull_2}. The horizontal axes indicate the belief that the site has a treasure, and the vertical axes show the corresponding expected cost. The two plots here show the results where $\rho$ equals $4$ and $5$, respectively. This leads to a different number $m$ of representative feature beliefs, as indicated below the horizontal axes; cf.~Eq.~\eqref{eq:aggregate_belief_space}. In both cases, biased aggregation obtains a better approximation compared with standard aggregation.}\label{fig:treasure_3}
\end{figure}

\subsection{The Approximation Error}
Let us now discuss briefly the approximation error of $\tilde J$; cf. Eq.~\eqref{eq:j_tilde_bias}. Since biased aggregation is standard aggregation applied to a modified POMDP, Props.~\ref{prop:unique_beliefs}, \ref{prop:consistency}, and \ref{prop:j_tilde_b_tilde} readily apply to biased aggregation without modification. In addition, the counterparts of Props.~\ref{prop:error_bound} and \ref{prop:lower_bound} also hold by replacing $J^*$ with $J^*-V$, in view of the relation between $J^*$ and $\tilde V$ given in Eq.~\eqref{eq:optimal_tilde_v}. Let us state without proofs these two results. 

We first state the counterpart of Prop.~\ref{prop:error_bound} applied to the biased aggregation. For convenience, we recall that for every $\tilde q\in \tilde Q$, its footprint set $S_{\tilde q}$ is defined as
\begin{equation}
    \label{eq:s_def_recall}
    S_{\tilde q}=\{b\in B\,|\,\psi_{\Phi(b)\tilde q}>0\}.
\end{equation}
We have the following result.
\begin{proposition}\label{prop:error_bound_bias}
    The cost function approximation $\tilde J$ defined in Eq.~\eqref{eq:j_tilde_bias} and the optimal cost function $J^*$ satisfy
$$|\tilde J(b)-J^*(b)|\leq \frac{\epsilon}{1-\alpha},\qquad \hbox{for all }b\in B,$$
    where $\epsilon$ is a finite constant defined by
    \begin{equation}
        \label{eq:ep_def_bias}
        \epsilon=\max_{\tilde q \in \Tilde{Q}}\sup_{b,b'\in S_{\tilde q}}|J^*(b)-V(b)-J^*(b')+V(b')|.
    \end{equation}
\end{proposition}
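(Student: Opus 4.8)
The plan is to exploit the observation, already recorded in the text preceding the statement, that biased aggregation is nothing but \emph{standard} aggregation applied to the modified POMDP whose stage cost is given by Eq.~\eqref{eq:biased_cost}, and then to invoke Prop.~\ref{prop:error_bound} for that modified problem. The first step is to recall from Eq.~\eqref{eq:optimal_tilde_v} that the optimal cost function of the modified POMDP equals $J^*-V$; I will write $\tilde V=J^*-V$. Since the biasing alters only the stage costs and leaves the disaggregation, aggregation, and belief aggregation probabilities untouched, the mappings $D$ and $\Phi$, the set $\tilde B$, and in particular the footprint sets $S_{\tilde q}$ of Eq.~\eqref{eq:s_def_recall} are identical to those of the original problem. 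Consequently the relation $D(\tilde q)\in S_{\tilde q}$ [cf.~Eq.~\eqref{eq:s_r_b_tilde}], on which the proof of Prop.~\ref{prop:error_bound} crucially depends, carries over unchanged.

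Second, I would check that all hypotheses of Prop.~\ref{prop:error_bound} hold for the modified problem. The operator $\tilde H$ of Eq.~\eqref{eq:tilde_h_oprator} inherits monotonicity and contractivity from $H$ (as already noted), $\tilde V$ satisfies the Bellman equation of the modified POMDP by construction, and $\tilde V=J^*-V$ is bounded because both $J^*$ and $V$ are bounded. Hence the scalar obtained by applying Eq.~\eqref{eq:ep_def} with $\tilde V$ in place of $J^*$ is finite, and it is precisely
\[
\max_{\tilde q\in\tilde Q}\sup_{b,b'\in S_{\tilde q}}\bigl|\tilde V(b)-\tilde V(b')\bigr|
=\max_{\tilde q\in\tilde Q}\sup_{b,b'\in S_{\tilde q}}\bigl|J^*(b)-V(b)-J^*(b')+V(b')\bigr|,
\]
which matches the definition of $\epsilon$ in Eq.~\eqref{eq:ep_def_bias}. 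Applying Prop.~\ref{prop:error_bound} verbatim to the modified POMDP then yields a bound of $\epsilon/(1-\alpha)$ between $\tilde V$ and the standard-aggregation approximation $\sum_{\tilde q\in\tilde Q}\psi_{\Phi(b)\tilde q}\tilde r^*_{\tilde q}$ of the modified problem, where $\tilde r^*$ is the fixed point of $\tilde H$.

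Third, I would translate this bound back to the original problem. Using Eq.~\eqref{eq:j_tilde_bias} together with $J^*=\tilde V+V$, the $V(b)$ terms cancel:
\[
\tilde J(b)-J^*(b)
=\Bigl(V(b)+\sum_{\tilde q\in\tilde Q}\psi_{\Phi(b)\tilde q}\tilde r^*_{\tilde q}\Bigr)-\bigl(\tilde V(b)+V(b)\bigr)
=\sum_{\tilde q\in\tilde Q}\psi_{\Phi(b)\tilde q}\tilde r^*_{\tilde q}-\tilde V(b),
\]
so the inequality just established gives $|\tilde J(b)-J^*(b)|\le \epsilon/(1-\alpha)$ for all $b\in B$, as claimed. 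I do not anticipate a genuine obstacle, since the hard analytic work was done in Prop.~\ref{prop:error_bound} and the biasing leaves the entire aggregation architecture intact; the only points requiring care are the bookkeeping identity $\tilde V=J^*-V$ and the cancellation above, which is what converts a bound on $\tilde V$ into the stated bound on $J^*$, and the observation that finiteness of $\epsilon$ here hinges on the boundedness of $V$, not merely that of $J^*$.
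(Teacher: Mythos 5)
Your proof is correct and follows exactly the route the paper intends: the paper states Prop.~\ref{prop:error_bound_bias} without proof, remarking only that it ``holds by replacing $J^*$ with $J^*-V$'' via Eq.~\eqref{eq:optimal_tilde_v}, which is precisely the reduction to Prop.~\ref{prop:error_bound} for the modified POMDP that you carry out. Your elaboration supplies the right supporting checks (unchanged footprint sets, monotonicity/contractivity of $\tilde H$, boundedness of $V$ for finiteness of $\epsilon$, and the cancellation of $V(b)$ in Eq.~\eqref{eq:j_tilde_bias}), so nothing is missing.
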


Finally, we state the counterpart of the error bound of  Prop.~\ref{prop:lower_bound} for biased aggregation, which requires that the convex hull of $\tilde B$ equals $B$, and the convex coefficient associated with $b\in B$ and $b'\in \tilde B$ is the belief aggregation probability $\psi_{\Phi(b)\Phi(b')}$; i.e.,
\begin{equation}
\label{eq:con_hull_2_recall}
    b=\sum_{b'\in \tilde B}\psi_{\Phi(b)\Phi(b')}b',\qquad \hbox{for all }b\in B.
\end{equation}
Under this condition, we have the following result.

\begin{proposition}\label{prop:lower_bound_bias}
    Let the disaggregation, aggregation, and belief aggregation probabilities satisfy Eq.~\eqref{eq:con_hull_2_recall}. The cost function approximation $\tilde J$ defined in Eq.~\eqref{eq:j_tilde_bias} and the optimal cost function $J^*$ satisfy
    $$0\leq J^*(b)-\tilde J(b)\leq \frac{\epsilon}{1-\alpha},\qquad \hbox{for all }b\in B,$$
    where $\epsilon$ is given by Eq.~\eqref{eq:ep_def_bias}. In addition, if $\tilde Q$ contains $n$ representative feature beliefs, $\tilde J-V$ is linear.
\end{proposition}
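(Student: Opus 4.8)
The plan is to reduce the entire statement to the standard-aggregation results already established, exploiting the fact that biased aggregation is nothing but standard aggregation applied to the modified POMDP. By Eq.~\eqref{eq:optimal_tilde_v} the optimal cost function of that modified problem is $\tilde V=J^*-V$, and by the remark following Eq.~\eqref{eq:tilde_h_oprator} the aggregate operator $\tilde H$ inherits the monotonicity and contractivity of $H$ (Prop.~\ref{prop:h_op}). Since the maps $D$, $\Phi$, the belief-aggregation probabilities $\psi$, and hence the sets $\tilde B$ and the footprint sets $S_{\tilde q}$ are untouched by the cost modification, Props.~\ref{prop:unique_beliefs}--\ref{prop:j_tilde_b_tilde} carry over verbatim; in particular Eq.~\eqref{eq:con_hull_2_recall} is again equivalent to $b=\sum_{\tilde q\in\tilde Q}\psi_{\Phi(b)\tilde q}D(\tilde q)$ for all $b\in B$, exactly as in Eq.~\eqref{eq:con_hull_1}. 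Writing the biased approximation of Eq.~\eqref{eq:j_tilde_bias} as $\tilde J-V=\sum_{\tilde q}\psi_{\Phi(\cdot)\tilde q}\tilde r^*_{\tilde q}$, I would recognize $\tilde J-V$ as the standard-aggregation approximation of the modified problem, so that the whole statement follows by rerunning the proof of Prop.~\ref{prop:lower_bound} with $J^*$ replaced throughout by $\tilde V=J^*-V$. Under this substitution the scalar $\epsilon$ becomes the maximum variation of $J^*-V$ over the footprint sets, which is precisely Eq.~\eqref{eq:ep_def_bias}.

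Concretely, I would mirror Prop.~\ref{prop:lower_bound} step by step. Setting $\hat r_{\tilde q}=\tilde V(D(\tilde q))$, I would use the Bellman equation satisfied by $\tilde V$ together with the concavity inequality $\tilde V\big(F(b,u,z)\big)\ge\sum_{\tilde q'}\psi_{G(\tilde q,u,z)\tilde q'}\tilde V(D(\tilde q'))$ to obtain $\tilde H\hat r\ge\hat r$, just as Eq.~\eqref{eq:h_hat_monotone} is derived for $H$. Monotonicity and contractivity of $\tilde H$ then give $\tilde r^*\le\hat r$, and applying concavity of $\tilde V$ once more through Eq.~\eqref{eq:con_hull_1} yields $\tilde J(b)=V(b)+\sum_{\tilde q}\psi_{\Phi(b)\tilde q}\tilde r^*_{\tilde q}\le V(b)+\sum_{\tilde q}\psi_{\Phi(b)\tilde q}\tilde V(D(\tilde q))\le V(b)+\tilde V(b)=J^*(b)$, i.e.\ $\tilde J\le J^*$. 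The upper bound $J^*-\tilde J\le\epsilon/(1-\alpha)$ would then be supplied directly by Prop.~\ref{prop:error_bound_bias}, whose proof copies that of Prop.~\ref{prop:error_bound} and uses no concavity at all (only boundedness of $J^*-V$, the monotonicity/contractivity of $\tilde H$, and $D(\tilde q)\in S_{\tilde q}$).

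The hard part will be the concavity hypothesis. Prop.~\ref{prop:lower_bound} crucially invokes concavity of the optimal cost, and in the reduction this becomes concavity of $\tilde V=J^*-V$. Whereas $J^*$ is concave for every POMDP because the original stage cost $\hat g(b,u)$ is linear in $b$, the modified stage cost in Eq.~\eqref{eq:biased_cost} contains the nonlinear terms $-V(b)$ and $V\big(F(b,u,z)\big)$, so $\tilde V$ need not be concave for an arbitrary bounded bias $V$. Thus the lower-bound half of the statement genuinely hinges on ensuring that $J^*-V$ is concave, which holds for instance when $V$ is affine in $b$ and must otherwise be verified for the chosen $V$. I expect this to be the main obstacle; by contrast, the error-bound counterpart and the structural Props.~\ref{prop:unique_beliefs}--\ref{prop:j_tilde_b_tilde} transfer with no extra assumption.

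Finally, for the linearity claim I would transfer the closing paragraph of the proof of Prop.~\ref{prop:lower_bound}: when $\tilde Q$ has $n$ elements, Eq.~\eqref{eq:con_hull_2_recall} forces $\tilde B$ to be linearly independent and to span $B$, so for any $b=\gamma b_1+(1-\gamma)b_2$ the weights satisfy $\psi_{\Phi(b)\Phi(b')}=\gamma\psi_{\Phi(b_1)\Phi(b')}+(1-\gamma)\psi_{\Phi(b_2)\Phi(b')}$ for all $b'\in\tilde B$. Consequently $\tilde J-V=\sum_{\tilde q}\psi_{\Phi(\cdot)\tilde q}\tilde r^*_{\tilde q}$ is linear, which is exactly the assertion that $\tilde J-V$ is linear.
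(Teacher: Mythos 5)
Your route is precisely the one the paper intends: Prop.~\ref{prop:lower_bound_bias} is stated without proof, with only the justification that the argument of Prop.~\ref{prop:lower_bound} ``also holds by replacing $J^*$ with $J^*-V$,'' i.e., by viewing biased aggregation as standard aggregation applied to the modified POMDP whose optimal cost is $\tilde V=J^*-V$ [cf.~Eq.~\eqref{eq:optimal_tilde_v}]. Your reduction --- reusing $D$, $\Phi$, $\psi$ and the footprint sets unchanged, noting that Eq.~\eqref{eq:con_hull_2_recall} is again equivalent to the analogue of Eq.~\eqref{eq:con_hull_1} via Prop.~\ref{prop:consistency}, setting $\hat r_{\tilde q}=\tilde V(D(\tilde q))$, invoking the monotonicity and contractivity of $\tilde H$, obtaining the upper bound from Prop.~\ref{prop:error_bound_bias}, and transferring the linear-independence argument for the linearity of $\tilde J-V$ --- is a faithful execution of that substitution.

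The obstruction you single out is genuine, and it is the one substantive point where the substitution does not go through automatically. The proof of Prop.~\ref{prop:lower_bound} rests on concavity of the optimal cost function, which for the original POMDP is a structural fact because the stage cost $\hat g(b,u)$ is linear in $b$. For the modified problem the stage cost \eqref{eq:biased_cost} contains the terms $-V(b)$ and $V\big(F(b,u,z)\big)$, so $\tilde V=J^*-V$ is concave only if $V$ contributes no convexity defect; for a general bounded $V$ the key inequality $\tilde V\big(F(b,u,z)\big)\ge\sum_{\tilde q'}\psi_{G(\tilde q,u,z)\tilde q'}\tilde V\big(D(\tilde q')\big)$ can fail, and with it both $\tilde H\hat r\ge\hat r$ and the conclusion $\tilde J\le J^*$. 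Hence the lower-bound half of Prop.~\ref{prop:lower_bound_bias} as stated requires an additional hypothesis that the paper leaves implicit (e.g., $V$ affine or convex, or directly the concavity of $J^*-V$, or at minimum the displayed inequality at the relevant beliefs); only the two-sided bound of Prop.~\ref{prop:error_bound_bias} and the linearity claim transfer unconditionally. Your proposal is correct modulo this extra hypothesis, and you are right to identify it as the crux rather than a routine step.
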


\section{Conclusion}
In this paper, we have introduced an aggregation-based method for approximating the optimal cost function of finite-state POMDP with infinite horizon and discounted costs. Our method leverages the classical belief-space formulation by aggregating unobservable states into feature states, enabling the use of standard DP techniques to solve the resulting MDP. This approach reduces the necessary computation and enhances solution tractability. Through interpolation based on solutions to these MDP, we have obtained effective approximations of the optimal cost functions in the original POMDP. We have provided theoretical insights by establishing bounds on the approximation error, demonstrating the conditions under which our approximations form valid lower bounds for the optimal costs. Our method thus offers both practical computational advantages and meaningful theoretical guarantees in solving large-scale POMDP. We have also provided an extension of our method that biases the approximation towards the true optimal cost by incorporating a given cost function approximation.

\appendix
\section*{Appendix}

\section{Experimental Setup}\label{appendix:setup}
We evaluate the discounted total cost for each POMDP based on $1000$ simulation trials involving $100$ stages. We run all simulations on an M2-ultra processor with the official POMDP parameters published in the literature.\footnote{For the CAGE-2 POMDP, we use commit 9421c8e and the b-line attacker.} For POMCP, PPO, and PPG, we use the parameters listed in \cite{tifs_25_HLALB}. For all other methods, we use the default hyperparameters in the available open-source implementation. In particular, for those methods where an open-source implementation is provided by the authors of the original paper, we use that implementation. For those methods where no open-source implementation is provided by the authors, we use the implementation and hyperparameters from the software library \cite{egorov2017pomdps}. For Lovejoy's method \cite{pomdp_lovejoy}, we use triangulation with the following granularities: $2$ for RS (4,4), and $1$ for the other POMDPs.

\subsection{Treasure Hunting}
\paragraph{Problem data} The treasure values, search costs, and detection probabilities used in our running examples are as follows:
\begin{align*}
v &= (6.48, 5.22, 5.43, 7.58, 3.09, 3.76, 8.01, 8.53, 7.86, 8.20),\\
c &= (0.55, 0.86, 0.58, 0.86, 0.50, 0.98, 0.85, 0.74, 0.58, 0.84),\\
\beta &= (0.13, 0.78, 0.11, 0.68, 0.11, 0.10, 0.30, 0.73, 0.54, 0.45).
\end{align*}
These values correspond to a problem instance with $10$ sites. For examples involving fewer than $10$ sites ($N < 10$), we use the first $N$ entries of the vectors $v$, $c$, and $\beta$ to define the respective treasure values, costs, and detection probabilities. The discount factor $\alpha$ is $0.99$.

\paragraph{Feature-based belief aggregation} In Examples~\ref{ex:running_lower_bound} and \ref{ex:running_bias}, we require that the aggregation probabilities $\psi_{q\tilde{q}}$ satisfy the condition stated in Eq.~\eqref{eq:con_hull_2} [or equivalently, Eq.~\eqref{eq:con_hull_1}]. We design the belief aggregation probabilities such that each belief $b$ is represented by the convex hull of the ``nearest'' beliefs in $\Tilde{B}$ based on the maximum norm; cf.~Eq.~\eqref{eq:btilde}.

\paragraph{Off-line and on-line computation} The aggregate problem is solved by VI given by Eq.~\eqref{eq:vi_h}. In both off-line and on-line computation, the belief estimator $F$ is computed exactly according to its definition; cf. Eq.~\eqref{eq:b_estimator_def}.

\subsection{RS and CAGE-2}

\paragraph{Problem data} The detailed problem data can be found in \cite{code}.

\paragraph{Feature-based belief aggregation} For RS (4,4)-RS (7,8), we associate each state $i$ with a unique feature state $x$; i.e., $X=\mathcal{F}$. For RS (10,10), we define a feature state $x$ as a $3$-dimensional vector representing a node position in a $3 \times 3$ grid. We use $\phi_{ix}$ to aggregate the $10\times 10$ grid positions of state $i$ into the positions of the $3 \times 3$ grid. This results in a feature space of cardinality $9,216$. In addition, we define $d_{xi}$ to be uniform across all states with matching features. 

For CAGE-2, we define as a feature state $x$ a vector with three components: \textit{attacker-state}, \textit{attacker-target}, and \textit{decoy-state}. The first two components represent the attacker's current location in the network and its next target node. The last component is the configuration of the deployed decoy services. We define $\phi_{ix}$ to map each state $i$ into $x$ uniformly. This results in a feature space of cardinality $427,500$. In addition, we define $d_{xi}$ to be uniform across all states with matching features.

For all problems, we define the set of representative feature beliefs $\Tilde{Q}$ via uniform discretization using Eq.~\eqref{eq:aggregate_belief_space}. We use discretization resolution $\rho=3$ for RS (4,4); $\rho=2$ for RS (5,5) and RS (5,7); and $\rho=1$ for RS (7,8), RS (10,10), and CAGE-2. We associate these representative feature beliefs with feature beliefs $q\in Q$ via the nearest-neighbor mapping in Eq.~\eqref{eq:nearest_neighbor}.

\paragraph{Off-line and on-line computation} The aggregate problems are all solved by VI; cf. Eq.~\eqref{eq:vi_h}. For RS (4,4)-RS (7,8), the belief estimator $F$ is calculated exactly according to its definition [cf.~Eq.~\eqref{eq:b_estimator_def}] in both off-line and on-line computation. 

For RS (10,10) and CAGE-2, the belief estimator $F$ involved in both off-line and on-line computation is approximated by a rejection-sampling particle filter with $100$ particles \cite{particle_filter_survey}.

\bibliographystyle{alpha}
\bibliography{ref}
\end{document}